\newtheorem{theorem}{Theorem}
\newtheorem{proposition}{Proposition}
\newtheorem{lemma}{Lemma}
\newtheorem{corollary}{Corollary}
\DeclareMathAlphabet{\mathbit}{OML}{cmr}{bx}{it}
\DeclareMathAlphabet{\mathsf}{OT1}{cmss}{m}{n}
\DeclareMathAlphabet{\mathTXf}{OT1}{cmss}{bx}{it}
\DeclareMathOperator{\diag}{diag}
\DeclareMathOperator*{\argmin}{argmin}
\DeclareMathOperator*{\argmax}{argmax}
\DeclareMathOperator{\DoF}{DoF}
\newcommand{\I}{\mathbf{I}} %identity matrix
\newcommand{\norm}[1]{\lVert{#1}\rVert}
\newcommand{\Fro}{{\text{F}}}
\newcommand{\E}{{\mathrm{E}}}
\newcommand{\trans}{{\text{T}}} 
\newcommand{\He}{{{\mathrm{H}}}}
\newcommand{\BC}{{\text{BC}}}  
\begin{document} 
\title{Degrees of Freedom of the Network MIMO Channel With Distributed CSI}
\author{Paul de Kerret and David Gesbert\\Mobile Communication Department, Eurecom\\
 2229 route des Cr\^etes, 06560 Sophia Antipolis, France\\\{dekerret,gesbert\}@eurecom.fr}

\maketitle

\begin{abstract}%, although being number of DoFs optimal in the classical MIMO channel,  % Specifically, we show that ZF is severely limited in the DCSI case. 
In this work\footnote{This work has been performed in the framework of the European research project ARTIST4G, which is partly funded by the European Union under its FP7 ICT Objective 1.1 - The Network of the Future.\\ Preliminary results have been published in ISIT 2011, St. Petersburg.}, we discuss the joint precoding with finite rate feedback in the so-called network MIMO where the TXs share the knowledge of the data symbols to be transmitted. We introduce a distributed channel state information (DCSI) model where each TX has its own local estimate of the overall multi-user MIMO channel and must make a precoding decision solely based on the available local CSI. We refer to this channel as the \emph{DCSI-MIMO channel} and the precoding problem as \emph{distributed precoding}. We extend to the DCSI setting the work from Jindal in \cite{Jindal2006} for the conventional MIMO Broadcast Channel (BC) in which the number of Degrees of Freedom (DoFs) achieved by Zero Forcing (ZF) was derived as a function of the scaling in the logarithm of the Signal-to-Noise Ratio (SNR) of the number of quantizing bits. Particularly, we show the seemingly pessimistic result that the number of DoFs \emph{at each user} is limited by the worst CSI across all users and across all TXs. This is in contrast to the conventional MIMO BC where the number of DoFs at one user is solely dependent on the quality of the estimation of his own feedback. Consequently, we provide precoding schemes improving on the achieved number of DoFs. For the two-user case, the derived novel precoder achieves a number of DoFs	limited by the best CSI accuracy across the TXs instead of the worst with conventional ZF. We also advocate the use of {\em hierarchical quantization} of the CSI, for which we show that considerable gains are possible. Finally, we use the previous analysis to derive the DoFs optimal allocation of the feedback bits to the various TXs under a constraint on the size of the aggregate feedback in the network, in the case where conventional ZF is used.
\end{abstract}
\IEEEpeerreviewmaketitle
\section{Introduction}

Network MIMO channel, or multicell MIMO channels, whereby multiple interfering transmitters (TXs) share user messages and allow for joint precoding (downlink), are currently considered for next generation wireless networks \cite{Karakayali2006,Somekh2006,Gesbert2010}. With perfect message and channel state information (CSI) sharing, the different TXs can be seen as a unique virtual multiple-antenna array serving all receivers (RXs), in a multiple-antenna broadcast channel (BC) fashion. 

Although the sharing of user data symbols can be made possible in certain situations, such as cellular networks with a pre-existing backbone infrastructure where user packets can be routed to several base stations simultaneously, the obtaining of accurate CSI at the TXs is made difficult due to the finite quantizing effects over the feedback channels and the limited capability of signaling between TXs to exchange the CSI. In addition, CSI exchange necessarily introduces further degradation due to latency effects over inter-TX links\cite{ARTIST4GD42}. 

This situation gives rise to an interesting information theoretic framework whereby a MIMO broadcast channel is formed (due to the assumed perfect user message sharing among the various TXs), yet the individual TXs composing the distributed multiple-antenna array have access to individual CSI estimates, possibly different from each other, and possibly of different quality (statistically). In this paper, we refer to this channel as the \emph{distributed CSI (DCSI)-MIMO channel}. We emphasize the difference between this CSI model and the previously studied CSI models such as the so-called imperfect limited CSI~\cite{Jindal2006} or the delayed CSI model \cite{Maddah-Ali2010} where the TX antennas are assumed to share ideally the \emph{same} imperfect channel knowledge. 

Note that the sharing of the symbols via finite capacity links between the cooperating TXs has been discussed in recent works\cite{Marsch2008,Simeone2009,Shamai2011,Zakhour2011}. This problem represents in itself a challenging topic, and we consider in the sequel perfect sharing of the users symbols.

For the conventional MIMO BC, the impact of limited feedback \cite{Jindal2006,Caire2007,Au-Yeung2007,Samardzija2006,Ding2007,Yoo2007,Song2008} and the derivation of robust solutions \cite{Shenouda2006,Vucic2009} have been investigated, with later extensions to the multicell coordinated beamforming case \cite{Bjornson2010} and the multicell MIMO case \cite{Kobayashi2009,Tajer2011a,Huh2012}.

More recently, the optimization of the feedback allocations to the different users has been the focus of a large interest. It has been studied in conventional MIMO BCs\cite{Sohn2012}, in multicell settings with coordinated beamforming\cite{Zhang2010,Ozbek2010,Ho2011,Bhagavatula2011a}, in multicell MIMO networks\cite{Saleh2010,Tajer2011b,Khoshnevis2012} and in interfering BCs \cite{Lee2011,Bhagavatula2011b}.

Yet, as mentioned before, these papers always consider \emph{perfect sharing} between the TXs precoding jointly the signal. In contrast, we consider here that each TX has its own imperfect estimation of the multi-user channel but all the TXs jointly precode the user's data symbols. This gives rise to a very different transmission setting which can be seen as a \emph{team decision problem}\cite{Marschak1972}. Indeed, the precoder must cope not only with the inaccuracy of the CSI due to the limited feedback channel capacity but also with the distributiveness of the CSI and the precoding. Each TX emits one component of the transmit signal vector which it computes based on its \emph{own} channel estimate. As is pointed out in this work, the \emph{discrepancies} between the channel estimates obtained by the different TXs are particularly detrimental to the channel capacity, and even to the Degrees of Freedom (DoFs), if not accounted for in the precoding design.

The DCSI-MIMO scenario has meaningful applications to network MIMO schemes in cellular networks or MIMO based multi-TX cooperation in general. It was first studied in \cite{Zakhour2010a}, and a tractable discrete optimization at finite SNR was derived. However, the approach in \cite{Zakhour2010a} does not lend itself to a more general performance analysis, thus giving limited insight for an improved design.

In this paper, we consider the performance of precoding schemes over the DCSI-MIMO channel from a DoFs perspective. The number of DoFs represents the slope with which the rate increases with the SNR in the high SNR regime. Even though it is based on the high SNR analysis, it has been used widely used to gain insight into the wireless transmission thanks to its analytical tractability\cite{Cadambe2008,Maddah-Ali2010}. By essence, the DoFs analysis is not impacted by the unequal pathloss, which can put in question its practical signification in some settings. When all the wireless links present the same pathloss as it is the case in this work, this does not represent an issue. To extend the DoFs analysis to settings with large pathloss differences, it is then more adequate to use the notion of \emph{generalized DoFs} \cite{Etkin2008} which takes the pathloss differences into account. We also assume that our system model is isolated from the rest of the world. In a practical scenario, it follows from the impossibility to serve jointly all the users that there is inevitably interference coming from outside the cooperation area. This implies that the number of DoFs is always zero \cite{Lozano2012a}. The number of DoFs derived inside the cooperation cluster is then representative solely up to an SNR at which point the interference from outside the cooperation area leads to the saturation of the rate.

Our work generalizes to the case of distributed CSI the finite rate feedback study by Jindal~\cite{Jindal2006} for the conventional multiple-antenna BC. In \cite{Jindal2006}, the author derives the number of DoFs as a function of the number of feedback (quantizing) bits exploited by each RX and shows that the number of bits must grow with the logarithm of the SNR in order to preserve the full number of DoFs, using ZF precoding arguments. We also consider ZF schemes as they are known to achieve maximum number of DoFs in wide settings\footnote{Note that the selection of the set of users actually transmitting during one time slot is not considered in this work. In fact the formula for the number of DoFs provided in this work can be used to derive a set of transmitting TX achieving a good number of DoFs, i.e.,  to use a good combination of ZF precoding and time sharing.}. Particularly, a necessary and sufficient feedback of the CSI estimation error for achieving the maximum number of DoFs is derived in \cite{Caire2007} for the compound multiple-antenna BC. This condition is the same as the sufficient condition provided in \cite{Jindal2006}. Thus, no other precoding scheme can achieve the maximal number of DoFs with a lower feedback scaling. This confirms the efficiency of ZF in terms of number of DoFs. As a consequence, we aim in this paper at answering the fundamental questions \emph{"Does conventional ZF also perform well in the distributed MIMO setting?"}, and \emph{"How can we make it more robust in that setting?"}

Specifically, the main contributions read as follows. Let the number of bits quantizing the estimate at TX~$j$ of the normalized channel~$\tilde{\bm{h}}_i^{\He}$ of user~$i$ be $\alpha_i^{(j)}(K-1)\log_2(P)$ with $\alpha_i^{(j)}\in[0,1]$ and $K$ the number of users. Then, we show that in a block fading Rayleigh channel:
\begin{itemize}
\item The number of DoFs achieved at RX~$i$ with conventional ZF is equal to~$\min_{i,j\in\{1,\ldots,K\}}\alpha_i^{(j)}$. Hence, the worst accuracy across all the estimates limits the number of DoFs at each user. This is a pessimistic result and shows a different behavior compared to the conventional MIMO BC.
\item We provide a precoding scheme improving the number of DoFs. In the two-user case, the number of DoFs with the novel precoding scheme is limited by the best accuracy of the CSI across the two TXs instead of the lowest with conventional ZF.
\item To improve the number of DoFs achieved with more users, we introduce a concept of hierarchical quantization of the CSI and we show that this leads to a dramatic improvement of the number of DoFs.
\item Under a total feedback constraint and with ZF schemes, we derive the number of DoFs maximizing allocation of the feedback bits toward each TX.
\end{itemize}
Note that this paper serves to generalize preliminary results that were presented in~\cite{dekerret2011_ISIT}.

\emph{Notations:} We denote by $\Pi_{\mathbf{A}}(\bullet)$ and $\Pi_{\mathbf{A}}^{\perp}(\bullet)$ the orthogonal projectors over the subspace spanned by the matrix~$\mathbf{A}$ and over its orthogonal complement, respectively. $\bar{i}$ denotes the complementary indice of $i$ when only two users are considered, i.e., $\bar{i}=i\mod 2+1$. $\|\bullet\|_{\Fro}$ designates the Frobenius norm while $\mathcal{N}(\mu,\sigma^2)$ denotes the complex circularly symmetric Gaussian distribution with mean~$\mu$ and variance~$\sigma^2$. We also denote the $i$th element of a vector~$\bm{a}$ by $\{\bm{a}\}_i$ and the $(i,j)$th element of a matrix~$\mathbf{A}$ by $\{\mathbf{A}\}_{ij}$. Additionally, we use the notation~$\lesssim$ to denote a relation of order which holds true asymptotically. We also write~$f(x)=o(g(x))$ (resp.~$f(x)=O(g(x))$) to represent the fact that~$\lim_{x\rightarrow \infty} f(x)/g(x)=0$ (resp. $\lim_{x\rightarrow \infty} |f(x)|/|g(x)|\leq a$, with~$a>0$). We also write~$f(x)\sim g(x)$ to denote the fact that~$f(x)=g(x)+o(g(x))$.
\section{System Model}
\subsection{Multicell MIMO}
We consider a joint downlink transmission from $K$~TXs to $K$~RXs using linear precoding and single user decoding. For ease of exposition, the TXs and the RXs are equipped with only one antenna, but the principal elements of our approach could extend in principle to more antennas at the TXs. Similarly, we consider a Rayleigh fading scenario but the approach derived should be valid in many other fading scenarios. The transmission can be described as
\begin{equation}
\begin{bmatrix}
y_1\\
y_2\\
\vdots\\
y_K
\end{bmatrix}
=
\begin{bmatrix}
\bm{h}^{\He}_1\\
\bm{h}^{\He}_2\\
\vdots\\
\bm{h}^{\He}_K
\end{bmatrix}\begin{bmatrix}
x_1\\x_2\\\vdots\\
x_K
\end{bmatrix} 
+
\begin{bmatrix}
\eta_1\\
\eta_2\\
\vdots\\
\eta_K
\end{bmatrix}\\ 
\label{eq:SM_1}
\end{equation}
where $\bm{y}\triangleq[y_1,\ldots,y_K]^{\trans}\in\mathbb{C}^{K\times 1}$ contains the received signals at the RXs, the vector $\bm{x}\triangleq[x_1,\ldots,x_K]^{\trans}\in\mathbb{C}^{K\times 1}$ is defined such that $x_j$ is the signal transmitted by TX~$j$, and $\bm{\eta}\triangleq[\eta_1,\ldots,\eta_K]^{\trans}\in\mathbb{C}^{K\times 1}$ contains the noise realizations at the RXs and has its entries i.i.d. as~$\mathcal{N}(0,1)$.

The vector $\bm{h}^{\He}_i \in \mathbb{C}^{1\times K}$ is the channel from all TXs to the $i$-th RX and define the normalized channel to user~$i$ as $\tilde{\bm{h}}_i\triangleq\bm{h}_i/\norm{\bm{h}_i}$. We also define the multi-user channel matrix~$\mathbf{H}\triangleq [\bm{h}_1,\ldots,\bm{h}_K]^{\He}$ and its normalized counter-part~$\tilde{\mathbf{H}}\triangleq [\tilde{\bm{h}}_1,\ldots,\tilde{\bm{h}}_K]^{\He}$.

The channel is assumed to be block fading and the entries of the channel matrix $\mathbf{H}$ to be i.i.d. as~$\mathcal{N}(0,1)$, modeling a Rayleigh fading channel. The transmitted signal $\bm{x}$ is obtained from the vector of transmit symbols $\bm{s}\triangleq[s_1,\ldots,s_K]^{\trans}\in\mathbb{C}^{K\times 1}$ (whose entries are taken as i.i.d. $\mathcal{N}(0,1)$) as
\begin{equation}
\bm{x}=\mathbf{T}\bm{s}=
\begin{bmatrix}\bm{t}_1&\ldots& \bm{t}_K\end{bmatrix}
\begin{bmatrix}s_1\\\vdots\\s_K\end{bmatrix} 
\label{eq:SM_2}
\end{equation}
where $\mathbf{T}\in \mathbb{C}^{K\times K}$ is the multi-user precoding matrix and $\bm{t}_i \in \mathbb{C}^{K\times 1}$ is the beamforming vector used to transmit $s_i$. Even though a per-TX power constraint is the most relevant power constraint in the multicell setting, we consider a sum power constraint $\norm{\mathbf{T}}_{\Fro}^2=P$. We also assume for simplicity and symmetry that all data streams are allocated with an equal amount of power so that $\bm{t}_i=\sqrt{P/K}\bm{u}_i$ with $\norm{\bm{u}_i}^{2}=1$. These choices can be done without restricting the scope of this work because they do not have any impact on the number of  DoFs\footnote{Indeed, it is always possible to scale the total power used when considering the sum power constraint so as to fulfill the per-TX power constraint without impacting the number of DoFs. Similarly, optimally allocating the power does not change the number of DoFs.}.  
% Note that equal power allocation is also asymptotically optimal at high SNR in a MIMO BC with full CSI.
We will study the ergodic rate averaged over the random codebooks~$\mathcal{W}_i^{(j)}$ used for the CSI Random Vector Quantization (RVQ), as detailed in Subsection~\ref{se:system_model:CSI}. The ergodic rate for RX~$i$ reads then as
\begin{equation}
R_i(P)\triangleq\E_{\mathbf{H},\{\mathcal{W}_i^{(j)}\}_{i,j}}\left[\log_2\left(1+\frac{|\bm{h}_i^{\He}\bm{t}_i|^2}{1+\sum_{\ell=1,\ell\neq i}^K|\bm{h}_{i}^{\He}\bm{t}_\ell|^2}\right)\right].
\label{eq:SM_3}
\end{equation}
To achieve the maximal number of DoFs we aim at removing completely the interference at all the RXs, i.e., at having
\begin{equation}
\forall i\in\{1,\ldots,K\},\quad \sum_{\ell=1,\ell\neq i}^K |\bm{h}_i^{\He}\bm{t}_{\ell}|^2=0.
\label{eq:SM_4}
\end{equation}
From \eqref{eq:SM_4} and the equal power allocation, there is no coupling between the optimizations of the beamforming vectors~$\bm{t}_i$ which can then be carried out in parallel. The number of DoFs achieved \emph{at RX~$i$} is defined as
\begin{equation}
\DoF_i\!\triangleq\!\lim_{P\rightarrow \infty}\!\frac{R_i(P)}{\log_2(P)}\!.
\label{eq:SM_5}
\end{equation}
and the total number of DoFs is $\DoF\triangleq \sum_{i=1}^K\DoF_i$. From the above definition of the number of DoFs and definition~\eqref{eq:SM_3}, we can directly obtain that $\forall i\in\{1,\ldots,K\}$,
\begin{equation} 
\DoF_i=1-\lim_{P\rightarrow \infty}\E_{\mathbf{H},\{\mathcal{W}_i^{(j)}\}_{i,j}}\left[\frac{\log_2\left(\sum_{\ell\neq i} |\bm{h}_i^{\He}\bm{t}_{\ell}|^2\right)}{\log_2(P)}\right].
\label{eq:SM_6}
\end{equation} 

\subsection{Distributed CSI}\label{se:system_model:CSI}

\subsubsection{CSI Scaling Coefficients}\label{se:system_model:CSI:Scaling}
We assume a limited CSI setting where channel estimate inaccuracies are modeled using quantized feedback. Furthermore, a \emph{distributed} CSI model is defined here in the sense that each TX has its own individual estimate of the normalized channel $\tilde{\bm{h}}_i$ to RX~$i$. Moreover, the estimates for the different channel vectors $\tilde{\bm{h}}_i$ are also a priori of different qualities at each TX, i.e., quantized with codebooks of different sizes. We denote by $\tilde{\bm{h}}^{(j)}_i$ the estimate of the normalized channel vector~$\tilde{\bm{h}}_i$ acquired at TX~$j$. The quantized feedback consists of $B^{(j)}_i$~bits which are used to index a vector in the codebook~$\mathcal{W}_i^{(j)}$ made of~$2^{B^{(j)}_i}$ elements. We also define $\tilde{\mathbf{H}}^{(j)}\triangleq[\tilde{\bm{h}}^{(j)}_1,\ldots,\tilde{\bm{h}}^{(j)}_K]^{\He}$ as the estimate of the total normalized multi-user channel at TX~$j$.

This setting arises in the context of multi-TX cooperation (e.g. Network MIMO \cite{Gesbert2010}) where either $(i)$ all TXs obtain a version of the whole CSI matrix through independent feedback channels (in which case the quality of the uplink feedback channel determines the quality of the individual CSI estimates) or $(ii)$ each TX obtains some portion of the CSI and exchange it through limited rate links or/and with some latency to the other TXs.

In the conventional MIMO BC, it is shown in~\cite{Jindal2006} that the number of quantization bits should scale indefinitely with the logarithm of the SNR in order to achieve a strictly positive number of DoFs when using ZF precoding. Thus, we also focus on the \emph{scaling in the logarithm of the SNR} of the number of quantization bits of all the channel estimates. We introduce the \emph{CSI scaling matrix} $\bm{\alpha}\in \mathbb{R}^{K\times K}$ with its $(i,j)$-th element defined as
\begin{equation}
\alpha_i^{(j)}\triangleq\lim_{P\rightarrow \infty}\frac{B^{(j)}_i}{(K-1)\log_2(P)}.
\label{eq:SM_scaling}
\end{equation}
Hence, $\alpha_i^{(j)}$ denotes the scaling of the number of bits used to describe the channel of user~$i$ at TX~$j$. Since~$B^{(j)}_i$ is a design parameter, the limit in \eqref{eq:SM_scaling} can be seen to always exist. We furthermore assume that the CSI scaling matrix~$\bm{\alpha}$ is known to all the TXs.

\emph{Remark: We will always consider for notational clarity $\alpha_i^{(j)}\in[0,1]$ as the range of interest. This follows from the fact that if~$\alpha_i^{(j)}=1$, it then holds~$|\bm{h}_i^{\He}\bm{t}_{\ell}^{(j)}|^2=O(1)$ for~$\ell\neq i$\cite{Jindal2006}. The accuracy of CSI resulting from a CSI scaling coefficient equal to one is sufficient for the interference to remain bounded. Thus, increasing the number of CSI feedback bits to get~$\alpha_i^{(j)}>1$ does not increase the number of DoFs. This corresponds to a well known result for the conventional MIMO BC in \cite{Jindal2006}. It follows that in all the subsequent results, the scaling coefficients~$\alpha_i^{(j)}$ should be replaced by~$\min(\alpha_i^{(j)},1)$ so as to be valid for arbitrary values for the CSI scaling coefficients. This is not done to keep the notations as clear as possible. } 

\subsubsection{Random Vector Quantization for the DCSI-MIMO Channel}\label{se:system_model:CSI:RVQ}
We consider RVQ where random codebooks are used to quantize the channels. This follows a result in \cite{Jindal2006} for the conventional MIMO BC, stating that in the case of two antennas at the TX, no codebook can achieve a better number of DoFs than the number of DoFs achieved with RVQ. RVQ is also shown to be optimal for the point-to-point MIMO link as the number of antennas tends to infinity both at the TX and the RX~\cite{Santipach2009}. Finally, RVQ is interesting because it gives an achievable lower bound.

In most of the works regarding the conventional MIMO BC, a codeword $\bm{w}$ is selected for quantizing the unit-norm vector $\tilde{\bm{h}}_i$ if it maximizes the amplitude of the inner product $|\tilde{\bm{h}}_i^{\He}\bm{w}|$. However, in the DCSI-MIMO channel, this quantization scheme is less adequate because the objective is invariant by multiplication of the codeword by a unit-norm complex number. This represents a problem since a different estimate is received at each TX, and this phase invariance creates an ambiguity between the estimates. This is very harmful for the transmission scheme and, in fact, if such a quantization scheme is used, it can be easily shown that the channel estimate obtained is essentially useless for joint precoding. 

Thus, another quantization scheme is preferred and the quantized channel $\tilde{\bm{h}}_i^{(j)}$ is instead obtained in the optimum $L_2$ norm sense:
\begin{equation}
\tilde{\bm{h}}_i^{(j)}=\argmin_{\bm{w}\in\mathcal{W}_i^{(j)}}\|{\bm{w}}-\tilde{\bm{h}}_i\|.
\label{eq:SM_7}
\end{equation}
Using directly \eqref{eq:SM_7} leads to lower performance as the phase of the channel also impacts the performance, and not only the direction in a Grassmannian space. To recover similar performance as the quantization scheme conventionally used, we multiply all the elements of the codebook as well as all the normalized channels by a complex number so as to let their first coefficient be real valued. A detailed analysis of this quantization scheme is provided in Appendix~\ref{se:Appendix_RVQ}.

\subsection{Distributed Precoding}
In the DCSI-MIMO channel, each TX has a different estimate of the multi-user channel~$\mathbf{H}$ and controls only one antenna. Thus, each TX uses its CSI to compute a certain precoding matrix from which it extracts the coefficient corresponding to its antenna. We denote the overall multi-user precoder computed at TX~$j$ as~$\mathbf{T}^{(j)}\triangleq\begin{bmatrix}\bm{t}_1^{(j)}&\ldots&\bm{t}_K^{(j)}\end{bmatrix}$ where $\bm{t}_i^{(j)}$ is the beamforming vector designed to transmit symbol~$s_i$. 

Note that although a given TX~$j$ may compute the whole precoding matrix $\mathbf{T}^{(j)}$, only the $j$-th row~$\bm{e}_j^{\trans}\mathbf{T}^{(j)}$ will be used in practice, since TX~$j$ transmits only~$x_j=\bm{e}_j^{\trans}\mathbf{T}^{(j)}\bm{s}$. Finally, the effective precoder is then given by
\begin{equation}
\mathbf{T}\triangleq\begin{bmatrix}\bm{t}_1&\ldots&\bm{t}_K\end{bmatrix}\triangleq\begin{bmatrix}\bm{e}_1^{\trans}\mathbf{T}^{(1)}\\\bm{e}_2^{\trans}\mathbf{T}^{(2)}\\\vdots\\\bm{e}_K^{\trans}\mathbf{T}^{(K)}\end{bmatrix}.
\label{eq:SM_8}
\end{equation}  
The main elements of the transmission in the distributed CSI MIMO channel are illustrated in Fig.~\ref{system_model_ISIT_journal}.
\begin{figure}%[htp!] 
\centering
\includegraphics[width=1\columnwidth]{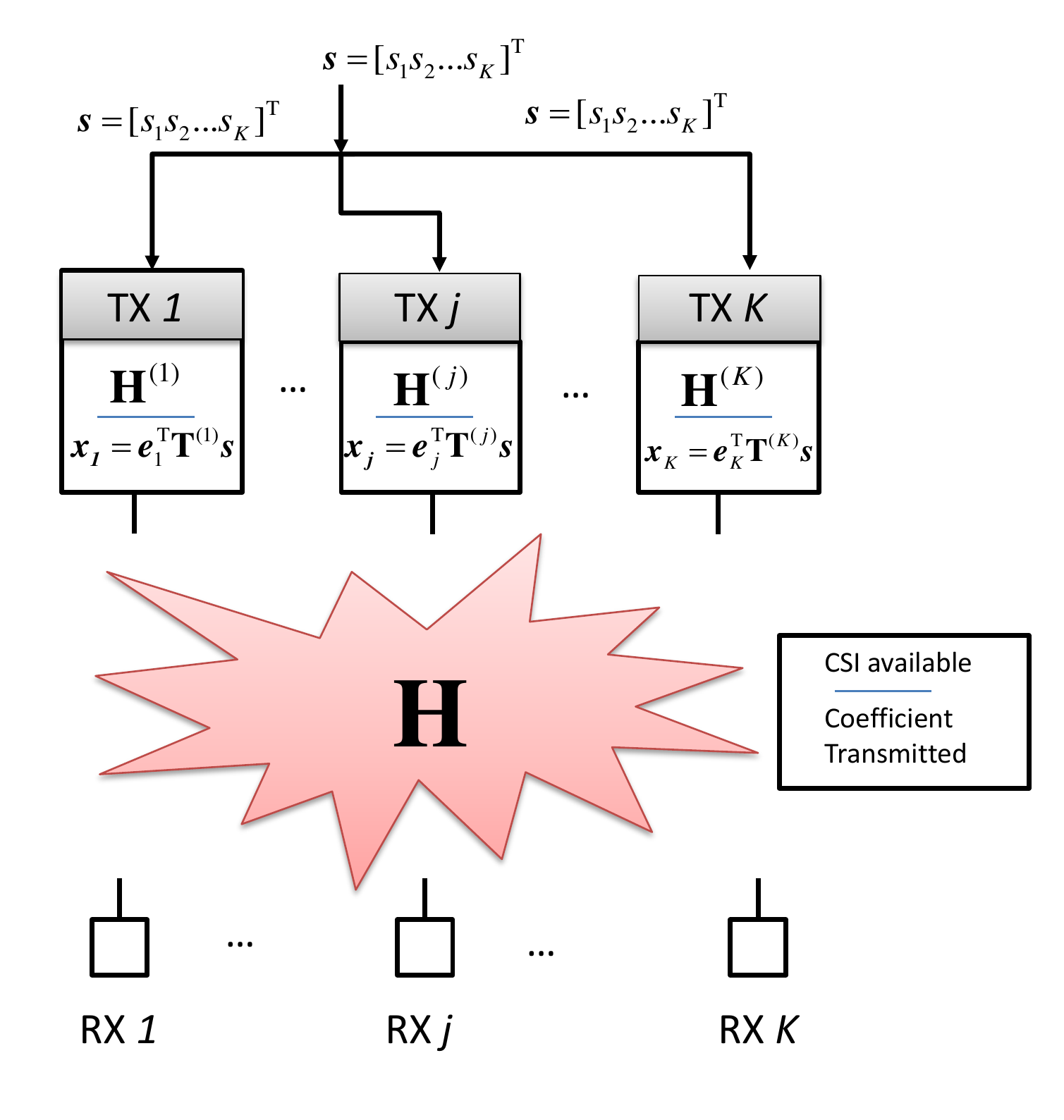}
\caption{Distributed precoding in the DCSI-MIMO channel.}
\label{system_model_ISIT_journal}
\end{figure}

\FloatBarrier
%%%%%%%%%%%%%%%%%%%%%%%%%%%%%%%%%%%%%%%%%%%%%%%%%%%%%%%%%%%%%%%%%%%%%%%%%%%%%%
%%%%%%%%%%%%%%%%%%%%%%%%%%%%%%%%%%%%%%%%%%%%%%%%%%%%%%%%%%%%%%%%%%%%%%%%%%%%%%
%%%%%%%%%%%%%%%%%%%%%%%%%%%%%%%%%%%%%%%%%%%%%%%%%%%%%%%%%%%%%%%%%%%%%%%%%%%%%%
%%%%%%%%%%%%%%%%%%%%%%%%%%%%%%%%%%%%%%%%%%%%%%%%%%%%%%%%%%%%%%%%%%%%%%%%%%%%%%
%%%%%%%%%%%%%%%%%%%%%%%%%%%%%%%%%%%%%%%%%%%%%%%%%%%%%%%%%%%%%%%%%%%%%%%%%%%%%%
\section{Review of the Results in the Conventional MIMO BC}\label{se:MIMO_BC}
%%%%%%%%%%%%%%%%%%%%%%%%%%%%%%%%%%%%%%%%%%%%%%%%%%%%%%%%%%%%%%%%%%%%%%%%%%%%%%
%%%%%%%%%%%%%%%%%%%%%%%%%%%%%%%%%%%%%%%%%%%%%%%%%%%%%%%%%%%%%%%%%%%%%%%%%%%%%%
%%%%%%%%%%%%%%%%%%%%%%%%%%%%%%%%%%%%%%%%%%%%%%%%%%%%%%%%%%%%%%%%%%%%%%%%%%%%%%
%%%%%%%%%%%%%%%%%%%%%%%%%%%%%%%%%%%%%%%%%%%%%%%%%%%%%%%%%%%%%%%%%%%%%%%%%%%%%%
%%%%%%%%%%%%%%%%%%%%%%%%%%%%%%%%%%%%%%%%%%%%%%%%%%%%%%%%%%%%%%%%%%%%%%%%%%%%%%
In this section, we recall briefly the main results from \cite{Jindal2006} on the number of DoFs achieved with finite rate feedback in the conventional MIMO BC. This will be helpful to understand the differences between the conventional MIMO BC and the distributed CSI setting which is the main focus of this work. 

Hence, we consider in this section a conventional MIMO BC where $M$~TXs are colocated and share the \emph{same} channel estimate. For this setting, we need to use different notations as previously introduced for the DCSI-MIMO channel. We denote by~$\hat{\bm{h}}_i$ the channel estimate of~$\tilde{\bm{h}}_i$ obtained with $B_i$~bits. Following \cite{Jindal2006}, the channel estimate is obtained from  
\begin{equation}
\hat{\bm{h}}_i=\argmax_{\bm{w}\in\mathcal{W}^{\BC}_i}|\bm{w}^{\He}\tilde{\bm{h}}_i|^2
\label{eq:BC_1}
\end{equation}
where~$\mathcal{W}^{\BC}_i$ is a random codebook containing~$2^{B_i}$ unit-norm vectors isotropically distributed in~$\mathbb{C}^{K\times 1}$. We provide now the main result.
\begin{theorem}\cite{Jindal2006}
In the MIMO BC with~$M$ antennas, if the channel estimate~$\hat{\bm{h}}_i$ is obtained from the quantization scheme~\eqref{eq:BC_1} with~$B_i=\alpha_i(M-1)\log_2(P)$, the number of DoFs achieved with ZF is given by
\begin{equation}
\DoF^{\BC}=\sum_{i=1}^M \alpha_i.
\label{eq:BC_2}
\end{equation}
\label{thm_DoF_BC}
\end{theorem}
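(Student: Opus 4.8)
The plan is to compute $\DoF_i$ for each user separately via the interference-based expression \eqref{eq:SM_6} (specialized to the present BC, where all antennas share the single estimate $\hat{\bm{h}}_i$ and the channel vectors live in $\mathbb{C}^{M}$), and then sum over $i$. First I would exploit the defining property of ZF precoding: the beamformer $\bm{t}_\ell$ is built so that $\hat{\bm{h}}_i^{\He}\bm{t}_\ell=0$ for all $i\neq\ell$, so the residual interference $|\bm{h}_i^{\He}\bm{t}_\ell|^2$ is caused \emph{entirely} by the quantization error. To expose this, I would decompose the normalized true channel along the estimate and an orthogonal error direction,
\begin{equation}
\tilde{\bm{h}}_i=\cos(\theta_i)\,\hat{\bm{h}}_i+\sin(\theta_i)\,\bm{g}_i,\qquad \bm{g}_i\perp\hat{\bm{h}}_i,\ \norm{\bm{g}_i}=1,
\label{eq:plan_decomp}
\end{equation}
where $\sin^2(\theta_i)=1-|\hat{\bm{h}}_i^{\He}\tilde{\bm{h}}_i|^2$ is the squared chordal quantization distance. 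Writing $\bm{t}_\ell=\sqrt{P/M}\,\bm{u}_\ell$ with $\norm{\bm{u}_\ell}=1$ and using $\hat{\bm{h}}_i^{\He}\bm{t}_\ell=0$, the interference power factorizes cleanly as $|\bm{h}_i^{\He}\bm{t}_\ell|^2=(P/M)\,\norm{\bm{h}_i}^2\,\sin^2(\theta_i)\,|\bm{g}_i^{\He}\bm{u}_\ell|^2$.

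The second step is to take the logarithm of the aggregate interference $\sum_{\ell\neq i}|\bm{h}_i^{\He}\bm{t}_\ell|^2$ and divide by $\log_2(P)$. The prefactor $P/M$ contributes exactly $1$ in the limit; the channel norm $\norm{\bm{h}_i}^2$ and the direction inner products $|\bm{g}_i^{\He}\bm{u}_\ell|^2$ are $O(1)$ in $P$, so after taking expectations their logarithms contribute $0$ once divided by $\log_2(P)$. Everything therefore reduces to evaluating $\lim_{P\rightarrow\infty}\E[\log_2\sin^2(\theta_i)]/\log_2(P)$, and the target $\DoF_i=\alpha_i$ is equivalent to showing $\E[\log_2\sin^2(\theta_i)]\sim-\alpha_i\log_2(P)$.

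The hard part will be this quantization-error computation, precisely because the DoF expression involves $\E[\log(\cdot)]$ rather than $\log\E[\cdot]$, so the crude mean estimate $\E[\sin^2\theta_i]\approx 2^{-B_i/(M-1)}$ does not by itself suffice. I would instead use the exact RVQ statistics: for an isotropic codeword, $1-|\bm{w}^{\He}\tilde{\bm{h}}_i|^2\sim\text{Beta}(M-1,1)$, so raising to the power $M-1$ maps each candidate error to a uniform variable on $[0,1]$, and the maximization in \eqref{eq:BC_1} makes $\sin^{2(M-1)}(\theta_i)$ the minimum of $N=2^{B_i}$ i.i.d.\ uniforms, i.e.\ $\text{Beta}(1,N)$. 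A closed-form log-moment then gives $\E[\log_2\sin^2(\theta_i)]=\tfrac{1}{M-1}\E[\log_2(\text{Beta}(1,N))]=-\tfrac{1}{(M-1)\ln 2}H_N$, with $H_N$ the $N$-th harmonic number. Using $H_N\sim\ln N$ and $N=2^{B_i}=P^{\alpha_i(M-1)}$ yields $\E[\log_2\sin^2(\theta_i)]\sim-\alpha_i\log_2(P)$, exactly as required.

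Assembling the pieces gives $\lim_{P\rightarrow\infty}\E[\log_2(\sum_{\ell\neq i}|\bm{h}_i^{\He}\bm{t}_\ell|^2)]/\log_2(P)=1-\alpha_i$, so \eqref{eq:SM_6} yields $\DoF_i=\alpha_i$, and summing over the $M$ users produces $\DoF^{\BC}=\sum_{i=1}^M\alpha_i$. The only remaining technical gap to fill carefully is the interchange of limit and expectation together with the finiteness of the $O(1)$ log-moments; I would handle this by dominated convergence after bounding the tails of $\norm{\bm{h}_i}^2$ and of $|\bm{g}_i^{\He}\bm{u}_\ell|^2$, relying on the key structural fact that under RVQ the error magnitude $\sin^2(\theta_i)$ and the isotropic error direction $\bm{g}_i$ are independent, which is what makes the factorization in \eqref{eq:plan_decomp} decouple cleanly.
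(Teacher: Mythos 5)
Your proposal is correct, but note that the paper itself does not prove this statement: Theorem~\ref{thm_DoF_BC} is quoted from \cite{Jindal2006}, with the extension to unequal $\alpha_i$ deferred to Jindal's proof. Your skeleton --- decompose $\tilde{\bm{h}}_i=\cos(\theta_i)\hat{\bm{h}}_i+\sin(\theta_i)\bm{g}_i$, use the ZF property $\hat{\bm{h}}_i^{\He}\bm{t}_\ell=0$ to factor the residual interference as $(P/M)\norm{\bm{h}_i}^2\sin^2(\theta_i)|\bm{g}_i^{\He}\bm{u}_\ell|^2$, and reduce everything to the scaling of $\E[\log_2\sin^2(\theta_i)]$ --- is exactly Jindal's route, and it is also the template the paper follows in its own Appendix when proving the DCSI analogue (Theorem~\ref{thm_DoF_cZF} via Proposition~\ref{App_log_distorsion}). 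Where you genuinely depart is the log-distortion computation: for the complex Grassmannian quantizer \eqref{eq:BC_1} the single-codeword CDF is \emph{exact}, $\Pr\{\sin^2(\theta)\leq x\}=x^{M-1}$, so your power transform makes $\sin^{2(M-1)}(\theta_i)$ the minimum of $2^{B_i}$ i.i.d.\ uniforms, i.e.\ $\mathrm{Beta}(1,2^{B_i})$, and the identity $\E[-\ln\min_k U_k]=H_{2^{B_i}}$ gives $\E[-\log_2\sin^2(\theta_i)]=H_{2^{B_i}}/((M-1)\ln 2)$ as a closed form. This buys a single exact two-sided evaluation where Jindal, and the paper's Proposition~\ref{App_log_distorsion}, must work with sandwich bounds --- unavoidably so in the paper's case, since for its real-space $L_2$ quantizer the per-codeword CDF is only bracketed as in \eqref{eq:App_RVQ_7} (and, amusingly, the paper's lower bound still passes through the same harmonic-number sum you use). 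Two small points to make explicit when writing this up: the independence of the error magnitude $\sin^2(\theta_i)$ from the isotropic error direction $\bm{g}_i$ (with $\bm{g}_i$ uniform in the $(M-1)$-dimensional nullspace of $\hat{\bm{h}}_i$, so $|\bm{g}_i^{\He}\bm{u}_\ell|^2$ is $\mathrm{Beta}(1,M-2)$-distributed with integrable logarithm, which also secures the DoF \emph{upper} bound, i.e., that the interference directions do not conspire to vanish), and the restriction $\alpha_i\in[0,1]$, since for $\alpha_i>1$ the interference is already $O(1)$ and $\DoF_i$ saturates at one, consistent with the paper's remark in Section~II.
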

This result was given in~\cite{Jindal2006} for $\alpha_i=\alpha$ but the extension to different~$\alpha_i$ follows directly from the proof in~\cite{Jindal2006}. The extension to Theorem~\ref{thm_DoF_BC} has been suggested in \cite{Vaze2010} where the same formula for the number of DoFs is derived in the case where DPC is used instead of ZF.

% Finally, with the quantization scheme described in subsection~\ref{se:system_model:CSI}, Theorem~\ref{thm_DoF_BC} can be obtained by applying Theorem~\ref{thm_DoF_cZF_HQ} (which holds for the case of hierarchical quantization) to this particular configuration of the CSI where all the TXs have the same CSI. This provides hence a rigorous proof for that result.

We will now derive the equivalent result of Theorem~\ref{thm_DoF_BC} for the DCSI-MIMO channel where the TXs do not share the same channel estimates.

\section{Zero Forcing in the DCSI-MIMO Channel with Two Users}\label{se:ZF_two}
As a starting point we consider the particular configuration with only two users. This setting is interesting for two main reasons. Firstly, the exposition is simpler in that case while most of the insights are the same as in the general case, and secondly this scenario makes it possible to obtain stronger results. 

In the conventional multiple-antenna BC with imperfect CSI, the number of DoFs with ZF has been derived and shown to be defined by the CSI scaling. In the DCSI-MIMO channel, the CSI scaling of each channel vector $\tilde{\bm{h}}_i$ is different at each TX. One central goal of our work consist in determining how the formula for the number of DoFs in the conventional MIMO BC generalizes to the DCSI-MIMO channel. This would then lead us to evaluate whether ZF is in that case a performing solution and if not, whether one can find better solutions.

\subsection{Conventional Zero Forcing}
In the DCSI-MIMO channel, the conventional ZF precoder is made of the beamformer $\bm{t}_i^{\mathrm{cZF}}\triangleq[\bm{e}_1^{\trans}\bm{t}_i^{\mathrm{cZF}(1)},\bm{e}_2^{\trans}\bm{t}_i^{\mathrm{cZF}(2)}]^ {\trans}$ to transmit $s_i$, with its elements defined in an intuitive way as
\begin{equation} 
\bm{t}_i^{\mathrm{cZF}(j)}\triangleq 
\sqrt{\frac{P}{2}}\frac{
\Pi_{\tilde{\bm{h}}_{\bar{i}}^{(j)}}^{\perp}\left(\tilde{\bm{h}}_{i}^{(j)}\right)}{\norm{\Pi_{\tilde{\bm{h}}_{\bar{i}}^{(j)}}^{\perp}\left(\tilde{\bm{h}}_{i}^{(j)}\right)}},\quad j\in\{1,2\}.
\label{eq:def_cZF_two}
\end{equation}
The interpretation behind conventional ZF is that each TX applies ZF using its own CSI implicitly assuming that the other TX shares the same CSI estimate. Our first result given in the following theorem relates the number of DoFs achieved with such a precoding strategy.

\begin{theorem}%With the quantization scheme~\eqref{eq:SM_7}, 
Conventional ZF achieves the number of DoFs
\begin{equation}
\DoF^{c\mathrm{ZF}}=2\min_{i,j\in\{	1,2\}}\alpha_i^{(j)}.
\label{eq:thm_cZF_two}
\end{equation}
\label{thm_DoF_cZF_two}
\end{theorem}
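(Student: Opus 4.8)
The plan is to translate the DoF claim into a statement about the exponent of the residual interference and then to prove matching bounds on that exponent. Because $K=2$, RX~$i$ suffers interference only from the single stream $s_{\bar{i}}$, and since both the precoder~\eqref{eq:def_cZF_two} and the target~\eqref{eq:thm_cZF_two} are symmetric under permuting users and TXs, it is enough to treat RX~$1$; the claim $\DoF_1=\min_{i,j}\alpha_i^{(j)}$ then yields $\DoF_2$ by symmetry and gives the factor~$2$. Starting from~\eqref{eq:SM_6}, writing $\bm{h}_1=\norm{\bm{h}_1}\tilde{\bm{h}}_1$ and extracting the factor $\sqrt{P/2}$ from~\eqref{eq:def_cZF_two}, the problem reduces to the scaling of the normalized leakage $\gamma\triangleq|\tilde{\bm{h}}_1^{\He}\bm{b}|^2$, where $\bm{b}\triangleq[\{\bm{b}^{(1)}\}_1,\{\bm{b}^{(2)}\}_2]^{\trans}$ is assembled from the unit beamformers $\bm{b}^{(j)}\triangleq\Pi_{\tilde{\bm{h}}_1^{(j)}}^{\perp}(\tilde{\bm{h}}_2^{(j)})/\norm{\Pi_{\tilde{\bm{h}}_1^{(j)}}^{\perp}(\tilde{\bm{h}}_2^{(j)})}$; the factor $\norm{\bm{h}_1}^2$ is independent of all directions and codebooks, its log has a $P$-independent finite expectation, and it is therefore discarded. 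For the error model I would invoke the RVQ analysis of Appendix~\ref{se:Appendix_RVQ}: after the phase-fixing convention, the quantization error $\bm{\epsilon}_i^{(j)}\triangleq\tilde{\bm{h}}_i^{(j)}-\tilde{\bm{h}}_i$ has $\E[\norm{\bm{\epsilon}_i^{(j)}}^2]$ of order $P^{-\alpha_i^{(j)}}$, the scaling that already underlies Theorem~\ref{thm_DoF_BC}.

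For the achievability bound $\DoF_1\ge\min_{i,j}\alpha_i^{(j)}$ I would upper-bound $\gamma$ pathwise. The map $(\tilde{\bm{h}}_1^{(j)},\tilde{\bm{h}}_2^{(j)})\mapsto\bm{b}^{(j)}$ is smooth, hence locally Lipschitz, away from the set where its two arguments are collinear. The orthogonality $(\tilde{\bm{h}}_1^{(j)})^{\He}\bm{b}^{(j)}=0$ built into~\eqref{eq:def_cZF_two} gives the self-leakage identity
\[
\tilde{\bm{h}}_1^{\He}\bm{b}^{(j)}=(\tilde{\bm{h}}_1-\tilde{\bm{h}}_1^{(j)})^{\He}\bm{b}^{(j)}=-(\bm{\epsilon}_1^{(j)})^{\He}\bm{b}^{(j)},
\]
so $|\tilde{\bm{h}}_1^{\He}\bm{b}^{(j)}|\le\norm{\bm{\epsilon}_1^{(j)}}$. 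Writing $\tilde{\bm{h}}_1^{\He}\bm{b}=\tilde{\bm{h}}_1^{\He}\bm{b}^{(1)}+\{\tilde{\bm{h}}_1\}_2^{*}(\{\bm{b}^{(2)}\}_2-\{\bm{b}^{(1)}\}_2)$ and bounding the beamformer discrepancy by the Lipschitz constant times $\sum_{i,j}\norm{\bm{\epsilon}_i^{(j)}}$ (since $\norm{\tilde{\bm{h}}_i^{(1)}-\tilde{\bm{h}}_i^{(2)}}\le\norm{\bm{\epsilon}_i^{(1)}}+\norm{\bm{\epsilon}_i^{(2)}}$), the leakage magnitude is at most a constant multiple of $\max_{i,j}\norm{\bm{\epsilon}_i^{(j)}}$, so $\gamma\lesssim\max_{i,j}\norm{\bm{\epsilon}_i^{(j)}}^2$, of order $P^{-\min_{i,j}\alpha_i^{(j)}}$. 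The one subtlety is that the Lipschitz constant blows up like $1/|\sin\angle(\tilde{\bm{h}}_1^{(j)},\tilde{\bm{h}}_2^{(j)})|$ near collinearity; since the vectors are nearly isotropic in $\mathbb{C}^2$, the probability of lying within angle $\theta$ of that set is $O(\theta^2)$, and splitting the expectation over this rare event keeps its contribution to $\E[\log_2\gamma]$ at the $o(\log_2 P)$ level.

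The reverse bound $\DoF_1\le\min_{i,j}\alpha_i^{(j)}$ is the heart of the statement and, in my view, the main obstacle, because the interference must be shown to be genuinely as large as the worst estimate --- including the striking case in which the minimum is attained by an estimate of the \emph{served} channel $\tilde{\bm{h}}_2$, which does not even enter the nulling constraint. The mechanism is a phase inconsistency across TXs: in $\mathbb{C}^2$ the orthogonal complement of $\tilde{\bm{h}}_1^{(j)}$ is one-dimensional, so $\bm{b}^{(j)}=e^{\imj\psi^{(j)}}(\tilde{\bm{h}}_1^{(j)})^{\perp}$, with $(\bm{v})^{\perp}$ the unit vector orthogonal to $\bm{v}$ and the phase $\psi^{(j)}$ pinned down by TX~$j$'s estimate $\tilde{\bm{h}}_2^{(j)}$ of the served channel. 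Even with $\tilde{\bm{h}}_1$ known perfectly, a phase discrepancy produces, using $\tilde{\bm{h}}_1^{\He}(\tilde{\bm{h}}_1)^{\perp}=0$,
\[
\tilde{\bm{h}}_1^{\He}\bm{b}=\{\tilde{\bm{h}}_1\}_1^{*}\{(\tilde{\bm{h}}_1)^{\perp}\}_1\bigl(e^{\imj\psi^{(1)}}-e^{\imj\psi^{(2)}}\bigr),
\]
whose magnitude is of order $|\psi^{(1)}-\psi^{(2)}|$, hence of order $\norm{\tilde{\bm{h}}_2^{(1)}-\tilde{\bm{h}}_2^{(2)}}$, i.e.\ of order $P^{-\min(\alpha_2^{(1)},\alpha_2^{(2)})/2}$. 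This term is \emph{absent} in the conventional BC of Theorem~\ref{thm_DoF_BC}, where all antennas share one estimate and $\bm{b}^{(1)}=\bm{b}^{(2)}$; its appearance is exactly what forces the minimum to run over all four coefficients. I would organize the converse by selecting $(i^{\star},j^{\star})$ attaining the minimum and showing that its error contributes leakage of order $P^{-\min_{i,j}\alpha_i^{(j)}/2}$ which the remaining lower-order or independent error terms cannot cancel, using that two independent isotropic errors align only on an event of small probability.

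What remains is to upgrade these order statements on the random variable $\gamma$ into $\E[\log_2\gamma]/\log_2(P)\to-\min_{i,j}\alpha_i^{(j)}$. The upper direction is immediate from Jensen's inequality, $\E[\log_2\gamma]\le\log_2\E[\gamma]$, together with $\E[\gamma]$ of order $P^{-\min_{i,j}\alpha_i^{(j)}}$. The lower direction is the delicate point: $\log_2\gamma\to-\infty$ on the rare events where the quantization errors nearly cancel, and one must ensure these do not dominate the expectation. I would handle this by showing that $\gamma\,P^{\min_{i,j}\alpha_i^{(j)}}$ converges in distribution to a non-degenerate random variable whose density near the origin decays fast enough that $\E\bigl[\,|\log_2(\gamma\,P^{\min_{i,j}\alpha_i^{(j)}})|\,\bigr]$ stays bounded uniformly in $P$, so that the small-leakage events contribute $O(1)$ rather than $O(\log_2 P)$. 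This lower-tail control, combined with the near-collinearity estimate from the achievability part, is where the bulk of the careful probabilistic work lies, and completes $\DoF_1=\min_{i,j}\alpha_i^{(j)}$ and thus~\eqref{eq:thm_cZF_two}.
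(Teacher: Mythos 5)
Your high-level architecture — reducing the claim to the exponent of $\E[\log_2\gamma]$ for the normalized leakage and proving matching bounds — is the same as the paper's, and your converse mechanism is a genuinely different and insightful identification: in $\mathbb{C}^2$ the nullspace of $\tilde{\bm{h}}_1^{(j)}$ is one-dimensional, so the only freedom is a phase pinned by the estimate of the \emph{served} channel, and the identity $\tilde{\bm{h}}_1^{\He}\bm{b}=\{\tilde{\bm{h}}_1\}_1^{*}\{(\tilde{\bm{h}}_1)^{\perp}\}_1(e^{\imj\psi^{(1)}}-e^{\imj\psi^{(2)}})$ correctly isolates why $\alpha_2^{(j)}$ caps RX~$1$'s DoF. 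The paper never exhibits this; its proof (Appendix~\ref{se:proof_thm_DoF_cZF}, stated for general $K$) instead compares each TX's row to the perfect-CSI beamformer via a first-order perturbation of $\mathbf{H}^{-1}$ (Lemma~\ref{lemma_MSE}) and, for the converse, lower-bounds the interference by $|\tilde{\bm{h}}_i^{\He}\bm{a}_k|\,|\bm{e}_1^{\He}\bm{b}_k^{(1)}|\,\|\bm{u}_k^{(1)}-\bm{u}_k^{*}\|$, where the unit vectors $\bm{a}_k,\bm{b}_k^{(1)}$ are isotropically distributed so the expectations of their log inner products are finite.

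However, two load-bearing steps in your proposal fail or are missing as written. First, the Jensen route for achievability does not go through: your pathwise bound carries a Lipschitz constant blowing up like $1/\sin^2(\angle(\tilde{\bm{h}}_1^{(j)},\tilde{\bm{h}}_2^{(j)}))$, and for isotropic vectors in $\mathbb{C}^2$ that $\sin^2$ is uniform on $[0,1]$, so the constant is not integrable and $\E[\gamma]$ need not be $O(P^{-\min_{i,j}\alpha_i^{(j)}})$; only the expected-log route with event splitting (which you also sketch, and which works since $\E[-\log\sin^2]<\infty$) survives. This is exactly what the paper's Proposition~\ref{App_log_distorsion} supplies: \emph{two-sided} bounds on $\E[-\log_2\min_{\bm{c}}\sin^2(\angle(\tilde{\bm{h}},\bm{c}))]$, i.e.\ lower-tail control built directly into the quantization analysis. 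Second, in the converse — which you rightly call the heart — the two decisive claims are asserted rather than proved: that independent errors cannot cancel the worst one, and that $\gamma P^{\min\alpha}$ converges in distribution with enough density control near zero. Moreover, convergence in distribution alone does not justify $\E[\log_2(\gamma P^{\min\alpha})]=O(1)$; you need uniform integrability of the logarithm, i.e.\ a lower-tail bound on $\gamma$ uniform in $P$. The paper obtains precisely this by factoring the residual into $\|\bm{u}_k^{(1)}-\bm{u}_k^{*}\|$ — whose expected log is pinned to the worst quantization error by Lemma~\ref{lemma_MSE} together with Proposition~\ref{App_log_distorsion} — times inner products of isotropic unit vectors with integrable logs. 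Until you supply quantitative substitutes for those two estimates, your converse is a program rather than a proof.
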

\begin{IEEEproof}
A detailed proof is provided in Appendix~\ref{se:proof_thm_DoF_cZF}.
\end{IEEEproof} 
We can observe that in the case of distributed CSI, the number of DoFs is limited by the worst quality of the CSI across the channels to the RXs and across the TXs. Comparing this result with the number of DoFs achieved in a conventional MIMO BC given in Theorem~\ref{thm_DoF_BC}, it is remarkable that the number of DoFs at \emph{both} users is limited by the worst estimation error whether it is done relative to~$\tilde{\bm{h}}_1$ or~$\tilde{\bm{h}}_2$. This is contrast to the formula for the conventional MIMO BC in \eqref{eq:thm_cZF_two} where the accuracy of the estimation of~$\tilde{\bm{h}}_i$ impacts only the number of DoFs of RX~$i$. 

Note that when all the CSI scaling coefficients are equal, the setting considered is still different from the conventional multiple-antenna BC. Indeed, the estimates at the different TXs have statistically the same accuracy since the CSI scaling coefficients are equal, but the realizations of the estimation errors are still different. 

One can conclude from Theorem~\ref{thm_DoF_cZF_two} that the additional interference due to the CSI inconsistency between the TXs does not lead to any loss in number of DoFs compared to the conventional multiple-antenna BC if and only if the channel estimates are of the same quality.

\subsection{Robust Zero Forcing}
Robust precoding schemes have been derived in the literature either as statistical robust ZF precoder or precoder optimizing the worst case performance to reduce the harmful effect of the imperfect CSI. Since we consider the average sum rate, the most relevant approach is the statistical one. Thus, we model the quantization error at TX~$j$ by an additive white Gaussian noise $\bm{\Delta}^{(j)}\triangleq[\bm{\delta}_1^{(j)},\bm{\delta}_2^{(j)}]^{\He}$ of variance equal to $P ^{-\alpha_i^{(j)}}$ for the estimation error $\bm{\delta}_i^{(j)}$ resulting from the quantization of $\tilde{\bm{h}}_i$ at TX~$j$. The variance~$P ^{-\alpha_i^{(j)}}$ is obtained from the analysis of the scaling of the estimation error which is given in Appendix~\ref{se:Appendix_RVQ}.

The covariance matrix of the estimation error at TX~$j$ is then $\mathbf{R}_{\bm{\Delta}}^{(j)}\triangleq\E[\bm{\Delta}^{(j)}(\bm{\Delta}^{(j)})^{\He}]=\diag([P^{-\alpha_1^{(j)}},P^{-\alpha_2^{(j)}}])$. Using this model, we can extend the approach from \cite{Shenouda2006} and the beamformer transmitting symbol~$s_i$ at TX~$j$ is obtained from solving the following minimization:
\begin{equation}
\argmin_{\bm{t}_i}\E_{\bm{\Delta}^{(j)}}[\norm{\bm{e}_i-\tilde{\mathbf{H}}^{(j)}\bm{t}_i}^2],\quad\text{subject to $\|\bm{t}_i\|^2=\frac{P}{K}$}.
\end{equation}
Writing the Lagrangian of the minimization problem with the Lagrange variable $\lambda$ for the power constraint and taking the derivative according to $\bm{t}_i^*$ yields the equation
\begin{equation}
\left(\mathbf{R}^{(j)}_{\bm{\Delta}}+{\mathbf{H}}^{(j)\He}{\mathbf{H}}^{(j)}+\lambda\I\right)\bm{t}_i-{\mathbf{H}}^{(j)\He}\bm{e}_i=\bm{0}.
\end{equation}
The factor~$\lambda$ improves the performance at intermediate SNR by striking a compromise between the orthogonality constraint and the power consumption but it cannot improve the number of DoFs. Thus, we can let $\lambda$ be equal to zero and normalize the beamformer to fulfill the power constraint. The robust ZF beamformer transmitting symbol $s_i$ is denoted by $\bm{t}_{i}^{\mathrm{rZF}}\triangleq[\bm{e}_1^{\trans}\bm{t}_i^{\mathrm{rZF}(1)},\bm{e}_2^{\trans}\bm{t}_i^{\mathrm{rZF}(2)}]^ {\trans}$ and $\forall j\in\{1,2\}$ 
\begin{equation} 
\mathbf{t}_i^{\mathrm{rZF}(j)}\triangleq\sqrt{\frac{P}{K}}
\frac{(\mathbf{R}^{(j)}_{\bm{\Delta}}+{\mathbf{H}}^{(j)\He}{\mathbf{H}}^{(j)})^{-1}{\mathbf{H}}^{(j)\He}\bm{e}_i}{\left\|(\mathbf{R}^{(j)}_{\bm{\Delta}}+{\mathbf{H}}^{(j)\He}{\mathbf{H}}^{(j)})^{-1}{\mathbf{H}}^{(j)\He}\bm{e}_i\right\|}.
\label{eq:def_rZF}
\end{equation}
We then derive the number of DoFs achieved by this robust precoder. 
\begin{proposition}
The robust ZF precoder defined in \eqref{eq:def_rZF} achieves the same number of DoFs as conventional ZF.
\label{prop_rZF}
\end{proposition}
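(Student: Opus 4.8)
The plan is to prove the equality of the two numbers of DoFs by showing that, for each user $i\in\{1,2\}$, the interference $|\bm{h}_i^{\He}\bm{t}_{\bar{i}}^{\mathrm{rZF}}|^2$ produced by the robust precoder has the same scaling in $P$ as the interference $|\bm{h}_i^{\He}\bm{t}_{\bar{i}}^{\mathrm{cZF}}|^2$ produced by conventional ZF, whose exponent is fixed by Theorem~\ref{thm_DoF_cZF_two}. Through the DoF expression \eqref{eq:SM_6} this immediately yields $\DoF^{\mathrm{rZF}}=\DoF^{\mathrm{cZF}}=2\min_{i,j\in\{1,2\}}\alpha_i^{(j)}$.

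First I would write the unnormalized robust beamformer as $\bm{v}_{\bar{i}}^{(j)}\triangleq(\mathbf{R}_{\bm{\Delta}}^{(j)}+\mathbf{M}^{(j)})^{-1}\tilde{\mathbf{H}}^{(j)\He}\bm{e}_{\bar{i}}$, with $\mathbf{M}^{(j)}\triangleq\tilde{\mathbf{H}}^{(j)\He}\tilde{\mathbf{H}}^{(j)}$, and compare it to the conventional direction $\bm{v}_{\bar{i}}^{\mathrm{cZF}(j)}=\mathbf{M}^{(j)-1}\tilde{\mathbf{H}}^{(j)\He}\bm{e}_{\bar{i}}=\tilde{\mathbf{H}}^{(j)-1}\bm{e}_{\bar{i}}$, which is exactly orthogonal to $\tilde{\bm{h}}_i^{(j)}$ and hence equals \eqref{eq:def_cZF_two} up to scaling. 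The identity $(\mathbf{R}+\mathbf{M})^{-1}-\mathbf{M}^{-1}=-\mathbf{M}^{-1}\mathbf{R}(\mathbf{R}+\mathbf{M})^{-1}$ gives $\bm{v}_{\bar{i}}^{(j)}-\bm{v}_{\bar{i}}^{\mathrm{cZF}(j)}=-\mathbf{M}^{(j)-1}\mathbf{R}_{\bm{\Delta}}^{(j)}\bm{v}_{\bar{i}}^{(j)}$. Since $\tilde{\mathbf{H}}^{(j)}$ converges almost surely to the full-rank true channel, $\mathbf{M}^{(j)-1}$ and $\bm{v}_{\bar{i}}^{(j)}$ are $O(1)$ a.s., while $\|\mathbf{R}_{\bm{\Delta}}^{(j)}\|=P^{-\min_{\ell}\alpha_\ell^{(j)}}$; passing to the normalized beamformers then yields $\|\bm{u}_{\bar{i}}^{\mathrm{rZF}(j)}-\bm{u}_{\bar{i}}^{\mathrm{cZF}(j)}\|=O(P^{-\min_{\ell}\alpha_\ell^{(j)}})$.

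Next I would substitute into the distributed effective precoder \eqref{eq:SM_8}, writing $\bm{h}_i^{\He}\bm{t}_{\bar{i}}^{\mathrm{rZF}}=\bm{h}_i^{\He}\bm{t}_{\bar{i}}^{\mathrm{cZF}}+\sqrt{P/K}\sum_{j}\{\bm{h}_i^{\He}\}_j\bm{e}_j^{\trans}(\bm{u}_{\bar{i}}^{\mathrm{rZF}(j)}-\bm{u}_{\bar{i}}^{\mathrm{cZF}(j)})$. The perturbation term has magnitude $O(P^{1/2-\min_{i,j}\alpha_i^{(j)}})$, whereas Theorem~\ref{thm_DoF_cZF_two} (more precisely the analysis behind it) gives $|\bm{h}_i^{\He}\bm{t}_{\bar{i}}^{\mathrm{cZF}}|\sim P^{1/2-\frac{1}{2}\min_{i,j}\alpha_i^{(j)}}$. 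Because $\min_{i,j}\alpha_i^{(j)}\ge\frac{1}{2}\min_{i,j}\alpha_i^{(j)}$, the perturbation is of order no larger than the conventional term, so by the triangle inequality $|\bm{h}_i^{\He}\bm{t}_{\bar{i}}^{\mathrm{cZF}}|(1-o(1))\le|\bm{h}_i^{\He}\bm{t}_{\bar{i}}^{\mathrm{rZF}}|\le|\bm{h}_i^{\He}\bm{t}_{\bar{i}}^{\mathrm{cZF}}|(1+o(1))$ almost surely, and the two interference powers share the exponent $1-\min_{i,j}\alpha_i^{(j)}$. Taking logarithms, dividing by $\log_2 P$, and inserting into \eqref{eq:SM_6} closes the argument.

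The main obstacle is the passage from these almost-sure order statements to the expectations appearing in \eqref{eq:SM_6}: one must verify that $\mathbf{M}^{(j)-1}$, the beamformer norms $\|\bm{v}_{\bar{i}}^{(j)}\|$, and the ratio $|\bm{h}_i^{\He}\bm{t}_{\bar{i}}^{\mathrm{rZF}}|/|\bm{h}_i^{\He}\bm{t}_{\bar{i}}^{\mathrm{cZF}}|$ remain controlled with sufficiently high probability and that the $\log$ terms are uniformly integrable, so that dominated convergence makes the $o(1)$ corrections vanish after averaging. The delicate point is ruling out, at the level of the exponent, any near-cancellation between $\bm{h}_i^{\He}\bm{t}_{\bar{i}}^{\mathrm{cZF}}$ and the perturbation; this is automatic when $\min_{i,j}\alpha_i^{(j)}>0$, since the perturbation is then of strictly smaller order, while the boundary case $\min_{i,j}\alpha_i^{(j)}=0$ gives $\DoF=0$ for both schemes trivially.
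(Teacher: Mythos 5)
Your proposal is correct and takes essentially the same route as the paper's own proof: a first-order perturbation of the regularized inverse (you use the exact resolvent identity $(\mathbf{R}+\mathbf{M})^{-1}-\mathbf{M}^{-1}=-\mathbf{M}^{-1}\mathbf{R}(\mathbf{R}+\mathbf{M})^{-1}$ where the paper Taylor-expands $(\I+X)^{-1}$), showing the regularization perturbs the conventional ZF beamformer by $O(P^{-\min_{\ell}\alpha_{\ell}^{(j)}})$, which is dominated by the scaling of conventional ZF's residual interference, so the number of DoFs is unchanged. The caveats you flag (passing from per-realization order estimates to the expectations in \eqref{eq:SM_6}, and ruling out near-cancellation) are left equally informal in the paper's proof, so your argument is, if anything, slightly more explicit than the original.
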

\begin{proof}
Considering strictly positive CSI scaling coefficients, the variances of the estimation errors tend to zero so that the inverse term in~\eqref{eq:def_rZF} can be approximated and we can write at RX~$\bar{i}$: 
\begin{align}
|\tilde{\bm{h}}_{\bar{i}}^{\He}\bm{t}_i^{(j)}|^2&=\frac{P}{K}\frac{|\tilde{\bm{h}}_{\bar{i}}^{\He}(\mathbf{R}^{(j)}_{\bm{\Delta}}+{\mathbf{H}}^{(j)\He}{\mathbf{H}}^{(j)})^{-1}{\mathbf{H}}^{(j)\He}\bm{e}_i|^2}{\left\|(\mathbf{R}^{(j)}_{\bm{\Delta}}+{\mathbf{H}}^{(j)\He}{\mathbf{H}}^{(j)})^{-1}{\mathbf{H}}^{(j)\He}\bm{e}_i\right\|^2}\\
&=\frac{P}{K}\frac{|\tilde{\mathbf{h}}_{\bar{i}}^{\He}
(\mathbf{H}^{(j)})^{-1}((\mathbf{H}^{(j)\He})^{-1}\mathbf{R}^{(j)}_{\bm{\Delta}}({\mathbf{H}}^{(j)})^{-1}+\I)^{-1}\bm{e}_i|^2}{\left\|(\mathbf{R}^{(j)}_{\bm{\Delta}}+{\mathbf{H}}^{(j)\He}{\mathbf{H}}^{(j)})^{-1}{\mathbf{H}}^{(j)\He}\bm{e}_i\right\|^2}\\
&=\frac{P}{K}\left(\frac{|\tilde{\bm{h}}_{\bar{i}}^{\He}
(\mathbf{H}^{(j)})^{-1}(\I-({\mathbf{H}}^{(j)\He})^{-1}\mathbf{R}^{(j)}_{\bm{\Delta}}({\mathbf{H}}^{(j)})^{-1})\bm{e}_i|^2}{\left\|(\mathbf{R}^{(j)}_{\bm{\Delta}}+{\mathbf{H}}^{(j)\He}{\mathbf{H}}^{(j)})^{-1}{\mathbf{H}}^{(j)\He}\bm{e}_i\right\|^2}+o(\norm{\mathbf{R}^{(j)}_{\bm{\Delta}}}^2_{\Fro})\right).
\end{align} 
The difference with conventional ZF is the term~$({\mathbf{H}}^{(j)\He})^{-1}\mathbf{R}^{(j)}_{\bm{\Delta}}({\mathbf{H}}^{(j)})^{-1}$ which can be shown to lead to no reduction of the interference and introduces actually an additional error term. Yet, it converges to zero as $P^{-\min(\alpha_1^{(j)},\alpha_2^{(j)})}$ since  $\mathbf{R}_{\bm{\Delta}}^{(j)}=\diag([P^{-\alpha_1^{(j)}},P^{-\alpha_2^{(j)}}])$. This is also the rate at which the remaining interference tends to zero when using conventional ZF. Thus, the regularizing term vanishes and the number of DoFs achieved is the same as conventional ZF. 
\end{proof}
Hence, even the existing designs of robust ZF precoders do not improve the number of DoFs in the DCSI-MIMO channel. Note that the extension of the definition of the statistical robust precoder as well as the extension of proposition~\ref{prop_rZF} to the general setting with $K$~users is trivial and will not be given explicitly.

\subsection{Beacon Zero Forcing}
Robust ZF schemes from the literature do not bring any DoFs improvement which leads to investigate other alternative schemes more adapted to the DCSI-MIMO channel. As a result, we now propose a modification of the conventional ZF scheme which improves the number of DoFs when the estimates for $\tilde{\bm{h}}_1$ and $\tilde{\bm{h}}_2$ are of different qualities. We call it \emph{Beacon ZF} (bZF) because it makes use of an arbitrary channel-independent vector known beforehand at both TXs (a \emph{beacon} signal). 

The beamformer used to transmit symbol~$s_i$ is then $\bm{t}_i^{\mathrm{bZF}}\triangleq[\bm{e}_1^{\trans}\bm{t}_i^{\mathrm{bZF}(1)},\bm{e}_2^{\trans}\bm{t}_i^{\mathrm{bZF}(2)}]^ {\trans}$, with its elements defined from
\begin{equation}  
\bm{t}_i^{\mathrm{bZF}(j)}\triangleq
\sqrt{\frac{P}{2}}\frac{
\Pi_{\tilde{\bm{h}}_{\bar{i}}^{(j)}}^{\perp}\left(\bm{c}_{i}\right)}{\norm{\Pi_{\tilde{\bm{h}}_{\bar{i}}^{(j)}}^{\perp}\left(\bm{c}_{i}\right)}}
\label{eq:def_bZF_two}
\end{equation}
where~$\bm{c}_{i}$ is any non-zero vector chosen beforehand and known at the TXs. Due to the isotropy of the channel, the choice of~$\bm{c}_{i}$ does not influence the performance of the precoder.

\begin{corollary}
The number of DoFs achieved with beacon ZF is 
\begin{equation}
\DoF^{\mathrm{bZF}}=\min_{j\in\{1,2\}}\alpha_1^{(j)}+\min_{j\in\{1,2\}}\alpha_2^{(j)}.
\label{eq:thm_bZF}
\end{equation}
\label{corollary_DoF_bZF_two_1}
\end{corollary}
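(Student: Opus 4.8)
The plan is to reuse the machinery developed for Theorem~\ref{thm_DoF_cZF_two} in Appendix~\ref{se:proof_thm_DoF_cZF}, exploiting the single structural change brought by the beacon: the projected vector is now the deterministic, TX-independent signal $\bm{c}_{\bar{i}}$ instead of the mismatched estimates $\tilde{\bm{h}}_{\bar{i}}^{(j)}$. Since ZF nulls the interference and the power is split equally, the beamformers decouple and I can treat each RX separately through \eqref{eq:SM_6}; for two users the only interfering term at RX~$i$ is $|\bm{h}_i^{\He}\bm{t}_{\bar{i}}^{\mathrm{bZF}}|^2$. Writing $\bm{h}_i=\norm{\bm{h}_i}\tilde{\bm{h}}_i$ and noting that $\norm{\bm{h}_i}$ does not scale with $P$, it suffices to track the scaling in $P$ of $|\tilde{\bm{h}}_i^{\He}\bm{t}_{\bar{i}}^{\mathrm{bZF}}|^2$ and then argue, exactly as in the appendix, that the expected normalized logarithm in \eqref{eq:SM_6} behaves like the logarithm of this typical scaling.

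Next I would set $\bm{g}^{(j)}\triangleq\Pi_{\tilde{\bm{h}}_{i}^{(j)}}^{\perp}(\bm{c}_{\bar{i}})/\norm{\Pi_{\tilde{\bm{h}}_{i}^{(j)}}^{\perp}(\bm{c}_{\bar{i}})}$, so that $\bm{t}_{\bar{i}}^{\mathrm{bZF}}=\sqrt{P/2}\,[\{\bm{g}^{(1)}\}_1,\{\bm{g}^{(2)}\}_2]^{\trans}$ and the leakage reads $\tilde{\bm{h}}_i^{\He}\bm{t}_{\bar{i}}^{\mathrm{bZF}}=\sqrt{P/2}\,(\{\tilde{\bm{h}}_i\}_1^*\{\bm{g}^{(1)}\}_1+\{\tilde{\bm{h}}_i\}_2^*\{\bm{g}^{(2)}\}_2)$. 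The central manipulation is to anchor this mixed expression at one TX by adding and subtracting the missing entry,
\begin{equation}
\tilde{\bm{h}}_i^{\He}\bm{t}_{\bar{i}}^{\mathrm{bZF}}=\sqrt{P/2}\,\Bigl(\tilde{\bm{h}}_i^{\He}\bm{g}^{(1)}+\{\tilde{\bm{h}}_i\}_2^*\bigl(\{\bm{g}^{(2)}\}_2-\{\bm{g}^{(1)}\}_2\bigr)\Bigr).
\end{equation}
Using $\tilde{\bm{h}}_i=\tilde{\bm{h}}_i^{(1)}+\bm{\delta}_i^{(1)}$ together with the beacon-ZF identity $\tilde{\bm{h}}_i^{(1)\He}\bm{g}^{(1)}=0$, the first term collapses to $\bm{\delta}_i^{(1)\He}\bm{g}^{(1)}$, of magnitude $O(\norm{\bm{\delta}_i^{(1)}})\sim P^{-\alpha_i^{(1)}/2}$ by the error-scaling analysis of Appendix~\ref{se:Appendix_RVQ}. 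The second, inter-TX discrepancy term is controlled through $\norm{\bm{g}^{(1)}-\bm{g}^{(2)}}\sim\norm{\tilde{\bm{h}}_i^{(1)}-\tilde{\bm{h}}_i^{(2)}}=\norm{\bm{\delta}_i^{(2)}-\bm{\delta}_i^{(1)}}\sim P^{-\min(\alpha_i^{(1)},\alpha_i^{(2)})/2}$, where the first relation holds because, the beacon being common to both TXs, $\bm{g}^{(j)}$ is a Lipschitz function of the unit estimate $\tilde{\bm{h}}_i^{(j)}$ away from the measure-zero event where $\bm{c}_{\bar{i}}$ aligns with $\tilde{\bm{h}}_i^{(j)}$.

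The crucial point, and the reason the beacon improves on conventional ZF, is that a deterministic shared $\bm{c}_{\bar{i}}$ contributes no mismatch in the useful direction, so only $\alpha_i^{(j)}$ (never $\alpha_{\bar{i}}^{(j)}$) enters the scaling. Collecting the two terms gives $|\tilde{\bm{h}}_i^{\He}\bm{t}_{\bar{i}}^{\mathrm{bZF}}|^2\lesssim P^{\,1-\min_{j}\alpha_i^{(j)}}$, hence $\DoF_i\geq\min_{j}\alpha_i^{(j)}$ via \eqref{eq:SM_6}. For the matching converse I would assume w.l.o.g. $\alpha_i^{(1)}\leq\alpha_i^{(2)}$ and anchor instead at TX~$2$: the single-TX residual is then $O(P^{-\alpha_i^{(2)}/2})$, of strictly smaller order, while the discrepancy term scales exactly as $P^{-\alpha_i^{(1)}/2}$ with an almost surely nonzero leading coefficient, so no cancellation occurs and $|\tilde{\bm{h}}_i^{\He}\bm{t}_{\bar{i}}^{\mathrm{bZF}}|^2\sim P^{\,1-\min_j\alpha_i^{(j)}}$. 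This yields $\DoF_i=\min_{j\in\{1,2\}}\alpha_i^{(j)}$, and summing over $i\in\{1,2\}$ gives \eqref{eq:thm_bZF}.

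The step I expect to be the main obstacle is the passage from the pathwise scaling of the leakage to the statement on the \emph{expected} normalized logarithm in \eqref{eq:SM_6}: one must rule out that the expectation is dominated by rare events, such as small $\norm{\bm{h}_i}$, near-degenerate projections $\Pi^{\perp}_{\tilde{\bm{h}}_i^{(j)}}(\bm{c}_{\bar{i}})$, or atypically large quantization errors, which requires the same integrability and tail control already carried out for conventional ZF in Appendix~\ref{se:proof_thm_DoF_cZF}. The remaining delicate point is justifying the relation $\norm{\bm{g}^{(1)}-\bm{g}^{(2)}}\sim\norm{\tilde{\bm{h}}_i^{(1)}-\tilde{\bm{h}}_i^{(2)}}$ and the almost-sure non-degeneracy of the leading discrepancy coefficient, both of which hinge precisely on $\bm{c}_{\bar{i}}$ being fixed independently of the random channels.
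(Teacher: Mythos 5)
Your argument is correct, but it takes a genuinely different route from the paper. The paper's proof of Corollary~\ref{corollary_DoF_bZF_two_1} is a two-line reduction: it observes that beacon ZF simply replaces the projected direct-channel estimate $\tilde{\bm{h}}_i^{(j)}$ in \eqref{eq:def_cZF_two} with the error-free, commonly known vector $\bm{c}_i$, so the DoF formula of Theorem~\ref{thm_DoF_cZF_two} applies with the direct channel treated as perfectly known --- only the estimates of the channel being nulled survive in the minimum, giving $\DoF_i=\min_j\alpha_i^{(j)}$ per RX. You instead re-derive the scaling from scratch: your decomposition of the leakage into a single-TX residual $\bm{\delta}_i^{(1)\He}\bm{g}^{(1)}$ (order $P^{-\alpha_i^{(1)}/2}$) plus an inter-TX discrepancy $\{\tilde{\bm{h}}_i\}_2^*(\{\bm{g}^{(2)}\}_2-\{\bm{g}^{(1)}\}_2)$ (order $P^{-\min_j\alpha_i^{(j)}/2}$) is essentially a beacon-specific replay of the machinery in Appendix~\ref{se:proof_thm_DoF_cZF}, where the paper likewise anchors mixed-TX beamformers via the matrices $\mathbf{E}_j$ and bounds $\|\bm{u}_k^{(j)}-\bm{u}_k^{*}\|$, and where your converse's non-degeneracy step (a.s.\ nonzero entry of the discrepancy direction) parallels the paper's isotropy argument for $|\tilde{\bm{h}}_i^{\He}\bm{a}_k||\bm{e}_1^{\He}\bm{b}_k^{(1)}|$ in \eqref{eq:proof_5_3}. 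What your route buys is a self-contained proof that makes the mechanism explicit --- the beacon contributes no error in the projected direction, so only the accuracy of the nulled channel's estimates (never $\alpha_{\bar i}^{(j)}$) enters, and the min over TXs arises precisely from the inter-TX inconsistency term. The cost is that you must separately justify the generic bi-Lipschitz behavior of the normalized projection map $\tilde{\bm{h}}_i^{(j)}\mapsto\bm{g}^{(j)}$ (note the map depends only on the complex line spanned by the estimate, so local injectivity relies on the phase-fixing convention of the quantizer in Appendix~\ref{se:Appendix_RVQ}) and redo the tail/integrability control that the paper's reduction inherits for free from the already-proven Theorem~\ref{thm_DoF_cZF_two}; you correctly flag both of these as the delicate steps.
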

\begin{proof}
The number of DoFs follows easily from Theorem~\ref{thm_DoF_cZF_two}. Indeed, when using beacon ZF, no error is induced by the projection of the direct channel which is replaced by a fixed given vector. In terms of number of DoFs, there is no difference between projecting the direct channel or any given vector. Thus, it is possible to apply the formula for the number of DoFs in Theorem~\ref{thm_DoF_cZF_two} considering that the direct channel is perfectly known, which yields the result.
\end{proof}
%Similar to the conventional ZF scheme, we can easily obtain the following result.
%\begin{corollary}
%Beacon ZF achieves the same number of DoFs as a setting with perfect communication between the TXs if and only if $\forall i\in\{1,2\},\alpha_i^{(1)}=\alpha_i^{(2)}$. Thus, the discrepancies in the channel realizations between the TXs do not reduce the number of DoFs when the TXs have channel estimates of the same quality.
%\label{corollary_DoF_bZF_two_2}
%\end{corollary}
%\begin{proof}
%Let consider that the TXs are allowed to cooperate, then any of the estimates can be used and the other estimate can be thrown away. The channel used in the orthogonality constraint is then known with the given accuracy and no ZF scheme can improve this accuracy. Thus, the maximal accuracy is achieved using beacon ZF.
%\end{proof}
The key idea behind beacon ZF is to reduce the impact of the differences in CSI quality by using only the CSI necessary to fulfill the orthogonality constraint. Thus, the direct channel, which does not change the number of DoFs but only improves the finite SNR performance, is not used. It follows then that $\bm{t}_1^{\mathrm{bZF}}$ does no depend on the estimates of $\tilde{\bm{h}}_1$, and symmetrically $\bm{t}_2^{\mathrm{bZF}}$ does not depend on the estimates of $\tilde{\bm{h}}_2$. 
%Note that ZF brings an improvement of the rate offset compared to limited ZF only when the TXs have more than one antenna, since in that case classical ZF leads to chose a vector in the nullspace of $\bm{h}_{\bar{i}}$ giving a good gain at RX~$i$. With only two TXs,  the nullspace is of dimension one and classical ZF can only perform worse than limited ZF.

\subsection{Active-Passive Zero Forcing}\label{se:ZF_AP_two}
Beacon ZF improves the number of DoFs but it is still the worst CSI scaling across the TXs (although no longer across the RXs) which defines the number of DoFs. To improve further the number of DoFs, we propose a scheme called \emph{Active-Passive Zero Forcing (AP ZF)}. Assuming w.l.o.g. that $\alpha_{\bar{i}}^{(2)}\geq \alpha_{\bar{i}}^{(1)}$, AP ZF consists in the precoder whose beamformer $\bm{t}_i^{\mathrm{AP ZF}}$ transmitting symbol~$s_i$ is given by 
\begin{align}
\bm{t}_i^{\mathrm{APZF}}\!&\triangleq\!\sqrt{\frac{P}{2\log_2(P)}}\!\begin{bmatrix}1\\
-\frac{\{\tilde{\bm{h}}_{\bar{i}}^{(2)}\}_1}{\{\tilde{\bm{h}}_{\bar{i}}^{(2)}\}_2}\!
\end{bmatrix} \\
&=\!\sqrt{\!\frac{P(1\!+\!\rho_{i}^{(2)})}{2\log_2(P)}}\bm{u}_i^{\mathrm{APZF}}\!
\label{eq:def_apZF_two}
\end{align}
where 
\begin{equation}
\bm{u}_i^{\mathrm{APZF}}\triangleq 
\frac{\begin{bmatrix}1&
\frac{-\{\tilde{\bm{h}}_{\bar{i}}^{(2)}\}_1}{\{\tilde{\bm{h}}_{\bar{i}}^{(2)}\}_2}\!
\end{bmatrix} ^{\trans}}
{\left\|\begin{bmatrix}1&
\frac{-\{\tilde{\bm{h}}_{\bar{i}}^{(2)}\}_1}{\{\tilde{\bm{h}}_{\bar{i}}^{(2)}\}_2}\!
\end{bmatrix}^{\trans} \right\|}
\end{equation}
and $\rho_{i}^{(2)}\!\triangleq \!|\{\tilde{\bm{h}}_{\bar{i}}^{(2)}\}_1|^2/|\{\tilde{\bm{h}}_{\bar{i}}^{(2)}\}_2|^2$.

AP ZF is based on the idea that each beamforming vector has to fulfill only one orthogonality constraint so that only one available variable is necessary. Thus, one coefficient can be set to a constant while still fulfilling the ZF constraints. Moreover, the only way to achieve the number of DoFs stemming from the best CSI estimate is if TX~$2$ (which has the best knowledge of $\tilde{\bm{h}}_1$) can adapt to the coefficient transmitted at TX~$1$ to adjust its beamforming vector and improves the accuracy with which the interference are suppressed. This is possible only if TX~$2$ knows the transmit coefficient at TX~$1$. 

Using this precoding scheme, the number of DoFs is then given in the following proposition.

\begin{proposition} 
Active-Passive ZF achieves the number of DoFs:
\begin{equation}
\DoF^{\mathrm{APZF}}\geq\max_{j\in[1,2]}\alpha_1^{(j)}+\max_{j\in[1,2]}\alpha_2^{(j)}.
\end{equation}
\label{prop_DoF_apZF_two}
\end{proposition}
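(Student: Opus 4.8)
The plan is to prove the bound receiver by receiver, establishing $\DoF_i\geq\max_{j\in\{1,2\}}\alpha_i^{(j)}$ for each $i$ and then summing over $i\in\{1,2\}$. I would start from the DoF expression \eqref{eq:SM_6}, which for two users reduces the task to controlling the single interference term $|\bm{h}_i^{\He}\bm{t}_{\bar{i}}|^2$ that RX~$i$ receives from the beamformer carrying the other symbol. The structural fact driving the proof is that $\bm{t}_{\bar{i}}$ is built so that one TX emits a fixed, channel-independent coefficient (the \emph{passive} TX) while the other, the \emph{active} TX, chooses its coefficient to cancel the interference as seen through its \emph{own} estimate $\tilde{\bm{h}}_i^{(2)}$; relabelling if needed, we take the active TX to be the better estimator of $\tilde{\bm{h}}_i$, so its error scales as $P^{-\max_j\alpha_i^{(j)}}$. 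Since the passive coefficient is known a priori to the active TX, the latter can null its own estimate exactly, so that the residual interference is governed by the active TX's estimation error alone -- this is precisely what upgrades the $\min$ over TXs of conventional and beacon ZF to a $\max$.

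Concretely, I would use the orthogonality of $\bm{t}_{\bar{i}}$ to the active TX's estimate $\tilde{\bm{h}}_i^{(2)}$ built into \eqref{eq:def_apZF_two} to reduce the interference to $\bm{h}_i^{\He}\bm{t}_{\bar{i}}=\norm{\bm{h}_i}\,\bm{\delta}_i^{(2)\He}\bm{t}_{\bar{i}}$, where $\bm{\delta}_i^{(2)}\triangleq\tilde{\bm{h}}_i^{(2)}-\tilde{\bm{h}}_i$ is the active TX's quantization error (only this error survives the nulling), and then apply Cauchy--Schwarz to get $|\bm{h}_i^{\He}\bm{t}_{\bar{i}}|^2\leq\norm{\bm{h}_i}^2\,\norm{\bm{\delta}_i^{(2)}}^2\,\norm{\bm{t}_{\bar{i}}}^2$. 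Taking $\log_2$ and expectations, the gain factor contributes $O(1)$, and the error factor is bounded by Jensen's inequality, $\E[\log_2\norm{\bm{\delta}_i^{(2)}}^2]\leq\log_2\E[\norm{\bm{\delta}_i^{(2)}}^2]\sim-\max_j\alpha_i^{(j)}\log_2(P)$, which invokes the RVQ error scaling of Appendix~\ref{se:Appendix_RVQ}.

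The step I expect to be the main obstacle is the treatment of the random power factor $\norm{\bm{t}_{\bar{i}}}^2=\tfrac{P}{2\log_2(P)}(1+\rho_{\bar{i}}^{(2)})=\tfrac{P}{2\log_2(P)}\,|\{\tilde{\bm{h}}_i^{(2)}\}_2|^{-2}$, which is heavy-tailed: as the second entry of a near-isotropic unit vector can be arbitrarily small, $\E[\rho_{\bar{i}}^{(2)}]=\infty$. Two things must be reconciled. First, feasibility of the sum-power constraint: $\norm{\mathbf{T}}_{\Fro}^2>P$ forces $|\{\tilde{\bm{h}}_i^{(2)}\}_2|^2<1/\log_2(P)$ for one of the two beamformers, an event of probability $O(1/\log_2(P))\to 0$ by a union bound; on this vanishing-probability set I would simply switch the precoder off, which cannot alter the DoF. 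Second, and crucially, although $\E[\rho]=\infty$ the \emph{logarithmic} average stays finite, $\E[\log_2(1+\rho_{\bar{i}}^{(2)})]=-\E[\log_2|\{\tilde{\bm{h}}_i^{(2)}\}_2|^2]=O(1)$, because $|\{\tilde{\bm{h}}_i^{(2)}\}_2|^2$ is asymptotically uniform on $[0,1]$ and $\int_0^1\log_2(u)\intd u$ converges. Hence the backoff adds only $O(1)$, and collecting all terms gives $\E[\log_2|\bm{h}_i^{\He}\bm{t}_{\bar{i}}|^2]\leq(1-\max_j\alpha_i^{(j)})\log_2(P)+o(\log_2(P))$.

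It remains to check that the useful signal still carries one full DoF despite the $\log_2(P)$ backoff. Since $\tilde{\bm{h}}_i$ is independent of $\tilde{\bm{h}}_{\bar{i}}^{(2)}$, its projection onto the one-dimensional beamforming direction is bounded away from zero with high probability, so $|\bm{h}_i^{\He}\bm{t}_i|^2\gtrsim P/\log_2(P)$ and $\log_2|\bm{h}_i^{\He}\bm{t}_i|^2/\log_2(P)\to 1$; this confirms that the leading term of \eqref{eq:SM_6} is indeed $1$. Combining with the interference scaling yields $\DoF_i\geq 1-(1-\max_j\alpha_i^{(j)})=\max_j\alpha_i^{(j)}$, and summing over the two users gives $\DoF^{\mathrm{APZF}}\geq\max_j\alpha_1^{(j)}+\max_j\alpha_2^{(j)}$. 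The inequality rather than equality reflects that the Cauchy--Schwarz interference bound need not be tight at finite SNR.
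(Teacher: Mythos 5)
Your proposal is correct, and its skeleton coincides with the paper's proof of Proposition~\ref{prop_DoF_apZF_two}: both work per receiver from \eqref{eq:SM_6}, exploit that the beamformer \eqref{eq:def_apZF_two} is exactly orthogonal to the \emph{active} TX's estimate so that only that TX's quantization error survives in the leakage $|\bm{h}_i^{\He}\bm{t}_{\bar{i}}|^2$, and both absorb $\|\bm{h}_i\|^2$ and the heavy-tailed factor $1+\rho$ into $O(1)$ additive log terms. Where you genuinely diverge is the final bounding step: the paper computes the residual \emph{exactly} as $\frac{P(1+\rho_{2}^{(2)})}{2\log_2(P)}\norm{\bm{h}_1}^2\sin^2(\tilde{\bm{h}}_1,\tilde{\bm{h}}_1^{(2)})$ and then invokes the log-distortion bound of Proposition~\ref{App_log_distorsion}, i.e., $\E[-\log_2(\min_{\bm{c}}\sin^2)]\gtrsim B_1^{(2)}$, whereas you bound the leakage by Cauchy--Schwarz through $\|\bm{\delta}_i^{(2)}\|^2$ and then use Jensen together with only the mean-distortion bound of Proposition~\ref{App_distorsion}, via $\E[\log_2\|\bm{\delta}_i^{(2)}\|^2]\leq\log_2\E[\|\bm{\delta}_i^{(2)}\|^2]$ and $\|\bm{\delta}_i^{(2)}\|^2\leq 2\sin^2(\angle(\tilde{\bm{h}}_i^{(2)},\tilde{\bm{h}}_i))$ (cf.\ \eqref{eq:lemma_2_4}). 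This is legitimate precisely because the proposition only claims a \emph{lower} bound on the DoF, so the concavity of the logarithm points the right way; the trade-off is that your route cannot be upgraded to a matching DoF upper bound (there Jensen fails and the paper's harder $\E[\log]$ lemma becomes indispensable, as in the proof of Theorem~\ref{thm_DoF_cZF}), while it buys a more elementary argument that sidesteps the log-distortion machinery. You are also more careful than the paper on two points it glosses over: the per-realization feasibility of the power constraint (your switch-off on the event $|\{\tilde{\bm{h}}_i^{(2)}\}_2|^2<1/\log_2(P)$ of vanishing probability costs only $O(1)$ ergodic rate, hence no DoF --- though note that only the active TX need act, since the passive coefficient $\sqrt{P/(2\log_2(P))}$ never violates the constraint by itself) and the verification that the signal term still carries a full DoF despite the $\log_2(P)$ back-off, which the paper leaves implicit in \eqref{eq:SM_6}.
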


\begin{proof}
By symmetry, we consider w.l.o.g. the number of DoFs at RX~$1$, and we assume that the beamformers $\bm{t}_1$ and $\bm{t}_2$ are given by \eqref{eq:def_apZF_two}. We still assume w.l.o.g. that $\alpha_{1}^{(2)}\geq \alpha_{1}^{(1)}$, i.e., TX~$2$ has the best CSI over $\tilde{\bm{h}}_1$. From \eqref{eq:SM_6}, the number of DoFs at RX~$1$ is
\begin{equation}
\DoF_1\!
\!= 1-\lim_{P\rightarrow \infty}\frac{\E_{\mathbf{H},\{\mathcal{W}_{i,j}\}}\left[\log_2(|\bm{h}_1^{\He}\bm{t}_2|^2)\right]}{\log_2(P)}
\label{eq:lemma_apZF_two_1}
\end{equation}
We now focus on the interference term:
\begin{equation}
|\bm{h}_1^{\He}\bm{t}_2|^2=\frac{P}{2\log_2(P)}\left|\bm{h}_1^{\He}\begin{bmatrix}1\\ -\frac{\{\tilde{\bm{h}}_{1}^{(2)}\}_1}{\{\tilde{\bm{h}}_{1}^{(2)}\}_2}\end{bmatrix}\right|^2.
\label{eq:lemma_apZF_two_2}
\end{equation}
By construction, $\bm{t}_2$ is orthogonal to $\bm{h}_1^{(2)}$, so that 
\begin{align}
|\bm{h}_1^{\He}\bm{t}_2|^2 
&=\frac{P(1+\rho_{2}^{(2)})}{2\log_2(P)}\norm{\bm{h}_1}^2 \left|\Pi_{\tilde{\bm{h}}_1^{(2)}}^{\perp}(\tilde{\bm{h}}_1)^{\He}\bm{u}_2 +\left(\tilde{\bm{h}}_1^{(2)\He}\tilde{\bm{h}}_1\right)\tilde{\bm{h}}_1^{(2)\He}\bm{u}_2\right|^2\\ 
&=\frac{P(1+\rho_{2}^{(2)})}{2\log_2(P)}\norm{\bm{h}_1}^2 \sin^2(\tilde{\bm{h}}_1,\tilde{\bm{h}}_1^{(2)}).
\label{eq:lemma_apZF_two_3}
\end{align}
Inserting \eqref{eq:lemma_apZF_two_3} in the DoFs expression \eqref{eq:lemma_apZF_two_1} and using Proposition~\ref{App_log_distorsion} from Appendix~\ref{se:Appendix_RVQ} to bound the expectation of the sinus, we obtain 
\begin{align} 
\DoF_1%&\geq1-\lim_{P\rightarrow \infty}\frac{\E_{\mathbf{H},\{\mathcal{W}_{i,j}\}}\left[\log_2(\mathcal{I}_1(\bm{t}_{2}))\right]}{\log_2(P)}\\
&\geq\lim_{P\rightarrow \infty} \frac{E_{\mathbf{H},\{\mathcal{W}_{i,j}\}}\left[-\log_2\left(\sin^2(\tilde{\bm{h}}_1,\tilde{\bm{h}}_1^{(2)})\right)\right]}{\log_2(P)}\label{eq:lemma_apZF_two_4_1}\\ 
&\geq\lim_{P\rightarrow \infty} \frac{B_1^{(2)}}{\log_2(P)}\label{eq:lemma_apZF_two_4_2}\\
&= \alpha_1^{(2)} \label{eq:lemma_apZF_two_4_3}
\end{align} 
which is the best scaling across the TXs.
\end{proof}
Comparing the number of DoFs achieved with AP ZF with the number of DoFs achieved when both TXs share the estimate of a channel vector with the highest accuracy gives the following result.
\begin{theorem} 
Active-Passive ZF achieves the same number of DoFs in the $2$-user DCSI MIMO channel as in the conventional MIMO BC where both TXs share the estimates with the highest CSI accuracy.
\label{thm_DoF_apZF_two}
\end{theorem}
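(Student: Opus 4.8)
The plan is to show that the benchmark in the statement—the conventional two-antenna MIMO BC in which both TXs share, for each channel $\tilde{\bm{h}}_i$, the more accurate of the two estimates—has a number of DoFs that coincides exactly with $\DoF^{\mathrm{APZF}}$. First I would fix the benchmark value. In the enhanced BC the estimate of $\tilde{\bm{h}}_i$ available at both (now colocated) TXs has scaling $\max_{j\in\{1,2\}}\alpha_i^{(j)}$, so applying Theorem~\ref{thm_DoF_BC} with $M=2$ and effective scalings $\alpha_i=\max_{j\in\{1,2\}}\alpha_i^{(j)}$ yields a number of DoFs equal to $\max_{j\in\{1,2\}}\alpha_1^{(j)}+\max_{j\in\{1,2\}}\alpha_2^{(j)}$. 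Proposition~\ref{prop_DoF_apZF_two} already shows $\DoF^{\mathrm{APZF}}\geq\max_{j\in\{1,2\}}\alpha_1^{(j)}+\max_{j\in\{1,2\}}\alpha_2^{(j)}$, i.e., AP ZF is no worse than the benchmark; the remaining and only substantive task is the matching converse $\DoF^{\mathrm{APZF}}\leq\max_{j\in\{1,2\}}\alpha_1^{(j)}+\max_{j\in\{1,2\}}\alpha_2^{(j)}$.

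For the converse I would revisit the interference computation in the proof of Proposition~\ref{prop_DoF_apZF_two}, which is an exact identity rather than a bound. Keeping w.l.o.g. $\alpha_1^{(2)}\geq\alpha_1^{(1)}$, the interference seen at RX~$1$ satisfies $|\bm{h}_1^{\He}\bm{t}_2|^2=\tfrac{P(1+\rho_2^{(2)})}{2\log_2(P)}\norm{\bm{h}_1}^2\sin^2(\tilde{\bm{h}}_1,\tilde{\bm{h}}_1^{(2)})$, because $\bm{t}_2$ is built to be exactly orthogonal to the \emph{best} estimate $\tilde{\bm{h}}_1^{(2)}$ of $\tilde{\bm{h}}_1$. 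Feeding this into the DoFs expression~\eqref{eq:SM_6} and noting that the prefactors $P/(2\log_2(P))$, $\norm{\bm{h}_1}^2$ and $1+\rho_2^{(2)}$ contribute nothing to the limit—$\log_2(P)$ cancels the $P$, while $\E[|\log_2\norm{\bm{h}_1}^2|]$ and $\E[|\log_2(1+\rho_2^{(2)})|]$ stay bounded in $P$ and are divided by $\log_2(P)\to\infty$—I would obtain $\DoF_1=-\lim_{P\to\infty}\E[\log_2\sin^2(\tilde{\bm{h}}_1,\tilde{\bm{h}}_1^{(2)})]/\log_2(P)$ as an exact equality. Applying the \emph{two-sided} estimate of the expected log-distortion from the RVQ analysis (Proposition~\ref{App_log_distorsion} and Appendix~\ref{se:Appendix_RVQ}) then pins this limit to exactly $\alpha_1^{(2)}=\max_{j\in\{1,2\}}\alpha_1^{(j)}$, not merely bounds it from below. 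By the symmetric argument $\DoF_2=\max_{j\in\{1,2\}}\alpha_2^{(j)}$, and summing gives the converse, hence equality with the benchmark.

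The main obstacle is the converse in the second step: Proposition~\ref{prop_DoF_apZF_two} only exploits the \emph{lower} bound on $\E[-\log_2\sin^2(\tilde{\bm{h}}_1,\tilde{\bm{h}}_1^{(2)})]$, whereas equality requires a matching \emph{upper} bound, i.e., the RVQ distortion analysis must be shown to give the scaling $\alpha_1^{(2)}$ from both sides. A secondary, more routine, point is to confirm that the random prefactors $1+\rho_2^{(2)}$ and $\norm{\bm{h}_1}^2$ are DoF-neutral; this reduces to verifying that their logarithms have expectations that remain $O(1)$ uniformly in $P$, which holds because $\tilde{\bm{h}}_1^{(2)}\to\tilde{\bm{h}}_1$ with an isotropic limiting direction and $\norm{\bm{h}_1}^2$ is $\chi^2$-distributed, both having finite log-moments. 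Once these two facts are in place the equality follows immediately from Theorem~\ref{thm_DoF_BC}.
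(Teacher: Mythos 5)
Your proposal is correct and follows essentially the same route as the paper, which states Theorem~\ref{thm_DoF_apZF_two} as a direct comparison of the achievability result in Proposition~\ref{prop_DoF_apZF_two} with the benchmark value $\max_{j}\alpha_1^{(j)}+\max_{j}\alpha_2^{(j)}$ obtained from Theorem~\ref{thm_DoF_BC}. Your explicit converse---reusing the exact interference identity $|\bm{h}_1^{\He}\bm{t}_2|^2=\tfrac{P(1+\rho_2^{(2)})}{2\log_2(P)}\norm{\bm{h}_1}^2\sin^2(\tilde{\bm{h}}_1,\tilde{\bm{h}}_1^{(2)})$ together with the \emph{two-sided} log-distortion bound of Proposition~\ref{App_log_distorsion} and the boundedness of the log-moments of the prefactors---simply makes rigorous the matching upper bound that the paper leaves implicit in its terse derivation, so it is a welcome completion rather than a departure.
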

%\begin{proof}
%Let assume that the sharing of the channel estimates is allowed between the TXs. The averaging over the two realizations reduces the variance of the estimation error by half but does not change the scaling in the SNR~$P$. Following the perfect cooperation between the TXs, the TXs share the same channel estimates and the formula for the number of DoFs in the conventional MIMO-BC channel given in Theorem~\ref{thm_DoF_BC} can be used. This number of DoFs can be seen to be equal to the number of DoFs achieved with AP ZF given in Proposition~\ref{prop_DoF_apZF_two}. Thus, AP ZF achieves the same number of DoFs as an upper bound and is therefore number of DoFs maximizing.
%\end{proof}
\paragraph*{Improved scheme at finite SNR}
AP ZF allows to recover the number of DoFs which would have been achieved with the best CSI across the TXs. However, the choice of the coefficient used to transmit at TX~$1$ (with the lowest accuracy of the CSI) remains to be discussed. In fact, the beamformer can be multiplied arbitrarily by any unit-norm complex number without impacting the rate achieved so that only the power used at TX~$1$ needs to be decided. According to \eqref{eq:def_apZF_two}, the power used at TX~$1$ is set to $P/(2\log_2(P))$. 

The normalization by~$\log_2(P)$ is done because the fading coefficient $\{\tilde{\bm{h}}_{1}\}_2$ might have a very small amplitude. In this case it would be necessary for TX~$2$ to transmit with a very large power to fulfill the orthogonality constraint. To ensure that the interference are canceled for all channel realizations while respecting the power constraint, it is necessary to have the ratio between the power used at TX~$1$ and the sum power constraint tending to zero. The factor $\log_2(P)$ is used because it fulfills this property while not reducing the number of DoFs due to the partial power consumption. 

However, this comes at the cost of using only a small share of the available power, which is clearly inefficient and leads to a rate offset tending to minus infinity. To avoid this behavior, we propose that the TX with the worst CSI accuracy adapts its power consumption with respect to the channel realizations. In the following, we propose two possible solutions to improve the performance at finite SNR:
\begin{itemize}
\item 
Firstly, TX~$1$ can use its local CSI to normalize the beamformer which is then given by
\begin{equation}
\bm{t}_i^{\mathrm{APZF}}=\sqrt{\frac{P}{2}}\begin{bmatrix}\frac{1}{\sqrt{1+\rho_i^{(1)}}}\\
-\frac{\{\tilde{\bm{h}}_{\bar{i}}^{(2)}\}_1}{\sqrt{1+\rho_i^{(2)}}\{\tilde{\bm{h}}_{\bar{i}}^{(2)}\}_2}
\end{bmatrix}
\label{eq:def_hapZF}
\end{equation}
with $\rho_{i}^{(j)}\triangleq |\{\tilde{\bm{h}}_{\bar{i}}^{(j)}\}_1|^2/|\{\tilde{\bm{h}}_{\bar{i}}^{(j)}\}_2|^2$, for~$j=1,2$. This beamformer is not DoFs maximizing because the local CSI is used at TX~$1$ so that TX~$2$ does not any longer have an exact knowledge of the coefficient used to transmit at TX~$1$. Consequently, beamformer $\bm{t}_i^{\mathrm{APZF}}$ is not any longer orthogonal to $\tilde{\bm{h}}_{\bar{i}}^{(2)}$. Yet, this solution achieves good performance at intermediate SNR.

\item Another possibility is to assume that TX~$1$ receives the scalar $\rho_i^ {(2)}$ (or $\rho_i$) and use it to control its power. This means that TX~$2$ needs to share this scalar. This requires an additional feedback, but only a few bits are necessary to improve the performance at practical SNR.
\end{itemize}
%%%%%%%%%%%%%%%%%%%%%%%
%%%%%%%%%%%%%%%%%%%%%%%
%%%%%%%%%%%%%%%%%%%%%%%
%%%%%%%%%%%%%%%%%%%%%%%
%%%%%%%%%%%%%%%%%%%%%%%
%%%%%%%%%%%%%%%%%%%%%%%
%%%%%%%%%%%%%%%%%%%%%%%
%%%%%%%%%%%%%%%%%%%%%%%
%%%%%%%%%%%%%%%%%%%%%%%
\section{Zero Forcing in the DCSI-MIMO Channel for Arbitrary Number of Users}\label{se:ZF}
%%%%%%%%%%%%%%%%%%%%%%%
%%%%%%%%%%%%%%%%%%%%%%%
%%%%%%%%%%%%%%%%%%%%%%%
%%%%%%%%%%%%%%%%%%%%%%%
%%%%%%%%%%%%%%%%%%%%%%%
%%%%%%%%%%%%%%%%%%%%%%%
%%%%%%%%%%%%%%%%%%%%%%%
%%%%%%%%%%%%%%%%%%%%%%%
%%%%%%%%%%%%%%%%%%%%%%%

In this section, we will show how the main results can be generalized to arbitrary number of users. The same approach as in the case $K=2$ can be followed and we start by briefly generalizing to arbitrary number of users the precoding schemes previously described.

\subsection{Conventional Zero Forcing}
The conventional ZF precoder will be denoted as $\mathbf{T}^{\mathrm{cZF}}\triangleq[\bm{t}_1^{\mathrm{cZF}},\ldots,\bm{t}_K^{\mathrm{cZF}}]$  with $\bm{t}_i^{\mathrm{cZF}}\triangleq [\bm{e}_1^{\trans}\bm{t}_i^ {\mathrm{cZF}(1)},\bm{e}_2^{\trans}\bm{t}_i^ {\mathrm{cZF}(2)},\ldots,\bm{e}_K^{\trans}\bm{t}_i^ {\mathrm{cZF}(K)}]^{\trans}$ transmitting symbol $s_i$, and the beamformer $\bm{t}_i^{\mathrm{cZF}(j)}$ computed at TX~$j$ to transmit symbol~$i$ given by
\begin{equation} 
\bm{t}_i^ {\mathrm{cZF}(j)}\triangleq\sqrt{\frac{P}{K}} \frac{\Pi_{\bar{\mathbf{H}}_i^{(j)}}^{\perp}(\tilde{\bm{h}}_i^{(j)})}{\norm{\Pi_{\bar{\mathbf{H}}_i^{(j)}}^{\perp}(\tilde{\bm{h}}_i^{(j)})}}
\label{eq:def_cZF}
\end{equation}
with $\bar{\mathbf{H}}_i^{(j)}\triangleq [\tilde{\bm{h}}_1^{(j)},\ldots,\tilde{\bm{h}}_{i-1}^{(j)},\tilde{\bm{h}}_{i+1}^{(j)},\ldots,\tilde{\bm{h}}_K^{(j)}]$.

We can then generalize the results from~Theorem~\ref{thm_DoF_cZF_two} to an arbitrary number of users.
 
\begin{theorem}
In the DCSI-MIMO channel, the number of DoFs achieved with conventional ZF is equal to 
\begin{equation}
\DoF^{\mathrm{cZF}}=K\min_{i,j\in\{1,\ldots,K\}}\alpha_i^{(j)}.
\label{eq:thm_cZF}
\end{equation}
\label{thm_DoF_cZF}
\end{theorem}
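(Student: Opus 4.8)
The plan is to mirror the two-user argument behind Theorem~\ref{thm_DoF_cZF_two} and to exploit the decoupling already noted after \eqref{eq:SM_4}: since the interference-nulling objective separates across the beamformers and the channel is statistically symmetric in the user index, it suffices to compute $\DoF_i$ for one fixed~$i$ via \eqref{eq:SM_6}, i.e.\ to determine the exponent~$a_i$ with which the residual interference $\sum_{\ell\neq i}|\bm{h}_i^{\He}\bm{t}_\ell^{\mathrm{cZF}}|^2$ scales like~$P^{1-a_i}$; then $\DoF_i=a_i$ and $\DoF^{\mathrm{cZF}}=\sum_i a_i$. Because the sum over~$\ell\neq i$ is a finite sum of nonnegative terms, its log-scaling is governed by the slowest-decaying single term, so I would fix a pair $i\neq\ell$ and analyze $|\bm{h}_i^{\He}\bm{t}_\ell^{\mathrm{cZF}}|^2$ alone.

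The key structural observation, which generalizes the phase-ambiguity mechanism of the $K=2$ case, is that each local beamformer lives in a one-dimensional subspace. Indeed $\bm{t}_\ell^{\mathrm{cZF}(j)}$ of \eqref{eq:def_cZF} is, up to normalization, the projection of $\tilde{\bm{h}}_\ell^{(j)}$ onto the orthogonal complement of the $(K-1)$ estimated channels collected in $\bar{\mathbf{H}}_\ell^{(j)}$; since this complement is one-dimensional in $\mathbb{C}^{K}$, I would write $\bm{t}_\ell^{\mathrm{cZF}(j)}=c_\ell^{(j)}\bm{g}_\ell^{(j)}$, where $\bm{g}_\ell^{(j)}$ is the unit vector spanning $(\bar{\mathbf{H}}_\ell^{(j)})^{\perp}$ and the scalar $c_\ell^{(j)}$ carries the magnitude and phase fixed by $\tilde{\bm{h}}_\ell^{(j)}$ and the power normalization. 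The effective precoder then assembles one coordinate per TX, $\{\bm{t}_\ell^{\mathrm{cZF}}\}_j=\{c_\ell^{(j)}\bm{g}_\ell^{(j)}\}_j$, so that $\bm{h}_i^{\He}\bm{t}_\ell^{\mathrm{cZF}}=\sum_{j=1}^{K}\overline{\{\bm{h}_i\}_j}\,c_\ell^{(j)}\{\bm{g}_\ell^{(j)}\}_j$.

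I would then bound this residual by comparing it to a consistent genie reference in which every TX uses one common estimate of each channel. Under that reference all the $\bm{g}_\ell^{(j)}$ coincide and all phases in $c_\ell^{(j)}$ align, and the assembled vector is exactly orthogonal to the corresponding true channel, so the reference residual vanishes. The true residual is therefore a first-order perturbation driven by two effects: (i)~the local quantization of $\tilde{\bm{h}}_i$ at each TX, which keeps $\bm{g}_\ell^{(j)}$ from being exactly orthogonal to $\bm{h}_i$ and contributes a term of magnitude $P^{-\alpha_i^{(j)}/2}$; and (ii)~the inter-TX discrepancy between the spanning directions $\bm{g}_\ell^{(j)}$ and between the phases in $c_\ell^{(j)}$, which arises because TX~$j$ builds them from its own estimates $\{\tilde{\bm{h}}_m^{(j)}\}_{m\neq\ell}$ and $\tilde{\bm{h}}_\ell^{(j)}$; each pairwise mismatch of these estimates scales as $P^{-\alpha_m^{(j)}/2}$ by the RVQ error analysis of Proposition~\ref{App_log_distorsion}. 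Since the surviving perturbation is a sum of terms each governed by one $\alpha_m^{(j)}$, the slowest-decaying contribution sets the exponent, giving $|\bm{h}_i^{\He}\bm{t}_\ell^{\mathrm{cZF}}|^2$ of order $P^{1-\min_{i',j}\alpha_{i'}^{(j)}}$ and hence $a_i=\min_{i',j\in\{1,\ldots,K\}}\alpha_{i'}^{(j)}$ for every~$i$; summing over the $K$ users yields \eqref{eq:thm_cZF}.

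The main obstacle will be the upper bound, namely proving that the inconsistency terms cannot conspire to cancel and actually realize the worst exponent $\min_{i',j}\alpha_{i'}^{(j)}$ rather than a faster decay. This requires (a)~a controlled perturbation analysis of the one-dimensional null vector $\bm{g}_\ell^{(j)}$ of the near-singular matrix $\bar{\mathbf{H}}_\ell^{(j)}$, most cleanly handled through a cofactor or Cramer's-rule expression so that the dependence on each estimated column becomes explicit; (b)~verifying that the phase-misalignment mechanism identified in the $K=2$ case survives in $K$ dimensions, so that a mismatch in the estimate of any single channel $\tilde{\bm{h}}_m$ at any single TX genuinely leaks into the direction of $\bm{h}_i$; and (c)~passing from a typical-value scaling statement to the expectation of the logarithm in \eqref{eq:SM_6}, which I would do with the moment bounds on $-\log_2\sin^2(\cdot)$ from Appendix~\ref{se:Appendix_RVQ}, ensuring that rare near-alignment events and small denominator coordinates $\{\tilde{\bm{h}}_\ell^{(j)}\}$ do not distort the exponent.
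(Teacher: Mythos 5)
Your achievability half (the DoF lower bound) follows essentially the paper's route: the paper decomposes the effective beamformer around the \emph{perfect-CSI} ZF beamformer, $\bm{u}_k=\bm{u}_k^{*}+\sum_{j=1}^{K}\mathbf{E}_j(\bm{u}_{k}^{(j)}-\bm{u}_{k}^{*})$ with $\mathbf{E}_j=\diag(\bm{e}_j)$, controls each per-TX deviation by Lemma~\ref{lemma_MSE} -- obtained from the first-order expansion $\mathbf{H}^{-1}-(\tilde{\mathbf{H}}^{(j)})^{-1}=-\mathbf{H}^{-1}\bm{\Delta}^{(j)}\mathbf{H}^{-1}+o(\norm{\bm{\Delta}^{(j)}}_{\Fro})$, showing $\E[\log_2\|\bm{u}_{k}^{(j)}-\bm{u}_{k}^{*}\|^2]$ matches the worst chordal error at TX~$j$ up to $O(1)$ -- and passes to the expectation of the logarithm via Proposition~\ref{App_log_distorsion}, all of which your sketch anticipates. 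One slip to fix: your ``genie reference'' in which every TX uses one \emph{common estimate} is not interference-free; a common-estimate ZF beamformer is orthogonal to the estimated channel, not the true one, and leaks at order $P^{1-\alpha}$ (this is precisely the conventional-BC situation of Theorem~\ref{thm_DoF_BC}). The reference with vanishing residual must be the perfect-CSI beamformer $\bm{u}_k^{*}$; your effect~(i) then absorbs the ordinary quantization leakage, so your two-mechanism decomposition is salvageable, but as written the reference is misidentified.

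The genuine gap is the converse, which you correctly name as the main obstacle but do not resolve, and the key idea you are missing makes your proposed cofactor/Cramer's-rule machinery unnecessary. Because TX~$j$ physically transmits only coordinate~$j$ of the effective beamformer, the per-TX deviations $\mathbf{E}_j(\bm{u}_{k}^{(j)}-\bm{u}_{k}^{*})$ have \emph{disjoint supports}, so the norm of their sum is lower bounded by the magnitude of any single coordinate, in particular by $\|\mathbf{E}_1(\bm{u}_{k}^{(1)}-\bm{u}_{k}^{*})\|$ for the TX holding the smallest scaling coefficient: cancellation \emph{across} TXs is structurally impossible. The paper then factorizes the residual as $|\tilde{\bm{h}}_i^{\He}\bm{a}_k|\,|\bm{e}_1^{\He}\bm{b}_k^{(1)}|\,\|\bm{u}_{k}^{(1)}-\bm{u}_{k}^{*}\|$ as in \eqref{eq:proof_5_3}, where $\bm{a}_k$ and $\bm{b}_k^{(1)}$ are isotropically distributed unit vectors whose log-inner-products have finite expectation, so rare near-alignments cannot alter the exponent; and the lower-bound half of Lemma~\ref{lemma_MSE} (factoring out the largest error norm as in \eqref{eq:lemma_5}--\eqref{eq:lemma_6} and invoking integrability of $\log\lambda_{\min}(\mathbf{H})$ from \cite{Edelman1989}) rules out conspiratorial cancellation \emph{within} a single TX's perturbation among the errors on the different channel vectors. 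Your items (a)--(c) would still have to establish both of these anti-cancellation facts, and a cofactor expansion of the null direction by itself proves neither; as it stands, your argument only yields $\DoF^{\mathrm{cZF}}\geq K\min_{i,j\in\{1,\ldots,K\}}\alpha_i^{(j)}$, not the equality claimed in \eqref{eq:thm_cZF}.
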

\begin{IEEEproof}
A detailed proof is provided in Appendix~\ref{se:proof_thm_DoF_cZF}.
\end{IEEEproof} 
In Theorem~\ref{thm_DoF_cZF}, we have shown that the results concerning conventional ZF can be exactly generalized and the number of DoFs scales with the worst CSI accuracy across the TXs and the RXs. Indeed, the bad estimation of the channel to \emph{one} user at \emph{one} TX reduces the number of DoFs of \emph{all} the users. This is very pessimistic and represents a different behavior as in the conventional multiple-antennas BC. This can be observed by comparing the number of DoFs for the conventional MIMO BC in \eqref{eq:BC_2} with the formula for the number of DoFs in the DCSI-MIMO channel given in \eqref{eq:thm_cZF} when $\forall i,j=1,\ldots,K, \alpha_i^{(j)}=\alpha_i$, i.e., the CSI qualities are the same at all the TXs

\subsection{Beacon Zero Forcing}
The beacon ZF precoder is denoted as $\mathbf{T}^{\mathrm{bZF}}\triangleq [\bm{t}_1^{\mathrm{bZF}},\bm{t}_2^{\mathrm{bZF}}\ldots,\bm{t}_K^{\mathrm{bZF}}]$ with the beamformer $\bm{t}_i^{\mathrm{bZF}}\triangleq [\bm{e}^{\trans}_1\bm{t}_i^ {\mathrm{bZF}(1)},\bm{e}^{\trans}_2\bm{t}_i^ {\mathrm{bZF}(2)},\ldots,\bm{e}^{\trans}_K\bm{t}_i^ {\mathrm{bZF}(K)}]^{\trans}$ transmitting symbol~$s_i$. The beamformer~$\bm{t}_i^{\mathrm{bZF}(j)}$ computed at TX~$j$ to transmit symbol~$s_i$ is given by
\begin{equation}   
\bm{t}_i^ {\mathrm{bZF}(j)}\triangleq\sqrt{\frac{P}{K}} \frac{\Pi_{\bar{\mathbf{H}}_i^{(j)}}^{\perp}(\bm{c}_i)}{\norm{\Pi_{\bar{\mathbf{H}}_i^{(j)}}^{\perp}({\bm{c}}_i)}}
\label{eq:def_bZF}
\end{equation}
where $\bm{c}_{i}$ is any non-zero vector chosen beforehand and known at all TXs.

\begin{proposition}
The number of DoFs achieved with beacon ZF is equal to
\begin{equation}
\DoF^{\mathrm{bZF}}=\sum_{k=1}^K \;\;\min_{\underset{i\neq k}{i\in\{1,\ldots,K\},}}\;\;\min_{\underset{\ell\neq i}{\ell,j\in\{1,\ldots,K\},}} \alpha_{\ell}^{(j)}.
\label{eq:prop_DoF_bZF}
\end{equation} 
\label{prop_DoF_bZF}
\end{proposition}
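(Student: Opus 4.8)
The plan is to mirror the proof of Theorem~\ref{thm_DoF_cZF} (conventional ZF) given in Appendix~\ref{se:proof_thm_DoF_cZF}, exploiting the single structural difference between the two precoders: in beacon ZF the estimated direct channel $\tilde{\bm{h}}_i^{(j)}$ is replaced by the fixed, perfectly known beacon $\bm{c}_i$ in definition~\eqref{eq:def_bZF}. Since the number of DoFs decouples across receivers through~\eqref{eq:SM_6} (equal power, ZF structure), I would first fix a receiver~$k$ and reduce the claim to showing $\DoF_k=\min_{i\neq k}\min_{\ell\neq i,j}\alpha_\ell^{(j)}$, the interference at RX~$k$ being the sum of the contributions $|\bm{h}_k^{\He}\bm{t}_i^{\mathrm{bZF}}|^2$ over all interfering symbols $i\neq k$, after which summing over $k$ gives~\eqref{eq:prop_DoF_bZF}.

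The core step is to determine the scaling of a single interference term $|\bm{h}_k^{\He}\bm{t}_i^{\mathrm{bZF}}|^2$ for $i\neq k$. First I would observe that, because $\bm{c}_i$ is common to all TXs, the projector $\Pi_{\bar{\mathbf{H}}_i^{(j)}}^{\perp}$ converges (as the estimation errors vanish) to the common projector $\Pi_{\bar{\mathbf{H}}_i}^{\perp}$ built from the true channels, so that the effective beamformer writes $\bm{t}_i^{\mathrm{bZF}}=\sqrt{P/K}\,(\bm{v}_i+\bm{\varepsilon}_i)$, where $\bm{v}_i\propto\Pi_{\bar{\mathbf{H}}_i}^{\perp}(\bm{c}_i)$ is the genie beamformer obtained if all TXs shared the \emph{true} CSI, and $\{\bm{\varepsilon}_i\}_j$ collects at antenna~$j$ the distortion induced by TX~$j$'s estimation errors on the channels $\tilde{\bm{h}}_\ell$, $\ell\neq i$. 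Using the common beacon is precisely what aligns the phases of the per-TX beamformers so that the leading (error-free) terms $\{\bm{v}_i\}_j$ assemble \emph{consistently}; this is where beacon ZF departs favourably from conventional ZF. Since $k\neq i$, the vector $\tilde{\bm{h}}_k$ is a column of $\bar{\mathbf{H}}_i$, hence $\bm{h}_k^{\He}\bm{v}_i=\|\bm{h}_k\|\,\tilde{\bm{h}}_k^{\He}\bm{v}_i=0$, leaving $\bm{h}_k^{\He}\bm{t}_i^{\mathrm{bZF}}=\sqrt{P/K}\,\bm{h}_k^{\He}\bm{\varepsilon}_i$. As $\{\bm{\varepsilon}_i\}_j$ is of order $\max_{\ell\neq i}\|\tilde{\bm{h}}_\ell^{(j)}-\tilde{\bm{h}}_\ell\|$, whose squared magnitude behaves in log-expectation as $P^{-\min_{\ell\neq i}\alpha_\ell^{(j)}}$ by the RVQ error analysis of Appendix~\ref{se:Appendix_RVQ} (Proposition~\ref{App_log_distorsion}), the dominant antenna yields $\E[\log_2|\bm{h}_k^{\He}\bm{t}_i^{\mathrm{bZF}}|^2]/\log_2(P)\to 1-\min_{\ell\neq i,j}\alpha_\ell^{(j)}$. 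Crucially, only the channels $\ell\neq i$ enter, because the direct channel is now the error-free beacon: this is exactly the substitution that, as in Corollary~\ref{corollary_DoF_bZF_two_1}, removes the index $\ell=i$ from the minimization compared with conventional ZF.

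With the per-symbol scaling in hand, I would assemble the DoFs: up to constants irrelevant for the DoFs, the aggregate interference $\sum_{i\neq k}|\bm{h}_k^{\He}\bm{t}_i^{\mathrm{bZF}}|^2$ is dominated by its slowest-decaying summand, i.e.\ by the $i\neq k$ minimizing $\min_{\ell\neq i,j}\alpha_\ell^{(j)}$, so that $\E[\log_2(\sum_{i\neq k}|\bm{h}_k^{\He}\bm{t}_i^{\mathrm{bZF}}|^2)]/\log_2(P)\to 1-\min_{i\neq k}\min_{\ell\neq i,j}\alpha_\ell^{(j)}$. Substituting into~\eqref{eq:SM_6} gives $\DoF_k=\min_{i\neq k}\min_{\ell\neq i,j}\alpha_\ell^{(j)}$, and summation over $k$ establishes~\eqref{eq:prop_DoF_bZF}.

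I expect the main obstacle to be the rigorous log-expectation argument underlying the single-term scaling, which must be carried out in \emph{both} directions to obtain the claimed equality rather than a mere bound. The achievability side requires upper-bounding $\E[\log_2|\bm{h}_k^{\He}\bm{\varepsilon}_i|^2]$ (the errors are small enough), while the converse requires a matching lower bound showing that the residual interference created by the \emph{worst} estimate among $\{\tilde{\bm{h}}_\ell^{(j)}:\ell\neq i\}$ genuinely persists and is not fortuitously cancelled by the patchwork structure of $\bm{t}_i^{\mathrm{bZF}}$. This is precisely the content carried over from the proof of Theorem~\ref{thm_DoF_cZF} in Appendix~\ref{se:proof_thm_DoF_cZF}, the only adaptation being the bookkeeping that drops the direct-channel index $\ell=i$; some care is also needed to verify that the normalizations $\|\Pi_{\bar{\mathbf{H}}_i^{(j)}}^{\perp}(\bm{c}_i)\|$ stay $\Theta(1)$ almost surely so that they do not perturb the exponents.
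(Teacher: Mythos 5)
Your proposal is correct and takes essentially the same route as the paper: the paper's proof likewise reduces the claim to the per-stream interference analysis of Theorem~\ref{thm_DoF_cZF}, with the inner minimum over $\ell\neq i$ arising because the beacon $\bm{c}_i$ replaces the estimated direct channel (so only the estimates $\tilde{\bm{h}}_\ell^{(j)}$, $\ell\neq i$, enter $\bm{t}_i^{\mathrm{bZF}}$) and the outer minimum over $i\neq k$ coming from the slowest-decaying interfering stream at each RX. Your additional detail (genie-beamformer decomposition, two-sided log-expectation bounds via Lemma~\ref{lemma_MSE} and Proposition~\ref{App_log_distorsion}, and the integrability check on $\|\Pi_{\bar{\mathbf{H}}_i^{(j)}}^{\perp}(\bm{c}_i)\|$) merely makes explicit what the paper imports wholesale from the proof of Theorem~\ref{thm_DoF_cZF}.
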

\begin{proof}
To derive the number of DoFs at a RX~$k$, we need to compute the scaling of the interference at RX~$k$ stemming from the transmission to the $K-1$ other RXs. In the proof of Theorem~\ref{thm_DoF_cZF}, it is in fact the scaling of the interference resulting from the transmission of one stream which is calculated. To obtain the number of DoFs at one RX, the scaling of the interference resulting from the transmission of each of the $K-1$~interfering streams needs to be computed. This is represented by the first summation over~$i$. Determining the interference leaked by the transmission of symbol~$s_i$ using beacon ZF leads to the second minimum in the formula.
\end{proof} 
We have derived the number of DoFs for beacon ZF, but we will show in the following corollary that beacon ZF is only attractive in terms of number of DoFs in the two-user case.
\begin{corollary}
For $K\geq 3$, beacon ZF achieves the same number of DoFs as conventional ZF.
\label{corollary_DoF_bZF}
\end{corollary}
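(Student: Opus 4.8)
The plan is to prove this purely combinatorially, by showing that the two DoF formulas coincide when $K\geq 3$. Write $m^{\ast}\triangleq\min_{i,j\in\{1,\ldots,K\}}\alpha_i^{(j)}$ for the global minimum, and fix a pair $(i^{\ast},j^{\ast})$ attaining it, so that $\alpha_{i^{\ast}}^{(j^{\ast})}=m^{\ast}$. By Theorem~\ref{thm_DoF_cZF} we have $\DoF^{\mathrm{cZF}}=Km^{\ast}$, so it suffices to show that each of the $K$ summands in the beacon-ZF formula \eqref{eq:prop_DoF_bZF} equals $m^{\ast}$. Denoting the $k$-th summand by $f(k)\triangleq\min_{i\neq k}\,\min_{\ell\neq i,\,j}\alpha_{\ell}^{(j)}$, the goal reduces to the elementary claim that $f(k)=m^{\ast}$ for every $k$.

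First I would note the lower bound $f(k)\geq m^{\ast}$, which is immediate since every entry $\alpha_{\ell}^{(j)}$ entering the minimization is at least the global minimum $m^{\ast}$. For the matching upper bound I would exhibit, for each $k$, one admissible triple $(i,\ell,j)$ satisfying the nested constraints $i\neq k$ (from the outer minimum) and $\ell\neq i$ (from the inner minimum) whose value equals $m^{\ast}$. The natural choice is $\ell=i^{\ast}$ and $j=j^{\ast}$, which requires picking an outer index $i$ with $i\neq i^{\ast}$, so that the inner minimum over $\ell\neq i$ can still reach row $i^{\ast}$, and with $i\neq k$. Thus it suffices that $\{1,\ldots,K\}\setminus\{i^{\ast},k\}$ be nonempty, which holds whenever $K\geq 3$ (and also when $K\geq 2$ in the degenerate case $k=i^{\ast}$). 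This yields $f(k)\leq m^{\ast}$, hence $f(k)=m^{\ast}$, and summing over $k$ gives $\DoF^{\mathrm{bZF}}=\sum_{k=1}^K f(k)=Km^{\ast}=\DoF^{\mathrm{cZF}}$.

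The argument is entirely elementary, so there is no real analytic obstacle; the only point requiring care is the index bookkeeping in the upper bound. The essential reason the hypothesis $K\geq 3$ is needed is that reaching the global-minimum row $i^{\ast}$ through the nested minimum costs two exclusions, namely $i\neq i^{\ast}$ forced by the inner constraint and $i\neq k$ forced by the outer one, and both can be met simultaneously only when at least three row indices are available. It is worth flagging precisely where this fails for $K=2$: if $k\neq i^{\ast}$, the unique admissible outer index is $i=i^{\ast}$, which forces $\ell\neq i^{\ast}$ and thereby shuts the global minimum out of that summand, so $f(k)>m^{\ast}$. This is exactly the structural reason beacon ZF strictly outperforms conventional ZF in the two-user setting of Corollary~\ref{corollary_DoF_bZF_two_1}, and makes transparent why that advantage evaporates as soon as $K\geq 3$.
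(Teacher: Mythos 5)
Your proof is correct and follows essentially the same route as the paper, whose proof consists only of the remark that the result ``is easily obtained by studying the effect of the two successive minimums'' in \eqref{eq:prop_DoF_bZF}; your index bookkeeping (lower bound by global minimality, upper bound via the triple $(\ell,j)=(i^{\ast},j^{\ast})$ with an outer index $i\notin\{i^{\ast},k\}$, available precisely when $K\geq 3$) is exactly the detail the paper leaves implicit. Your closing observation pinpointing why the argument breaks at $K=2$ is a nice consistency check against Corollary~\ref{corollary_DoF_bZF_two_1}.
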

\begin{proof}
The result is easily obtained by studying the effect of the two successive minimums in \eqref{eq:prop_DoF_bZF}.
\end{proof} 

\subsection{Active-Passive Zero Forcing} 
The generalization of AP ZF is intuitive and consists simply, for the computation of each beamforming vector, in letting one TX arbitrarily fix its precoding coefficient while the other TXs adapt to this coefficient. Nevertheless, it requires the introduction of a few more notations. 

We define the ordered set $\mathbb{S}\triangleq\{n_1,\ldots,n_K\}$ as the set whose $i$-th element corresponds to the indice of the TX with fixed coefficient when transmitting the symbol $s_i$ (passive TX for $s_i$). We then introduce the (column) channel vector from TX~$\ell$ to all the RXs except the $i$-th RX:
\begin{equation}
\tilde{\bm{g}}_i^{(j)}(\ell)\triangleq[\{\tilde{\mathbf{H}}^{(j)}\}_{1,\ell}, \ldots ,\{\tilde{\mathbf{H}}^{(j)}\}_{i-1,\ell}, \{\tilde{\mathbf{H}}^{(j)}\}_{i+1,\ell}, \ldots ,\{\tilde{\mathbf{H}}^{(j)}\}_{K,\ell}]^{\trans}.
\end{equation}
Using the previous definition, we can then define
\begin{equation}  
\bar{\mathbf{H}}_i^{(j)}(n_i) \triangleq [\tilde{\bm{g}}_i^{(j)}(1), \ldots, \tilde{\bm{g}}_i^{(j)}(n_i-1),\tilde{\bm{g}}_i^{(j)}(n_i+1), \ldots,\tilde{\bm{g}}_i^{(j)}(K)]
\end{equation}
which represents the estimate at TX~$j$ of the multi-user channel from all the TXs except TX~$n_i$ to all the RXs except RX~$i$.
 
For a given set $\mathbb{S}$, we write $\mathbf{T}^{\mathrm{APZF}}(\mathbb{S})\triangleq[\bm{t}_1^{\mathrm{APZF}}(n_1),\bm{t}_2^{\mathrm{APZF}}(n_2),\ldots,\bm{t}_K^{\mathrm{APZF}}(n_K)]$ where the beamformer  $\bm{t}_i^{\mathrm{APZF}}(n_i)\triangleq[\bm{e}_1^{\trans}\bm{t}_i^{\mathrm{APZF}(1)}(n_i),\bm{e}_2^{\trans}\bm{t}_i^{\mathrm{APZF}(2)}(n_i),\ldots,\bm{e}_K^{\trans}\bm{t}_i^{\mathrm{APZF}(K)}(n_i)]^{\trans}$ transmits symbol $s_i$. The beamformer~$\bm{t}_i^{\mathrm{APZF}(j)}(n_i)$ computed at TX~$j$ to transmit symbol~$s_i$ is given by
\begin{equation}
\bm{t}_i^{\mathrm{APZF}(j)}(n_i)\!\triangleq\!\sqrt{\frac{P}{K\log_2(P)}}\bm{u}_i^ {\mathrm{APZF}(j)}(n_i) 
\label{eq:def_apZF}
\end{equation}
where we have defined
{
\begin{equation}
\bm{u}_i^ {\mathrm{APZF}(j)}\!(n_i)\!\triangleq\! [\;\check{u}_{1i}^{\mathrm{APZF}(j)}(n_i),\ldots,\check{u}_{n_i-1,i}^{\mathrm{APZF}(j)}(n_i) , 1 ,\check{u}_{n_i,i}^{\mathrm{APZF}(j)}(n_i) ,\ldots,\check{u}_{K-1,i}^{\mathrm{APZF}(j)}(n_i) ]^{\trans}
\end{equation}}
 with $\check{\bm{u}}_i^ {\mathrm{APZF}(j)}(n_i) \triangleq\begin{bmatrix}\check{u}_{1i}^{\mathrm{APZF}(j)}(n_i),\ldots, \check{u}_{K-1,i}^{\mathrm{APZF}(j)}(n_i)\end{bmatrix}^{\trans}\in \mathbb{C}^{K-1\time 1}$ and
\begin{equation}
\check{\bm{u}}_i^ {\mathrm{APZF}(j)}(n_i)\!\triangleq\! \frac{-\left(\bar{\mathbf{H}}_i^{(j)}(n_i)\right)^{-1}\tilde{\bm{g}}_i^{(j)}(n_i)}{\sqrt{1+\norm{\left(\bar{\mathbf{H}}_i^{(j)}(n_i)\right)^{-1}\tilde{\bm{g}}_i^{(j)}(n_i)}^2}}.
\end{equation}
Even though the notations are quite heavy, the intuition behind the construction of the precoder is exactly the same as for the two-user case. TX~$n_i$ is the \emph{passive} TX and transmits with a fixed coefficient $\sqrt{P/K\log_2(P)}$ while the other \emph{active} TXs then choose their coefficients in order to ZF the interference. This is obtained by setting their coefficients so as to fulfill~\eqref{eq:def_apZF}. The notational complexity comes only from the fact that we need to introduce a ``reduced'' channel without the direct channel as well as without the channel from the \emph{passive} TX.
\begin{proposition}
Active-Passive ZF with the set $\mathbb{S}=\{n_1,\ldots,n_K\}$ achieves the number of DoFs
\begin{equation}
\DoF^{\mathrm{APZF}}(\mathbb{S})=\sum_{k=1}^K \min_{\underset{i\neq k}{i\in\{1,\ldots,K\},}} \;\;\min_{\underset{\ell\neq i,j\neq n_{i}}{\ell,j\in\{1,\ldots,K\},}} \alpha_{\ell}^{(j)}.
\label{eq:prop_DoF_apZF}
\end{equation}  
\label{prop_DoF_apZF}
\end{proposition}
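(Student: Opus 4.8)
The plan is to follow the same two-layer structure as the proofs of Theorem~\ref{thm_DoF_cZF} and Proposition~\ref{prop_DoF_bZF}: first reduce the total number of DoFs to a per-RX computation through~\eqref{eq:SM_6}, and then determine, for each receiver~$k$, the scaling exponent of the interference leaked onto it by each of the $K-1$ interfering streams. Starting from~\eqref{eq:SM_6}, the number of DoFs at RX~$k$ is $\DoF_k = 1 - \lim_{P\rightarrow\infty}\E[\log_2(\sum_{i\neq k}|\bm{h}_k^{\He}\bm{t}_i|^2)/\log_2(P)]$. Since the logarithm of a sum of positive terms is asymptotically governed by the largest summand, the sum is dominated by the stream whose leakage decays the slowest; if $|\bm{h}_k^{\He}\bm{t}_i|^2$ scales as $P^{1-\gamma_i}$, then $\DoF_k = \min_{i\neq k}\gamma_i$, which already produces the outer minimum over $i\neq k$ in~\eqref{eq:prop_DoF_apZF}. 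I would make this reduction rigorous exactly as in Appendix~\ref{se:proof_thm_DoF_cZF}, invoking Proposition~\ref{App_log_distorsion} to control the expectation of the logarithms of the random $O(1)$ prefactors (the channel norms and the normalization constants) so that they do not affect the scaling.

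The core of the argument is to show that the leakage exponent of stream~$i$ is $\gamma_i = \min_{\ell\neq i,\,j\neq n_i}\alpha_\ell^{(j)}$, independently of the victim RX~$k$. Here I would exploit the defining feature of AP ZF in~\eqref{eq:def_apZF}: the passive TX~$n_i$ transmits the \emph{fixed, channel-independent} coefficient $\sqrt{P/(K\log_2(P))}$, which is therefore known \emph{exactly} to every active TX, whereas each active TX~$j\neq n_i$ computes its coefficient from its local estimate~$\tilde{\mathbf{H}}^{(j)}$ so as to annihilate the estimated channels $\tilde{\bm{h}}_\ell^{(j)}$ for all $\ell\neq i$. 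Writing $\bm{h}_k^{\He}\bm{t}_i = \sqrt{P/(K\log_2(P))}\,\norm{\bm{h}_k}\,\tilde{\bm{h}}_k^{\He}\bm{u}^{\mathrm{eff}}$, where $\bm{u}^{\mathrm{eff}}$ is the stitched beamformer whose $j$-th entry is the $j$-th entry of TX~$j$'s virtual beamformer $\bm{u}_i^{\mathrm{APZF}(j)}(n_i)$, I would substitute $\tilde{\bm{h}}_\ell = \tilde{\bm{h}}_\ell^{(j)}+\bm{\delta}_\ell^{(j)}$ and use the per-TX zero-forcing identity $\tilde{\bm{h}}_\ell^{(j)\He}\bm{u}_i^{\mathrm{APZF}(j)}(n_i)=0$ to cancel the noise-free part. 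What survives is a sum of residual terms, each carrying exactly one estimation error $\bm{\delta}_\ell^{(j)}$ with $\ell\neq i$ and $j\neq n_i$: the direct channel $\ell=i$ never appears because it is not among the orthogonality constraints, and the passive index $j=n_i$ never appears because its coefficient is exact. Since $\norm{\bm{\delta}_\ell^{(j)}}^2$ scales as $P^{-\alpha_\ell^{(j)}}$ by the RVQ analysis of Appendix~\ref{se:Appendix_RVQ}, the slowest-decaying residual fixes the exponent $\gamma_i=\min_{\ell\neq i,\,j\neq n_i}\alpha_\ell^{(j)}$.

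The hard part will be making this residual decomposition precise, because $\bm{u}^{\mathrm{eff}}$ is \emph{not} the zero-forcer of any single consistent channel matrix: its coefficients are stitched together from $K-1$ different local estimates. I would therefore perform a perturbation analysis of the inverse $(\bar{\mathbf{H}}_i^{(j)}(n_i))^{-1}$ appearing in $\check{\bm{u}}_i^{\mathrm{APZF}(j)}(n_i)$, expanding each active TX's coefficient to first order in its estimation errors (higher-order terms being negligible in scaling), and collecting contributions channel-by-channel and TX-by-TX. A convenient reference is the exact zero-forcer of the true reduced channel with the passive coefficient held at its fixed value; subtracting it isolates the deviations of the active coefficients, each of which is driven by a single $\bm{\delta}_\ell^{(j)}$ with $\ell\neq i$, $j\neq n_i$, and leaves the $n_i$-th entry untouched. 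This step parallels, but generalizes, the two-user computation in~\eqref{eq:lemma_apZF_two_2}--\eqref{eq:lemma_apZF_two_3}, where the single active TX could orthogonalize the \emph{effective} beamformer to its own estimate exactly; with more users the cross-TX orthogonalization is only approximate, and the whole difficulty is to verify that this mismatch contributes no error of order larger than $P^{-\gamma_i}$. Once $\gamma_i$ is established, assembling $\DoF_k=\min_{i\neq k}\gamma_i$ and summing over~$k$ yields~\eqref{eq:prop_DoF_apZF}; the matching converse, namely that this scaling is tight and not merely an upper bound on the leakage, follows because the dominant error term is generically non-vanishing and cannot be suppressed, exactly as the worst error dominates in Theorem~\ref{thm_DoF_cZF}.
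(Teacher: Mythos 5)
Your proposal is correct and follows essentially the same route as the paper: a per-RX reduction via \eqref{eq:SM_6}, the outer minimum coming from the slowest-decaying of the $K-1$ interfering streams, and the per-stream leakage exponent obtained by adapting the conventional-ZF analysis of Theorem~\ref{thm_DoF_cZF}, with the key observation that neither the direct channel $\tilde{\bm{h}}_i$ nor the CSI at the passive TX~$n_i$ enters the beamformer, so exactly those scaling coefficients drop out of the inner minimum. The paper actually omits the perturbation details you sketch, saying only that the argument is ``very akin'' to that of Theorem~\ref{thm_DoF_cZF}; your first-order expansion of the stitched beamformer around the exact zero-forcer with the passive coefficient held fixed is precisely the computation the paper leaves out.
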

\begin{proof}
Due to the symmetry between the RXs, we will show the result only for the number of DoFs at RX~$k$. Let assume that AP ZF is used with the set $\mathbb{S}$.
%The number of DoFs at RX~$1$ then reads as
%\begin{align}
%\DoF_i&\!=\!\lim_{P\rightarrow \infty}\!\frac{\E_{\mathbf{H},\{\mathcal{W}_{i,j}\}}\left[\log_2\left(\frac{\norm{\bm{h}_i}^2|\tilde{\bm{h}}_i^{\He}\bm{u}_{i}|^2P}{ \log_2(P)}\right)\!-\!\log_2(\sum_{\ell \neq i} |\bm{h}_i^{\He}\bm{t}_{\ell}|^2)\right]}{\log_2(P)}\nonumber\\
%&= \lim_{P\rightarrow \infty}\frac{\E_{\mathbf{H},\{\mathcal{W}_{i,j}\}}\left[\log_2(P)-\!\log_2(\sum_{\ell \neq i} |\bm{h}_i^{\He}\bm{t}_{\ell}|^2)\right]}{\log_2(P)}\nonumber\\
%&= 1-\lim_{P\rightarrow \infty}\frac{\E_{\mathbf{H},\{\mathcal{W}_{i,j}\}}\left[ \log_2(\sum_{\ell \neq i} |\tilde{\bm{h}}_i^{\He}\bm{u}_{\ell}|^2)\right]}{\log_2(P)}
%\label{eq:lemma_apZF_1}
%\end{align}
To obtain the number of DoFs, we need to derive the scaling of the interference at RX~$i$ when all streams are transmitted using AP ZF. The first minimum of the DoFs formula follows from the summation over all the $K-1$~interfering streams. It remains then to determine the scaling of the interference resulting from the transmission of one given data symbol. 

TX~$j$ computes the beamformer $\bm{t}_{\ell}^ {\mathrm{APZF}(j)}(n_{\ell})$ according to~\eqref{eq:def_apZF}. This formula is similar to the one for conventional ZF so that the scaling of the remaining interference power can be derived with a proof very akin to that of Theorem~\ref{thm_DoF_cZF} which is omitted to avoid repetitions. Thus, the interference received at RX~$k$ due to the transmission of symbol~$s_i$ corresponds to the second minimum of the DoFs formula. This expression follows from the fact that the CSI at TX~$n_{\ell}$ and the CSI on the direct channel $\tilde{\bm{h}}_{\ell}$ are not used to design the beamformer transmitting~$s_{\ell}$.
\end{proof}
The number of DoFs given in Proposition~\ref{prop_DoF_apZF} is given by two successive minimizations. This is similar to beacon ZF at the difference that the index of one TX is not taken into account in the second minimization. This leads then to a larger number of DoFs. The formula for the number of DoFs depends on the set $\mathbb{S}$ but we will show that the optimal set is easily derived when the number of users is larger than~$4$.
\begin{corollary}
For $K\geq 4$ users, it is optimal in terms of number of DoFs to choose all the indices in $\mathbb{S}$ to be equal. Therefore, it is optimal to choose $n_i$ as the indice of the minimum over all the CSI scaling coefficients, and the number of DoFs reads as
\begin{equation}
\DoF^{\mathrm{APZF}}=K \min_{\underset{j\neq \argmin_{k}\min_{\ell}\alpha_{\ell}^{(k)}}{i,j\in\{1,\ldots,K\},}} \alpha_{i}^{(j)}.
\label{eq:corollary_DoF_apZF}
\end{equation}  
\label{corollary_DoF_apZF}
\end{corollary}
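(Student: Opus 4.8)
The plan is to treat the expression for $\DoF^{\mathrm{APZF}}(\mathbb{S})$ furnished by Proposition~\ref{prop_DoF_apZF} as a purely combinatorial optimization over the passive-TX set $\mathbb{S}=\{n_1,\ldots,n_K\}$, and to show that for $K\geq 4$ its maximum is attained by a constant set. Writing
\[
g(i,n)\triangleq\min_{\substack{\ell,j\in\{1,\ldots,K\}\\ \ell\neq i,\;j\neq n}}\alpha_\ell^{(j)},
\]
the objective becomes $\DoF^{\mathrm{APZF}}(\mathbb{S})=\sum_{k=1}^K\min_{i\neq k}g(i,n_i)$, and crucially each summand depends on $\mathbb{S}$ only through the single index $n_i$. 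I would first single out the two critical entries of the scaling matrix: let $\mu\triangleq\min_{\ell,j}\alpha_\ell^{(j)}$ be the global minimum, attained at a position $(\ell_0,n^*)$ with $n^*=\argmin_k\min_\ell\alpha_\ell^{(k)}$, and let $m^*\triangleq\min_{\substack{\ell,j\in\{1,\ldots,K\}\\ j\neq n^*}}\alpha_\ell^{(j)}$ be the minimum of the matrix after deleting column $n^*$, attained at some $(\ell_1,j_1)$ with $j_1\neq n^*$. The target value of the corollary is precisely $K\,m^*$.

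The heart of the argument is an upper bound on each summand, and I expect this to be the main obstacle. I claim that for every row index $i\notin\{\ell_0,\ell_1\}$ and every choice of $n_i$ one has $g(i,n_i)\leq m^*$. This follows from a short case split: if $n_i=n^*$, the entry $(\ell_1,j_1)$ survives both the row deletion ($\ell_1\neq i$) and the column deletion ($j_1\neq n^*$), whence $g(i,n_i)\leq m^*$; if $n_i\neq n^*$, the global-minimum entry $(\ell_0,n^*)$ survives ($\ell_0\neq i$ and $n^*\neq n_i$), whence $g(i,n_i)\leq\mu\leq m^*$. Consequently, for each fixed $k$, whenever the set $\{1,\ldots,K\}\setminus\{k,\ell_0,\ell_1\}$ is nonempty there is an admissible $i\neq k$ with $g(i,n_i)\leq m^*$, forcing $\min_{i\neq k}g(i,n_i)\leq m^*$. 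Since $|\{k,\ell_0,\ell_1\}|\leq 3$, this nonemptiness holds for \emph{all} $k$ exactly when $K\geq 4$, and summing gives $\DoF^{\mathrm{APZF}}(\mathbb{S})\leq K\,m^*$ for every $\mathbb{S}$. This is precisely the point at which the hypothesis $K\geq 4$ enters: for $K=3$ the three excluded indices may exhaust $\{1,2,3\}$, so one summand escapes the bound and a non-constant set can strictly outperform the constant one.

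It then remains to exhibit a set achieving $K\,m^*$ and to verify it is the constant set. Taking $n_i=n^*$ for all $i$, the deletion of column $n^*$ removes the global minimum from every summand, so $g(i,n^*)\geq m^*$ for all $i$; moreover the entry $(\ell_1,j_1)$ is retained whenever $i\neq\ell_1$, giving $g(i,n^*)=m^*$ for all $i\neq\ell_1$ and $g(\ell_1,n^*)\geq m^*$. For each $k$ one may pick $i\in\{1,\ldots,K\}\setminus\{k,\ell_1\}$ (nonempty already for $K\geq 3$), so that $\min_{i\neq k}g(i,n^*)=m^*$, and hence $\DoF^{\mathrm{APZF}}(\{n^*,\ldots,n^*\})=K\,m^*$, matching the upper bound. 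Therefore the constant choice $n_i=n^*=\argmin_k\min_\ell\alpha_\ell^{(k)}$ is optimal and
\[
\DoF^{\mathrm{APZF}}=K\,m^*=K\min_{\substack{i,j\in\{1,\ldots,K\}\\ j\neq \argmin_k\min_\ell\alpha_\ell^{(k)}}}\alpha_i^{(j)},
\]
which is \eqref{eq:corollary_DoF_apZF}. One minor point to dispatch is the case of non-unique minimizers $\mu$ or $m^*$: fixing a single representative position for each suffices, since additional entries tied at the respective minima only make the inequalities easier rather than harder.
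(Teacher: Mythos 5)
Your proof is correct and takes essentially the same route as the paper's, whose entire argument is the one-line instruction to ``study the effect of the two successive minimums'' in Proposition~\ref{prop_DoF_apZF}; your writeup is the fully detailed version of that sketch. In particular, your two-witness upper bound via the global minimizer $(\ell_0,n^*)$ and the column-deleted minimizer $(\ell_1,j_1)$, combined with achievability of $K\,m^*$ by the constant set, makes precise exactly where the hypothesis $K\geq 4$ is needed --- a point the paper leaves implicit --- and your $K=3$ caveat is accurate (a non-constant set can indeed strictly outperform the constant one there).
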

\begin{proof}
Similar to the proof of the corollary for Beacon ZF, the proof follows by studying the effect of the two successive minimums and for $K\geq 4$, it has for consequence that it is optimal to choose $\forall i,j,n_i=n_j$. 
\end{proof}
Exactly as in the two-user case, AP ZF leads to an improvement in number of DoFs but this comes at the cost of an unbounded negative rate offset. To improve on this feature, the percentage of the available power which is consumed by the TXs needs to be increased. The sames solutions as described for the two-user case in Subsection~\ref{se:ZF_AP_two} can be applied, i.e., either a heuristic power control or the transmission of a scalar to control the power. Note that the scalar can be transmitted by any of the other $K-1$ TXs and that one scalar needs to be transmitted for each stream. We refer to Subsection~\ref{se:ZF_AP_two} for more details.

\subsection{Discussion of the Results}
Altogether, we have shown in this section that the results for the two-user case given in Section~\ref{se:ZF_two} could generalize to an arbitrary number of users. However, the results suggest in all cases a fundamental lack of robustness of the performance as we increase the number of users. Indeed, with conventional ZF, a single inaccurate channel estimate can reduce the number of DoFs of all the users while the novel precoding schemes proposed can only cope with a few channel estimates being of insufficient quality. This shows the need for other methods to make the transmission more robust to imperfect distributed CSI when more than two-user are present.

%%%%%%%%%%%%%%%%%%%%%%%%%%%%%%%%%%%%%%%%%%%%%%%%%%%%%
%%%%%%%%%%%%%%%%%%%%%%%%%%%%%%%%%%%%%%%%%%%%%%%%%%%%%
%%%%%%%%%%%%%%%%%%%%%%%%%%%%%%%%%%%%%%%%%%%%%%%%%%%%%
%%%%%%%%%%%%%%%%%%%%%%%%%%%%%%%%%%%%%%%%%%%%%%%%%%%%%
%%%%%%%%%%%%%%%%%%%%%%%%%%%%%%%%%%%%%%%%%%%%%%%%%%%%%
\section{Precoding Using Hierarchical Quantization}\label{se:hierarchical_quantization}
%%%%%%%%%%%%%%%%%%%%%%%%%%%%%%%%%%%%%%%%%%%%%%%%%%%%%
%%%%%%%%%%%%%%%%%%%%%%%%%%%%%%%%%%%%%%%%%%%%%%%%%%%%%
%%%%%%%%%%%%%%%%%%%%%%%%%%%%%%%%%%%%%%%%%%%%%%%%%%%%%
%%%%%%%%%%%%%%%%%%%%%%%%%%%%%%%%%%%%%%%%%%%%%%%%%%%%% 
%%%%%%%%%%%%%%%%%%%%%%%%%%%%%%%%%%%%%%%%%%%%%%%%%%%%%
In view of the rather pessimistic results in the previous section, we propose now an alternative method to make the transmission more robust to the CSI discrepancies. It consists in modifying the CSI quantization and using a Hierarchical Quantization (HQ) scheme to encode the CSI\cite{Boccardi2007,Zakhour2010a}.  

\subsection{Hierarchical Quantization}
Hierarchical quantization (or multi-resolution quantization) is a quantization scheme in which the information is encoded so that the original message can be decoded up to a number of bits depending on the quality of the feedback channel. The better the channel is, the more bits can be decoded. Thus, if one entity receives a codeword with a higher accuracy than another entity, and has the knowledge of the feedback qualities, it also knows what has been decoded at the other entity. Conversely, if one entity can detect the feedback information at a given resolution level but knows that another entity can decode the same information at a higher resolution level, it can use its individual decoded codeword to form a limited set of guesses around it as to which higher resolution codeword may have been detected at the other TX.

In our setting, it means that each TX can decode the CSI feedback up to a certain number of bits depending on the quality of the feedback link. If TX~$j_1$ receives a CSI of better quality than another TX~$j_2$, it can decode more bits from the CSI and can get the knowledge of the CSI at TX~$j_2$ with less decoded bits. Note that this implies that two TXs with the same CSI quality have the \emph{same} codebook and thus exactly the same realization for the channel estimation error. This is in contrast to what has been considered in the previous sections.

We wish to continue using the properties of RVQ so that we need to design \emph{hierarchical random codebooks}, i.e., codebooks fulfilling the properties of both kinds of codebooks. Since this is not the main focus of the work, we just briefly describe a possible method to construct such codebooks and the quantization scheme associated. 

We start by considering a random codebook of size corresponding to the best accuracy, say $2^{\ell_{\max}}$. This random codebook is then divided into two random codebooks containing each half the elements. This process is then applied on the two smaller codebooks obtained until having $2^{\ell_{\max}}$ codebooks of one element. In each of the sub-codebooks of different sizes created, we pick randomly one elements to be the \emph{representative} of this codebook. 

Once the quantized vector maximizing the figure of merit has been chosen among the~$2^{\ell_{\max}}$ vectors, the encoding can be easily done. The chosen vector belongs to one set of each size and the encoding bits are used to select among the two possible choices, the set to which the quantized vector belongs. 

The decoding step works as follows. The first bit denotes one of the two codebooks of size $2^{\ell_{\max}-1}$, the second bit denotes one of the two codebooks of size $2^{\ell_{\max}-2}$ inside this codebook, and so on, until the last bit is decoded. Once this is done, the codeword decoded is chosen to be the \emph{representative} codeword of the obtained codebook.

It is then easily verified that the proposed quantization scheme has the hierarchical properties desired.

\subsection{Conventional Zero Forcing with Hierarchical Quantization}

In the previous sections, we have shown that the quality of the estimation of one channel $\tilde{\bm{h}}_i$ to one given RX had an impact on the number of DoFs achieved \emph{at all RXs}. This is a surprising property which follows from the particular structure of the DCSI-MIMO channel where the consistency between the transmissions of the different TXs is critical. We will show how the hierarchical quantization described above can be used to avoid this very inefficient property. 

In the following, we will consider a particularly simple use of hierarchical quantization consisting in letting all the TXs designing the beamforming vector use only the part of the CSI which is common to all the TXs, and simply "forget" about the more accurate CSI knowledge. We then obtain a CSI configuration where all the TXs share the same CSI and the number of DoFs can be obtained from Theorem~\ref{thm_DoF_BC}. %Note that the different channel vectors are a priori still estimated with unequal quality. 

\begin{theorem}
The number of DoFs achieved using Conventional ZF with hierarchical quantization is
\begin{equation}
\DoF^{\mathrm{cZF}}=\sum_{i=1}^K \min_{j\in\{1,\ldots,K\}} \alpha_{i}^{(j)}.
\label{eq:thm_DoF_cZF_HQ}
\end{equation} 
\label{thm_DoF_cZF_HQ} 
\end{theorem}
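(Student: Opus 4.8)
The plan is to exploit the remark preceding the statement: with hierarchical quantization we force every TX to discard the bits it decodes beyond the common resolution level, so that the distributed problem collapses onto a conventional shared-CSI MIMO BC, to which Theorem~\ref{thm_DoF_BC} applies directly. The first step is to identify this common resolution for each channel. For the channel $\tilde{\bm{h}}_i$, TX~$j$ decodes $B_i^{(j)}$ bits, and by the hierarchical property a TX decoding more bits also knows the codeword decoded by any TX decoding fewer. Hence the finest resolution on which all $K$ TXs can agree for $\tilde{\bm{h}}_i$ is the one held by the worst TX, namely $\min_{j\in\{1,\ldots,K\}} B_i^{(j)}$ bits, which by~\eqref{eq:SM_scaling} has scaling coefficient $\min_{j\in\{1,\ldots,K\}}\alpha_i^{(j)}$.

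Second, I would argue that letting every TX truncate its decoded codeword for $\tilde{\bm{h}}_i$ to these first $\min_{j} B_i^{(j)}$ bits produces a single estimate $\hat{\bm{h}}_i$, common to all TXs, with exactly the same realization of the quantization error at each TX. This is precisely the shared-CSI configuration underlying the conventional MIMO BC: all TXs now precode from identical estimates $\{\hat{\bm{h}}_i\}_{i=1}^K$, so the distributed aspect of the channel disappears and conventional ZF reduces to ordinary ZF against a common channel estimate. The ambiguity issues discussed in Subsection~\ref{se:system_model:CSI:RVQ} no longer arise, since the estimates are genuinely identical rather than independently quantized.

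Third, I would invoke Theorem~\ref{thm_DoF_BC} with $M=K$ antennas and per-channel scaling coefficients $\alpha_i \leftarrow \min_{j\in\{1,\ldots,K\}}\alpha_i^{(j)}$, which immediately yields $\DoF^{\mathrm{cZF}}=\sum_{i=1}^K \min_{j\in\{1,\ldots,K\}}\alpha_i^{(j)}$, the claimed formula.

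The main obstacle is not the DoFs arithmetic --- which is immediate once the reduction is in place --- but justifying that Theorem~\ref{thm_DoF_BC} is applicable verbatim to the truncated estimate. One must check that the common codeword obtained by truncation still behaves, statistically, like an ordinary RVQ codeword at resolution $\min_{j} B_i^{(j)}$, so that the quantization-error scaling underpinning Jindal's proof (the $P^{-\min_{j}\alpha_i^{(j)}}$ scaling of $\sin^2(\tilde{\bm{h}}_i,\hat{\bm{h}}_i)$ established in Appendix~\ref{se:Appendix_RVQ}) carries over unchanged. This relies on the construction of the hierarchical random codebook, in which each sub-codebook's \emph{representative} is drawn isotropically, so that the truncated codeword retains the RVQ distribution at the coarser resolution; once this property is granted, the full-DoFs analysis of the conventional MIMO BC applies without modification.
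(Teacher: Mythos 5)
Your proposal is correct and takes essentially the same route as the paper: the paper's (informal) proof likewise lets every TX discard its surplus bits and precode from the common, coarsest estimate of each $\tilde{\bm{h}}_i$ (at resolution $\min_{j} B_i^{(j)}$, hence scaling $\min_{j}\alpha_i^{(j)}$), collapsing the setting onto a shared-CSI MIMO BC and invoking Theorem~\ref{thm_DoF_BC} with $\alpha_i$ replaced by $\min_{j\in\{1,\ldots,K\}}\alpha_i^{(j)}$. Your closing caveat --- that the truncated codeword must still exhibit the RVQ error scaling $P^{-\min_j\alpha_i^{(j)}}$ at the coarser resolution, which hinges on the isotropic choice of representatives in the hierarchical random codebook --- is a point the paper leaves implicit, so making it explicit is a small improvement rather than a divergence.
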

%\begin{proof}
%The proof is very simple since there is not any longer a consistency issue between the different channel estimates at the TXs. Thus, we can apply Theorem~\ref{thm_DoF_BC} to obtain the desired result.
%\end{proof} 
%When considering HQ, different estimates with the same quality are exactly equal since a common codebook is considered. This means that the case where all the CSI scaling coefficients are equal corresponds in fact to the conventional multiple-antenna BC, and using $\forall i,j,\alpha_i^{(j)}=\alpha$ we can obtain as a particular case the results on the number of DoFs scaling in \cite{Jindal2006}.

Using HQ as described, i.e., using only the estimate of a channel vector~$\tilde{\bm{h}}_i$ common to all the TXs, follows from the observation that the worst estimation error of~$\tilde{\bm{h}}_i$ limits in any case the number of DoFs at RX~$i$. Thus, using only the common part of the estimate of~$\tilde{\bm{h}}_i$ does not reduce the number of DoFs at RX~$\tilde{\bm{h}}_i$. Yet, it leads to an improved consistency between the beamformers computed at the TXs. This has for consequence that the error in the estimate of the channel $\tilde{\bm{h}}_i$ only impacts the number of DoFs at RX~$i$ and not at the other RXs.

Note that the proposed scheme using HQ is very simple and more gains could certainly be obtained with a more sophisticated use of the additional CSI knowledge available at some TXs.

%In fact, other precoding schemes could be thought of to make use of hierarchical quantization. Particularly, if some ordering can be found in the CSI knowledge, it is then possible to use the better accuracy at some TXs to correct the beamforming vectors implemented using the CSI common to all the TXs. However, many possibilities arise depending on the structure of the CSI configuration. We do not investigate it any further,

\subsection{Active-Passive Zero Forcing with Hierarchical Quantization}
Hierarchical quantization is used for AP ZF in the same way as for Conventional ZF. This consists in using the CSI which is common to all the active TXs considered in the definition of the beamformer in~\eqref{eq:def_apZF}.
\begin{proposition}
The number of DoFs achieved using Active-Passive ZF with Hierarchical Quantization and the set $\mathbb{S}$ is
\begin{equation}
\DoF^{\mathrm{APZF}}(\mathbb{S})=\sum_{k=1}^K \min_{\underset{i\neq k}{i\in\{1,\ldots,K\},}}\;\;\min_{\underset{j\neq n_i}{j\in\{1,\ldots,K\},}} \alpha_{k}^{(j)}.
\label{eq:prop_DoF_apZF_HQ}
\end{equation}  
\label{prop_DoF_apZF_HQ}
\end{proposition}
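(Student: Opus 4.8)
The plan is to transpose the per-RX argument of Theorem~\ref{thm_DoF_cZF_HQ} onto the Active-Passive beamformer structure of Proposition~\ref{prop_DoF_apZF}. By the symmetry between the RXs I would fix an arbitrary RX~$k$ and start from~\eqref{eq:SM_6}, so that $\DoF_k$ equals one minus the high-SNR scaling of $\E[\log_2(\sum_{i\neq k}|\bm{h}_k^{\He}\bm{t}_i|^2)]/\log_2(P)$. As in the earlier proofs, the sum inside the logarithm is dominated by its slowest-decaying term, so the scaling of the aggregate interference equals the largest of the individual scalings; this produces the outer minimization over the interfering streams $i\neq k$. It then remains to fix one stream $s_i$ and determine the scaling of the interference $|\bm{h}_k^{\He}\bm{t}_i|^2$ it creates at RX~$k$.

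The decisive step is to invoke the consistency property of HQ in exactly the manner used for conventional ZF in Theorem~\ref{thm_DoF_cZF_HQ}. Because all active TXs $j\neq n_i$ retain only the portion of the CSI common to them, they share one and the same estimate and therefore compute the same reduced inverse $(\bar{\mathbf{H}}_i^{(j)}(n_i))^{-1}$. Consequently the effective beamformer $\bm{t}_i$ assembled through~\eqref{eq:SM_8} is \emph{exactly} orthogonal to the common estimate of every channel $\tilde{\bm{h}}_{\ell}$, $\ell\neq i$, the fixed coefficient of the passive TX $n_i$ being already accounted for in this orthogonality constraint. In particular $\bm{t}_i$ is exactly orthogonal to the common estimate $\tilde{\bm{h}}_k^{\mathrm{c}}$ of $\tilde{\bm{h}}_k$, which is decoded at the resolution shared by all active TXs, namely $\min_{j\neq n_i}\alpha_k^{(j)}$. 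This is precisely what suppresses the inconsistency-induced leakage that, in the non-HQ analysis of Proposition~\ref{prop_DoF_apZF}, coupled the estimation error of \emph{every} channel $\tilde{\bm{h}}_{\ell}$ into the interference at RX~$k$ and produced the inner double minimization over $\ell$ and $j$.

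With $\bm{t}_i$ orthogonal to $\tilde{\bm{h}}_k^{\mathrm{c}}$, the residual interference at RX~$k$ is governed solely by the quantization error between the true $\bm{h}_k$ and $\tilde{\bm{h}}_k^{\mathrm{c}}$. I would then reuse the interference computation already invoked for the non-HQ AP ZF, namely the one in the proof of Theorem~\ref{thm_DoF_cZF}: expressing $\bm{h}_k^{\He}\bm{t}_i$ through $\sin^2(\tilde{\bm{h}}_k,\tilde{\bm{h}}_k^{\mathrm{c}})$ and bounding its logarithm with Proposition~\ref{App_log_distorsion}, the interference scales as $P^{1-\min_{j\neq n_i}\alpha_k^{(j)}}$, the $\log_2(P)$ power-normalization factor in~\eqref{eq:def_apZF} contributing only a $\log_2\log_2(P)$ term that is immaterial for the DoFs. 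Hence the inner minimization collapses from $\min_{\ell\neq i,\,j\neq n_i}\alpha_{\ell}^{(j)}$ to $\min_{j\neq n_i}\alpha_k^{(j)}$; combining this with the outer minimization over $i\neq k$ and summing over $k$ yields~\eqref{eq:prop_DoF_apZF_HQ}.

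The step I expect to require the most care is the \emph{exact} orthogonality claim at the level of the spliced precoder rather than of the individual per-TX precoders. Each active TX computes the whole beamformer locally and only its own row survives in~\eqref{eq:SM_8}; I must check that sharing the common CSI is enough to guarantee that these locally computed rows splice into a single vector genuinely orthogonal to $\tilde{\bm{h}}_k^{\mathrm{c}}$, including the contribution of the constant coefficient emitted by the passive TX $n_i$. Once this consistency is secured, the remainder is a direct transcription of the scaling arguments already established for conventional ZF and for AP ZF without hierarchical quantization.
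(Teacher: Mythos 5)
Your proposal is correct and follows essentially the same route as the paper, which proves this proposition only by the brief remark that all active TXs use the CSI common to them (so the setting reduces to a shared-CSI AP~ZF whose per-stream interference analysis parallels that of Proposition~\ref{prop_DoF_apZF} and Theorem~\ref{thm_DoF_cZF_HQ}), the two minimizations arising because the passive TX~$n_i$ varies with the stream. Your write-up merely makes explicit what the paper leaves implicit — in particular the splicing-consistency check that the identically computed rows of the active TXs, together with the fixed passive coefficient, yield exact orthogonality to the common estimate~$\tilde{\bm{h}}_k^{\mathrm{c}}$, after which the $\sin^2$ scaling via Proposition~\ref{App_log_distorsion} gives $\min_{j\neq n_i}\alpha_k^{(j)}$ as claimed.
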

The two successive minimums come from the fact that it is not the same TX which is \emph{passive} for the different streams. It is clear from \eqref{eq:prop_DoF_apZF_HQ} that it is optimal to choose all the~$n_i$ to be equal for~$K\geq 3$. However, the indice of the optimal passive TX, which we denote by $n_{\mathrm{HQ}}$, is now different from the case without HQ. It is easily obtained by looking for the passive TX bringing the largest improvement in number of DoFs:
\begin{equation}
n_{\mathrm{HQ}}\triangleq \argmax_{n\in\{1,\ldots,K\}} \sum_{k=1}^K\min_{\underset{j\neq n}{j\in\{1,\ldots,K\},}}\alpha_k^{(j)}.
\label{eq:n_HC}
\end{equation}
The maximum number of DoFs using AP-ZF with HQ follows then directly.
\begin{proposition}
For~$K\geq 3$, it is optimal to choose the passive TX to be TX~$j$ with~$j=n_{\mathrm{HQ}}$ defined in \eqref{eq:n_HC}, for all the data streams. The number of DoFs achieved with Active-Passive ZF based on Hierarchical Quantization is then equal to 
\begin{equation}
\DoF^{\mathrm{APZF}}=\sum_{i=1}^K \min_{\underset{j\neq n_{\mathrm{HC}}}{j\in\{1,\ldots,K\},}} \alpha_{i}^{(j)}.
\label{eq:thm_DoF_apZF_HQ}
\end{equation}
\label{prop_DoF_apZF_HQ}
\end{proposition}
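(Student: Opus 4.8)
The plan is to treat this final proposition as the natural optimization counterpart of Proposition~\ref{prop_DoF_apZF_HQ}'s DoFs formula in \eqref{eq:prop_DoF_apZF_HQ}, so I would not re-derive the per-set DoFs expression from scratch; instead I would take \eqref{eq:prop_DoF_apZF_HQ} as given and optimize over the choice of the passive-TX set $\mathbb{S}=\{n_1,\ldots,n_K\}$. The target is to show that for $K\geq 3$ the optimal set collapses to a single common index, i.e.\ $n_i=n_{\mathrm{HQ}}$ for all $i$, and that the resulting DoFs reduces to \eqref{eq:thm_DoF_apZF_HQ}. The first step is to record that $\DoF^{\mathrm{APZF}}(\mathbb{S})$ as written is a sum over RX index $k$ of a double minimum, where the \emph{inner} minimum over $j\neq n_i$ depends on $n_i$ only through the single excluded TX index, and the \emph{outer} minimum ranges over all interfering streams $i\neq k$.

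The key observation I would develop is that, because the inner minimum $\min_{j\neq n_i}\alpha_k^{(j)}$ for a fixed $k$ is a nonincreasing function of how many ``good'' indices remain available, excluding a TX index can only help (weakly) if that excluded index is the one carrying the worst coefficient for that RX. More precisely, I would argue that for each RX~$k$ the quantity $\min_{j\neq n}\alpha_k^{(j)}$ is maximized over $n$ by choosing $n$ equal to the argument of the global minimizing coefficient, and crucially that a \emph{single} common choice $n_{\mathrm{HQ}}$ defined in \eqref{eq:n_HC} simultaneously maximizes the aggregate $\sum_k \min_{j\neq n}\alpha_k^{(j)}$. The step that requires care — and which I expect to be the main obstacle — is handling the \emph{outer} minimum over $i\neq k$ in \eqref{eq:prop_DoF_apZF_HQ}: with distinct $n_i$ the term for RX~$k$ is $\min_{i\neq k}\min_{j\neq n_i}\alpha_k^{(j)}$, so an adversarial (small) coefficient can be reached through \emph{any} interfering stream $i$ whose excluded index $n_i$ fails to remove it. I would show that when the $n_i$ are allowed to differ, the union over $i\neq k$ of the ``surviving'' index sets $\{j:j\neq n_i\}$ equals the full index set as soon as $K-1\geq 2$, i.e.\ $K\geq 3$, so no coefficient can be excluded by the outer minimum and the double minimum degenerates to $\min_{j}\alpha_k^{(j)}$ — strictly worse than the common-index choice. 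This monotonicity/covering argument is exactly parallel to the one invoked in Corollary~\ref{corollary_DoF_apZF} and in the Beacon-ZF corollary, so I would phrase it by studying ``the effect of the two successive minimums.''

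Concretely I would proceed as follows. First, fix the common-index policy $n_i\equiv n$ and simplify \eqref{eq:prop_DoF_apZF_HQ}: for each $k$ the inner minimum $\min_{j\neq n}\alpha_k^{(j)}$ no longer depends on $i$, so the outer minimum over $i\neq k$ is vacuous and the whole $k$-term equals $\min_{j\neq n}\alpha_k^{(j)}$, giving $\sum_k \min_{j\neq n}\alpha_k^{(j)}$. Second, optimize this expression over $n$, which is by definition maximized at $n=n_{\mathrm{HQ}}$ from \eqref{eq:n_HC}, yielding the right-hand side of \eqref{eq:thm_DoF_apZF_HQ}. Third, show optimality of the common-index policy among all sets $\mathbb{S}$: for any $\mathbb{S}$ with at least two distinct excluded indices, the covering argument above forces at least one RX term to drop to $\min_{j}\alpha_k^{(j)}=\min_{j\neq n}\alpha_k^{(j)}\big|_{\text{no exclusion}}$, which is no larger than the common-index value for every $k$, so no non-constant $\mathbb{S}$ can beat $n_{\mathrm{HQ}}$.

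I expect the only genuine subtlety is the covering count in the outer minimum: one must verify that for $K\geq 3$ there are at least two interfering streams $i\neq k$, hence at least two (not necessarily distinct) excluded indices $n_i$ being intersected away, and that with differing $n_i$ their complements $\{j\neq n_i\}$ cover all $K$ TX indices so that $\min_{i\neq k}\min_{j\neq n_i}\alpha_k^{(j)}=\min_{j\in\{1,\ldots,K\}}\alpha_k^{(j)}$. Once that covering identity is in hand, the rest is a direct comparison of two nonnegative sums term by term in $k$, together with the definition \eqref{eq:n_HC}, and the proof closes without any further computation. Everything else is bookkeeping that mirrors Corollary~\ref{corollary_DoF_apZF}, so I would keep the write-up short and refer back to that argument for the monotonicity of successive minimizations.
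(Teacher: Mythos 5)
Your proposal is correct and takes essentially the same route as the paper, which in fact gives no explicit proof here: it merely asserts that it is ``clear'' from \eqref{eq:prop_DoF_apZF_HQ} that all $n_i$ should be equal for $K\geq 3$ and that the result then follows ``directly'' from the maximization defining $n_{\mathrm{HQ}}$ in \eqref{eq:n_HC} --- your covering identity (for $K\geq 3$, two distinct excluded indices among $\{n_i\}_{i\neq k}$ make the complements $\{j\neq n_i\}$ cover all TXs, collapsing the double minimum to the unrestricted $\min_j \alpha_k^{(j)}$) is precisely the detail the paper leaves implicit. One refinement: since for a non-constant $\mathbb{S}$ at most one RX $k_0$ keeps a non-degenerate term $\min_{j\neq c}\alpha_{k_0}^{(j)}$, which can strictly exceed $\min_{j\neq n_{\mathrm{HQ}}}\alpha_{k_0}^{(j)}$, the comparison should not be made term by term against $n_{\mathrm{HQ}}$ directly but first against the constant policy with that same excluded index $c$ (which dominates the non-constant sum termwise), after which the definition \eqref{eq:n_HC} yields \eqref{eq:thm_DoF_apZF_HQ}.
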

%Note that this represents an improvement compared to the number of DoFs of conventional ZF with HQ since the CSI of one TX is not taken into account in the minimization. However, the main improvement is obtained because the number of DoFs of user~$i$ depends only on the quality of the estimate of~$\bm{h}_i$. 

%%%%%%%%%%%%%%%%%%%%%%%%%%%%%%%%%%%%%%%%%%%%%%%%%%%%%%%%%%%%%%%%%%%%%%%%%%%%
%%%%%%%%%%%%%%%%%%%%%%%%%%%%%%%%%%%%%%%%%%%%%%%%%%%%%%%%%%%%%%%%%%%%%%%%%%%%
%%%%%%%%%%%%%%%%%%%%%%%%%%%%%%%%%%%%%%%%%%%%%%%%%%%%%%%%%%%%%%%%%%%%%%%%%%%%
%%%%%%%%%%%%%%%%%%%%%%%%%%%%%%%%%%%%%%%%%%%%%%%%%%%%%%%%%%%%%%%%%%%%%%%%%%%%
%%%%%%%%%%%%%%%%%%%%%%%%%%%%%%%%%%%%%%%%%%%%%%%%%%%%%%%%%%%%%%%%%%%%%%%%%%%%
\section{DoF Optimal Sharing of the Feedback under a Total Feedback Constraint}\label{se:sharing}
%%%%%%%%%%%%%%%%%%%%%%%%%%%%%%%%%%%%%%%%%%%%%%%%%%%%%%%%%%%%%%%%%%%%%%%%%%%%
%%%%%%%%%%%%%%%%%%%%%%%%%%%%%%%%%%%%%%%%%%%%%%%%%%%%%%%%%%%%%%%%%%%%%%%%%%%%
%%%%%%%%%%%%%%%%%%%%%%%%%%%%%%%%%%%%%%%%%%%%%%%%%%%%%%%%%%%%%%%%%%%%%%%%%%%%
%%%%%%%%%%%%%%%%%%%%%%%%%%%%%%%%%%%%%%%%%%%%%%%%%%%%%%%%%%%%%%%%%%%%%%%%%%%%
%%%%%%%%%%%%%%%%%%%%%%%%%%%%%%%%%%%%%%%%%%%%%%%%%%%%%%%%%%%%%%%%%%%%%%%%%%%%
In this section, we consider the opposite side of the problem which consists in deriving how to distribute a maximum number~$B$ of feedback bits across the TXs and the channel vectors so as to maximize the number of DoFs. Since our focus remains on the number of DoFs and considering previous results, it is meaningful to introduce $\gamma\triangleq \lim_{P\rightarrow \infty} B/\log_2(P)$ which we call the \emph{total feedback scaling}. 

Thus, we consider a constraint on the sum of the scaling coefficients of the total feedback transmitted through the multi-user channel feedback: 
\begin{equation}
\sum_{i,j\in\{1,\ldots,K\}}\alpha_i^{(j)}\leq \gamma.
\label{eq:feedback_constraint}
\end{equation}
We study first conventional ZF before extending the results to Active-Passive ZF. To optimize the CSI allocation efficiently, it becomes necessary to also optimize the number of users being served, which means that time sharing will this time be explicitly considered.

\subsection{Conventional Zero Forcing}
\begin{proposition}
With conventional ZF (with or without Hierarchical Quantization), it is optimal in terms of number of DoFs to share equally the number of bits across the TXs and across the channels to quantize and to let the number of TX being actually transmitting be equal to $n$ for $\gamma \in [n(n-1)^2,(n+1)n^2]$. It follows that the optimal number of DoFs using Conventional ZF is equal to 
\begin{equation}
\begin{cases}
\DoF^{\mathrm{cZF}}=\gamma/(n(n-1)),&\text{  if $\gamma \in [n(n-1)^2,n^2(n-1)]$}\\
\DoF^{\mathrm{cZF}}=n,&\text{  if $\gamma \in [n^2(n-1),(n+1)n^2]$}.
\end{cases}
\end{equation}
\label{prop_DoF_sharing_cZF}
\end{proposition}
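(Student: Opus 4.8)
The plan is to decompose the optimization into three nested layers: first fix the number $n$ of users actually served in a slot (the other $K-n$ TXs being silenced, with time sharing rotating which subset of size $n$ is active so as to keep the per‑user allocation symmetric); then optimize the bit‑scaling allocation for that $n$; and finally optimize over the integer $n$. Because silencing $K-n$ users turns the active users into an $n$‑user DCSI‑MIMO channel, Theorem~\ref{thm_DoF_cZF} applies verbatim with $K$ replaced by $n$ and yields a per‑slot sum‑DoF equal to $n\min_{i,j}\alpha_i^{(j)}$, where the active scaling coefficients are now normalized by $(n-1)\log_2(P)$ since the quantized vectors live in $\mathbb{C}^{n}$. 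Rotating among subsets of size $n$ leaves the sum‑DoF at this value, so the sum‑DoF of the whole scheme for a fixed $n$ equals that of a single active slot.

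First I would settle the allocation step. Writing $b_i^{(j)}\triangleq\lim_{P\rightarrow\infty}B_i^{(j)}/\log_2(P)$, the budget \eqref{eq:feedback_constraint} distributes $\sum_{i,j}b_i^{(j)}=\gamma$ over the $n^2$ active links, while the sum‑DoF reads $n\min_{i,j}\bigl(b_i^{(j)}/(n-1)\bigr)$. This is a max‑min under a fixed sum, so the usual exchange argument closes it: if any $b_i^{(j)}$ strictly exceeds the running minimum, transferring a little of it to a minimizing link strictly raises the minimum without violating the constraint, whence equal allocation $b_i^{(j)}=\gamma/n^2$ is optimal. Capping the coefficient at one (the scaling coefficients may always be replaced by $\min(\alpha_i^{(j)},1)$), the equalized coefficient is $\alpha_i^{(j)}=\min\bigl(\gamma/(n^2(n-1)),1\bigr)$, and hence the sum‑DoF achievable with $n$ active users is
\begin{equation}
D_n(\gamma)=\min\!\left(\frac{\gamma}{n(n-1)},\,n\right),
\end{equation}
the first branch holding in the linear regime $\gamma\le n^2(n-1)$ and the second in the saturated regime $\gamma\ge n^2(n-1)$. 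For the hierarchical variant I would observe that under equalization $\min_{j}\alpha_i^{(j)}=\alpha_i^{(j)}$ for every $i$, so the HQ formula of Theorem~\ref{thm_DoF_cZF_HQ} collapses to the same value; moreover equalizing within each row is still optimal for that objective, which is precisely why the statement holds \emph{with or without} hierarchical quantization.

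It remains to maximize the integer‑indexed family $\{D_n(\gamma)\}_{n\ge1}$, and pinning down the crossovers is the combinatorial heart of the proof. Since each $D_n$ is increasing then flat in $\gamma$, with saturation value $n$ growing in $n$ but linear slope $1/(n(n-1))$ shrinking in $n$, the upper envelope is determined by comparing consecutive $n$. I would show $D_n$ and $D_{n+1}$ cross exactly at $\gamma=(n+1)n^2$: there $D_n=n$ is already saturated (as $(n+1)n^2\ge n^2(n-1)$), while $D_{n+1}=\gamma/((n+1)n)=n$ is still in its linear regime (as $(n+1)n^2\le(n+1)^2 n$), so both equal $n$ and $D_{n+1}$ overtakes $D_n$ beyond that point. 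Shifting the index gives the matching lower endpoint $\gamma=n(n-1)^2$, so $n$ is the maximizer precisely on $[n(n-1)^2,(n+1)n^2]$; substituting the relevant branch of $D_n$ on the sub‑intervals $[n(n-1)^2,n^2(n-1)]$ and $[n^2(n-1),(n+1)n^2]$ produces the claimed piecewise formula, which is continuous and non‑decreasing, taking integer values on the saturated plateaus.

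The main obstacle I anticipate is not any isolated step but the careful bookkeeping of the normalization: the feedback budget is counted in raw bit‑scaling $\gamma=\sum_{i,j}b_i^{(j)}$, whereas the $n$‑user DoF formula consumes coefficients normalized by $(n-1)$, and it is exactly this ratio that turns the per‑link share $\gamma/n^2$ into $\gamma/(n^2(n-1))$ and thereby produces the cubic endpoints $n(n-1)^2$ and $(n+1)n^2$. A secondary point needing a clean justification is the legitimacy of committing to a single $n$ per channel state: under a per‑block feedback constraint one cannot trade budget across slots, so there is no gain from time‑sharing different values of $n$ against a shared average budget, and the envelope computation above is then exact.
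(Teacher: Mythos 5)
Your proposal is correct and follows essentially the same route as the paper's own proof: equalize the CSI accuracy across all $n^2$ active links by the max--min structure of Theorem~\ref{thm_DoF_cZF}, deduce $\alpha=\gamma/(n^2(n-1))$ and hence the per-$n$ envelope $\min\bigl(\gamma/(n(n-1)),\,n\bigr)$, and locate the switching points by comparing consecutive values of $n$, with the HQ case handled by noting that equalization makes the hierarchical DoF formula collapse to the same value. Your version merely makes explicit what the paper leaves terse (the exchange argument for equal allocation, the crossover computation at $\gamma=(n+1)n^2$, and the raw-bit versus $(n-1)$-normalized bookkeeping), so no substantive difference remains.
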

\begin{proof}
We study first the case without HQ. Since the number of DoFs scales as the worst CSI scaling across the TXs and the channel vectors, it is clearly optimal to have the same CSI accuracy at all the TXs and for all the channel vectors. To achieve a number of DoFs of $\alpha$ at $n$~RXs, the number of bits to quantize a channel vector has to be equal to $\alpha(n-1)\log_2(P)$, where $n$ is the number of transmitting TXs. Hence, the total feedback in the channel is given by $n^2\alpha(n-1)\log_2(P)$ when considering the $n$~estimates needed at the $n$~TXs.

Let's assume that $n$~TXs are serving $n$~RXs with the maximal feedback scaling~$\gamma$, we obtain that $\alpha=\gamma/(n^2(n-1))$. For $\gamma\leq n^2(n-1)$ the number of DoFs achieved at the RXs is lower or equal to one so that the sum number of DoFs is equal to $n\alpha=\gamma/(n(n-1))$. For $\gamma\geq n^2(n-1)$, the number of DoFs at each RX reaches its maximal value of one and the sum number of DoFs is equal to $n$. Comparing the sum number of DoFs achieved by two successive configurations, with respectively $n$ and $n+1$ users served, leads to the value of $\gamma$ given in the proposition as switching point between the configurations.

When HQ is used, the number of DoFs still scales as the minimum over the CSI scaling across the TXs so that it is still optimal to let all the TXs have the same CSI scaling.
\end{proof}
Using HQ does not increase the number of DoFs when the CSI configuration can be optimized. However, many more configurations are optimal as the CSI can be allocated indifferently to any channel vector as long as the scaling of the CSI does not exceed one and all the TXs receive the same CSI. 

\begin{figure}%[htp!] 
\centering
\includegraphics[width=1\columnwidth]{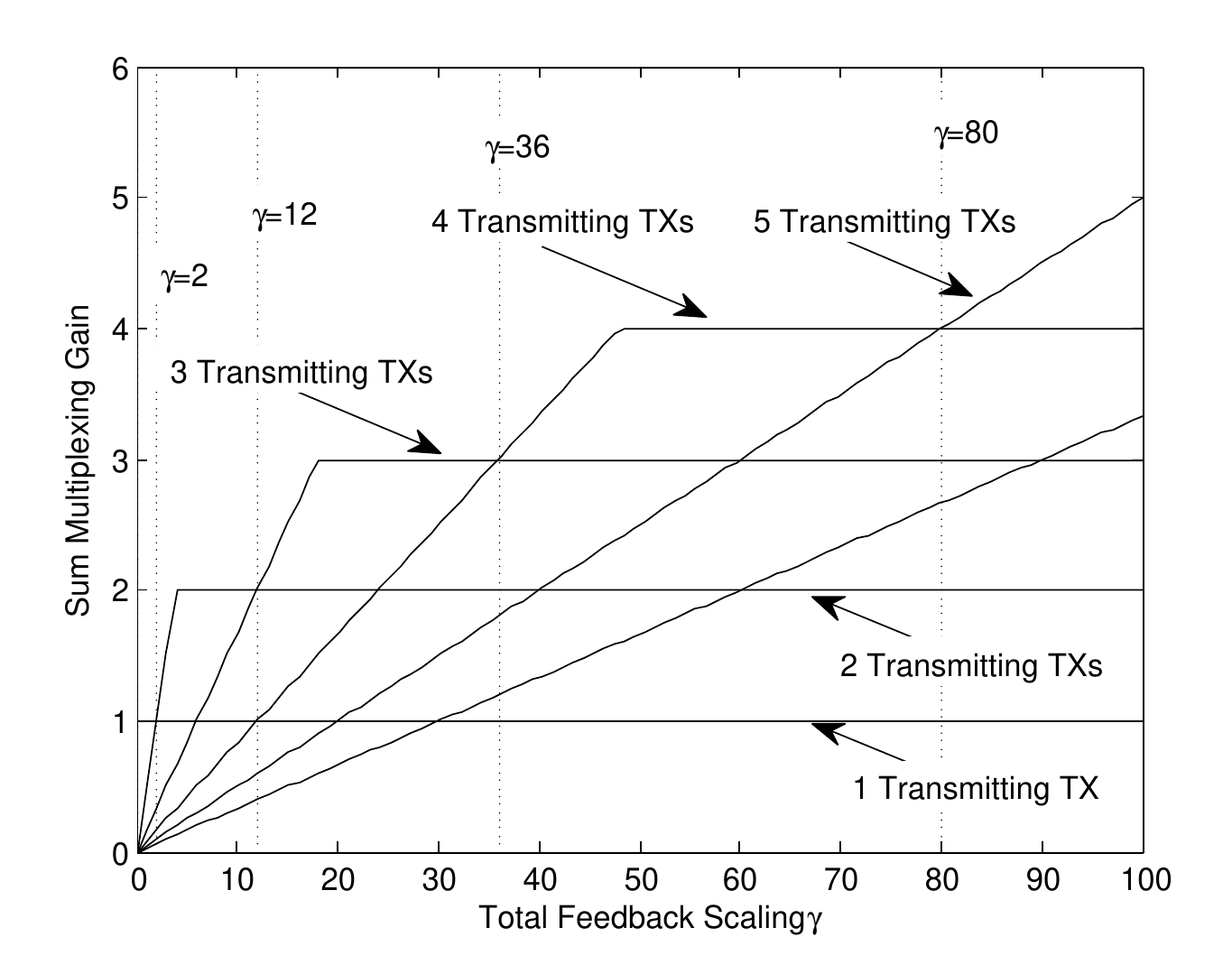}
\caption{Degrees of Freedom as a function of the total feedback scaling $\gamma$ for different number of users.}
\label{fig_DoF_sharing_cZF}
\end{figure}

The results from Proposition~\ref{prop_DoF_sharing_cZF} are very intuitive, yet the formula is not very enlightening and the intuition is better understood in a plot of the number of DoFs with optimal CSI sharing. Thus, we plot in Figure~\ref{fig_DoF_sharing_cZF} the number of DoFs in terms of the total feedback scaling $\gamma$ for different numbers of transmitting TXs. The parts with a positive slope correspond to values of $\alpha$ smaller than one while the flat parts correspond to a saturation of the number of DoFs, i.e., $\alpha\geq 1$. 

The values of $\gamma$ corresponding to the saturation of the number of DoFs and to the activation of an additional user, respectively, are given in Appendix~\ref{se:App_DoF_sharing}. When $n$~TXs are transmitting, the slope of the number of DoFs as a function of $\gamma$ is known to be equal to $1/(n^2(n-1))$ and we can observe in the figure how the values for $\gamma$ given in the proposition fit with the observation in terms of saturation and intersection of the curves.

It is possible to observe that the saturated parts are optimal for some values of $\gamma$. This follows from the fact that using an additional TX induces an increase of the feedback necessary (lower slope in the figure). Thus, a possibly large increase in $\gamma$ is necessary before reaching the point where it starts being more interesting to serve the additional RX and use an additional TX.

\subsection{Extension to Active-Passive Zero Forcing}
Our analysis for conventional ZF can be extended to Active-Passive ZF without difficulty. The only difference consists in the number of bits necessary to achieve a scaling of~$\alpha$ which is then $n(n-1)^2\alpha\log_2(P)$ instead of $n^2(n-1)\alpha\log_2(P)$ since one TX (passive TX) does not need to be shared any CSI. Thus, it holds that $\alpha=\gamma/(n(n-1)^2)$ which leads to the following result.

\begin{proposition}
When using Active-Passive ZF (with or without HQ), it is optimal to share equally the number of bits across the active TXs and across the channels, and to let the number of transmitting TXs be equal to $n$ for $\gamma \in [n(n-1)^2,(n+1)n^2]$. It follows that the optimal number of DoFs is equal to 
\begin{equation}
\begin{cases}
\DoF^{\mathrm{APZF}}=\gamma/(n-1)^2,&\text{  if $\gamma \in [(n-1)^3,n(n-1)^2]$}\\
\DoF^{\mathrm{APZF}}=n,&\text{  if $\gamma \in [n(n-1)^2,n^3]$}.
\end{cases}
\end{equation}
\label{prop_DoF_sharing_apZF}
\end{proposition}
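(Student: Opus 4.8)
The plan is to follow the proof of Proposition~\ref{prop_DoF_sharing_cZF} almost line for line, replacing the conventional feedback count by the Active-Passive one. The two facts I need are already available: the DoFs formulas of Proposition~\ref{prop_DoF_apZF} and Proposition~\ref{prop_DoF_apZF_HQ}, which in the fully symmetric regime both collapse to $n\alpha$, and the relation $\gamma=n(n-1)^2\alpha$ stated just before the proposition. First I fix the number $n$ of transmitting TXs (hence $n$ served RXs) and argue that equal allocation of the scaling coefficients is optimal. Both AP-ZF DoFs expressions are sums of minima of the coefficients $\alpha_\ell^{(j)}$ over the \emph{active} TXs $j\neq n_i$; such a sum of minima is a concave function of the coefficients, the budget set $\{\sum_{i,j}\alpha_i^{(j)}\leq\gamma,\ \alpha_i^{(j)}\geq 0\}$ is convex, and the whole problem is invariant under permutations of the active TX indices and of the channel indices. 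A symmetric concave program admits a symmetric maximizer, so the optimum sets $\alpha_\ell^{(j)}=\alpha$ for every served channel and every active TX while wasting nothing on the passive TX, giving $\DoF^{\mathrm{APZF}}=n\alpha$.

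Next I insert the bit count exactly as in the conventional case. A channel estimate of scaling $\alpha$ costs $\alpha(n-1)\log_2(P)$ bits; since only the $n-1$ active TXs (not the passive one) must be fed each of the $n$ channel vectors, the total feedback is $n(n-1)\cdot\alpha(n-1)\log_2(P)=n(n-1)^2\alpha\log_2(P)$, whence $\alpha=\gamma/(n(n-1)^2)$. While $\alpha\leq 1$, i.e. while $\gamma\leq n(n-1)^2$, this yields $\DoF^{\mathrm{APZF}}=n\alpha=\gamma/(n-1)^2$; once $\gamma$ exceeds $n(n-1)^2$ the per-channel scaling saturates at $\alpha=1$ (by the Remark in Section~\ref{se:system_model:CSI:Scaling}, $\alpha>1$ is useless), so the sum freezes at $\DoF^{\mathrm{APZF}}=n$. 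The same derivation applies with HQ: equal-quality active TXs then share one hierarchical codebook, which changes neither the per-link bit count nor the collapsed value $n\alpha$, and this is precisely what justifies the ``with or without HQ'' clause.

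It remains to optimize over $n$. Write $D_n(\gamma)=\min\bigl(\gamma/(n-1)^2,\ n\bigr)$ for the sum DoFs available with $n$ transmitting TXs; the optimal configuration is $\argmax_n D_n(\gamma)$. Each $D_n$ rises with slope $1/(n-1)^2$ up to its saturation abscissa $\gamma=n(n-1)^2$ and is flat at height $n$ thereafter, so a smaller $n$ has a steeper slope but a lower ceiling. The rising branch of $D_n$ overtakes the saturated value $D_{n-1}\equiv n-1$ exactly where $\gamma/(n-1)^2=n-1$, i.e. at $\gamma=(n-1)^3$, and the saturated value $D_n\equiv n$ is in turn overtaken by the rising branch of $D_{n+1}$ where $\gamma/n^2=n$, i.e. at $\gamma=n^3$. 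Hence $n$ transmitting TXs are optimal on $[(n-1)^3,\,n^3]$, which splits into the rising part $[(n-1)^3,\,n(n-1)^2]$ carrying $\DoF^{\mathrm{APZF}}=\gamma/(n-1)^2$ and the flat part $[n(n-1)^2,\,n^3]$ carrying $\DoF^{\mathrm{APZF}}=n$, matching the two claimed branches; note that $n^3=((n+1)-1)^3$, so the saturated regime of $n$ TXs abuts the rising regime of $n+1$ TXs with neither gap nor overlap.

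The main obstacle is making the equal-allocation step airtight for AP ZF rather than the arithmetic of the crossover points. Unlike conventional ZF, the AP-ZF DoFs formula without HQ (Proposition~\ref{prop_DoF_apZF}) nests a minimum over $\ell\neq i$ inside a minimum over $j\neq n_i$ with the two summation indices coupled, so I must confirm that this doubly nested, overlapping minimum is still flattened by an equal allocation and that sparing a single passive TX from all feedback does not induce an asymmetry that would favour unequal coefficients. The concavity-plus-symmetry argument handles this, but it relies on Corollary~\ref{corollary_DoF_apZF} and Proposition~\ref{prop_DoF_apZF_HQ} having already fixed a single common passive TX (optimal for $K\geq4$, resp. $K\geq3$), which is what preserves the permutation symmetry among the remaining active TXs.
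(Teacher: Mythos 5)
Your proof is correct and takes essentially the same route as the paper, which omits its own proof of Proposition~\ref{prop_DoF_sharing_apZF} and states only that it follows the pattern of Proposition~\ref{prop_DoF_sharing_cZF} with the feedback count $n(n-1)^2\alpha\log_2(P)$ replacing $n^2(n-1)\alpha\log_2(P)$ --- exactly your derivation, with your concavity-plus-symmetrization step making the ``equal allocation is clearly optimal'' claim more rigorous than the paper's version. Note additionally that your crossover computation, giving $n$ transmitting TXs optimal on $[(n-1)^3,n^3]$, agrees with the displayed case branches and silently corrects the interval $[n(n-1)^2,(n+1)n^2]$ in the proposition's prose, which appears to be carried over from the conventional-ZF proposition by mistake.
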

The proof and the plot of the number of DoFs in terms of the total feedback scaling $\gamma$ follow both the same pattern as conventional ZF and are omitted to avoid repetition.

The general insight behind those results is that it is better to achieve the maximal number of DoFs at less users instead of serving more users with a lower number of DoFs. This is an intuitive consequence of the very quick increase of the size of the aggregate feedback required in terms of the number of TXs used. 
%
%It is always very intricate to prove the optimality of a precoding scheme. Yet, we have the following conjecture:
%\begin{conjecture} 
%The number of DoFs achieved with Active-Passive ZF and the CSI optimally shared as given in Proposition~\ref{prop_DoF_sharing_apZF} is the maximal number of DoFs under the total feedback constraint \eqref{eq:feedback_constraint}.
%\end{conjecture}
%The basis for this conjecture is the observation that it is inefficient to achieve low number of DoFs at some RXs. Thus, the number of DoFs at the RXs should be maximized and all the orthogonality constraints should be fulfilled with the accuracy corresponding to a CSI scaling of $1$. To fulfill $n-1$ orthogonality constraints at least $n-1$ degrees of freedom are necessary, so that it does not seem possible to use less TX to transmit the symbols and that it is not possible to achieve a larger number of DoFs under the total feedback constraint.

\section{Simulations}

\subsection{In the Two-User Case}\label{se:simulations_two}
We consider two models for the imperfect channel CSI, a statistical model and RVQ. 

In the statistical model, the quantization error is modeled by adding a Gaussian i.i.d. quantization noise to the channel with the covariance matrix at TX~$j$ equal to $\diag([P^{-\alpha_1^{(j)}},P^{-\alpha_2^{(j)}}])$. This corresponds to the scaling in~$P$ of the variance provided in Proposition~\ref{App_distorsion} of Appendix~\ref{se:Appendix_RVQ}. The Gaussian distribution maximizes the entropy for the given variance~\cite{Cover2006} so that we will obtain a priori a lower bound for the performance. Yet, it is expected that only the scaling of the variance will have an impact so that the statistical model should be accurate. The averaging is then done over $10000$~realizations.

% For RVQ, we consider a number of quantizing bits either numerically given or obtained from the CSI scaling as $q_i^{(j)}=\lfloor \alpha_i^{(j)}(K-1)\log_2(P)\rfloor$. 
In the RVQ, we consider a given number of feedback bits and we average over $100$ random codebooks and $1000$ channel realizations. In the simulations, we consider the following precoders: ZF with perfect CSI, conventional ZF [cf.~\eqref{eq:def_cZF_two}], Beacon ZF [cf.~\eqref{eq:def_bZF_two}], and Active-Passive ZF [cf.~\eqref{eq:def_apZF_two}] with heuristic power control and with $3$-bits power control.

In Fig.~\ref{Rate_1_05_0_07_stat}, we consider the statistical model with the CSI scaling $[\alpha_1^ {(1)},\alpha_1^ {(2)}]=[1,0.5]$ and $[\alpha_2^{(1)},\alpha_2^{(2)}]=[0,0.7]$. To emphasize the number of DoFs (i.e., the slope of the curve in the figure), we let the SNR grow large. As expected theoretically, conventional ZF scales with the worst accuracy and saturates at high SNR, while Beacon ZF has a positive slope and Active-Passive ZF performs closer to perfect ZF with a slope only slightly smaller than the optimal one.

\begin{figure}%[htp!] 
\centering
\includegraphics[width=1\columnwidth]{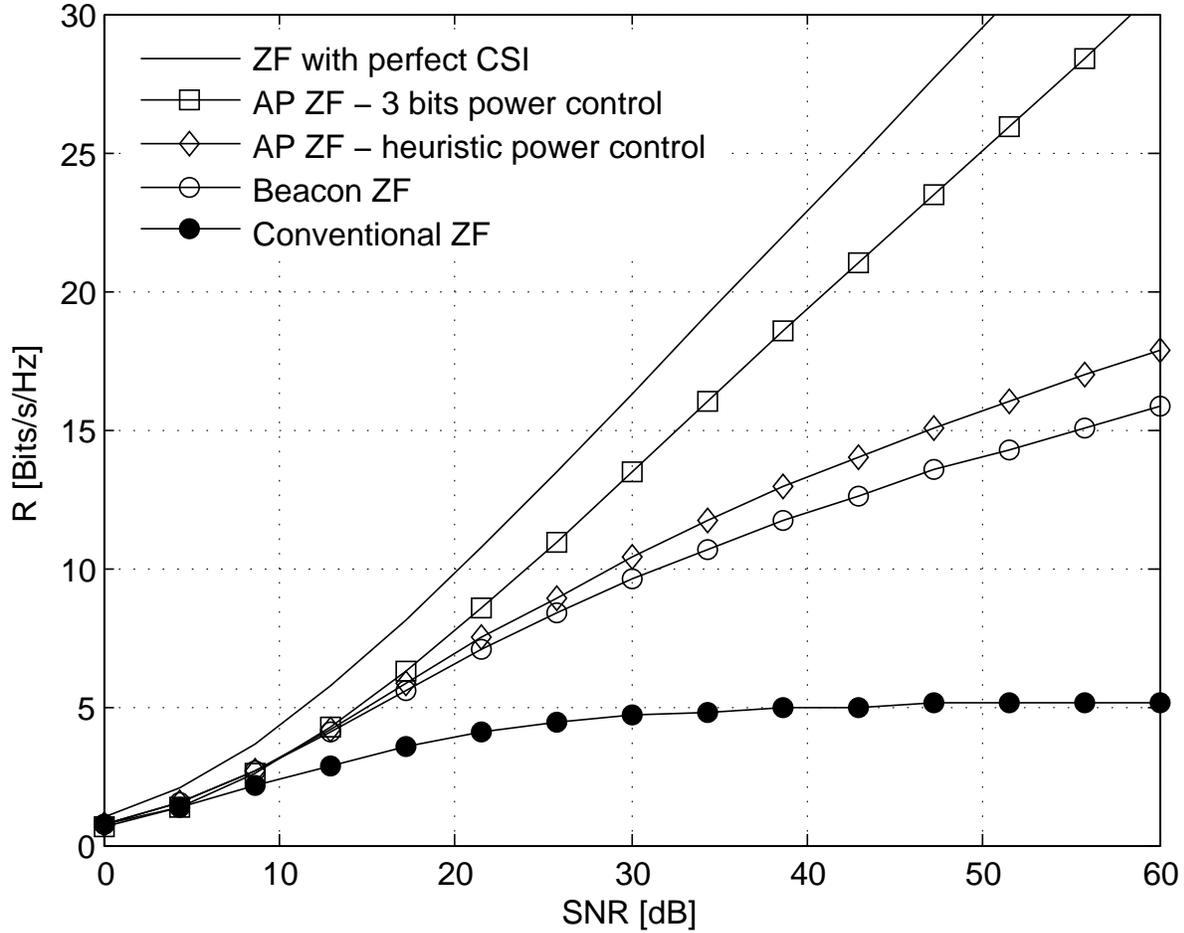}
\caption{Sum rate in terms of the SNR with a statistical modeling of the error from RVQ using $[\alpha_1^ {(1)},\alpha_1^ {(2)}]=[1,0.5]$ and $[\alpha_2^{(1)},\alpha_2^{(2)}]=[0,0.7]$.}
\label{Rate_1_05_0_07_stat}
\end{figure}

In Fig.~\ref{Rate_bits_6_3_3_6_RVQ}, we plot the sum rate achieved with the CSI feedback $[B_1^{(1)},B_1^{(2)}]=[6,3]$ and $[B_2^{(1)},B_2^{(2)}]=[3,6]$ using RVQ. From the theoretical analysis, the number of DoFs should be equal to zero for all the precoding schemes since the number of feedback bits used does not increase with the SNR. This is confirmed by the saturation of the sum rate as the SNR increases. Yet, the saturation occurs at a higher SNR for Beacon ZF compared to conventional ZF, and at an even higher SNR for Active-Passive ZF. This translates into an improvement of the sum rate at intermediate SNR. 

%\begin{figure}%[htp!] 
%\centering
%\includegraphics[width=1\columnwidth]{Rate_bits_6_3_3_6_stat.eps}
%\caption{Sum rate in terms of the SNR with a statistical modeling of the error obtained using $[B_1^{(1)},B_1^{(2)}]=[6,3]$ and $[B_2^{(1)},B_2^{(2)}]=[3,6]$.}
%\label{Rate_bits_6_3_3_6_stat}
%\end{figure}

\begin{figure}%[htp!] 
\centering
\includegraphics[width=1\columnwidth]{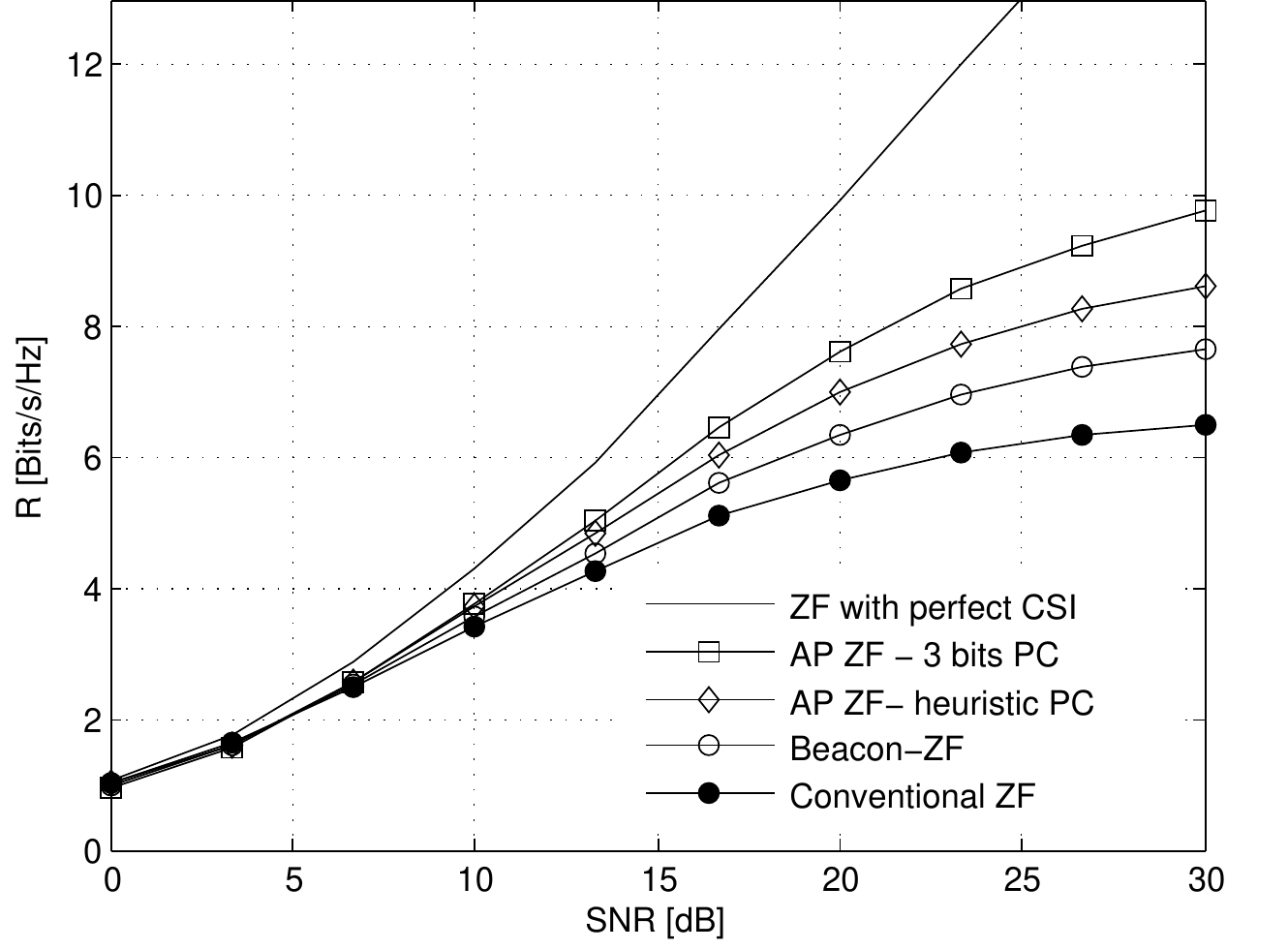}
\caption{Sum rate in terms of the SNR with RVQ using $[B_1^{(1)},B_1^{(2)}]=[6,3]$ and $[B_2^{(1)},B_2^{(2)}]=[3,6]$.}
\label{Rate_bits_6_3_3_6_RVQ}
\end{figure}

\FloatBarrier
\subsection{With Arbitrary Number of Users} \label{se:simulations_general}

For the simulations with arbitrary number of users, only the statistical model described in the previous paragraph for the two-user case is considered. To model easily the use of Hierarchical Quantization, we simply consider that a TX has the knowledge of the channel estimate at another TX if this TX receives a feedback concerning this channel vector with a lower CSI scaling coefficient. Since we have derived that Beacon ZF [Cf.~\eqref{eq:def_bZF}] does not bring any improvement in number of DoFs for $K\geq3$, we will consider in the figures only conventional ZF [Cf.~\eqref{eq:def_cZF}] and Active-Passive ZF [Cf.~\eqref{eq:def_apZF}] where the transmission of $3$-bits to the \emph{passive} TX is allowed for every beamforming vector. For both precoding schemes, we will furthermore consider both the case of Hierarchical Quantization with random codebooks and conventional RVQ.

We consider the performance achieved with an arbitrary chosen CSI scaling matrix to verify that the precoding schemes behave as expected. Thus, we consider $K=7$ users and we set all the elements of the CSI scaling matrix $\bm{\alpha}$ equal to $1$ at the exception of two coefficients corresponding to different TXs and RXs set to $0$ and $0.3$, respectively. The CSI scaling matrix is given explicitly in Appendix~\ref{se:App_CSI_matrix} as well as the number of DoFs obtained analytically for that setting. 

%\begin{figure}%[htp!] 
%\centering 
%\includegraphics[width=1\columnwidth]{DoF_Chosen_conf_1.eps}
%\caption{Sum rate achieved divided by the logarithm of the SNR for the arbitrarily chosen CSI scaling configuration $\alpha$ given in Appendix~\ref{se:App_CSI_matrix}.}
%\label{DoF_Chosen_conf_1}
%\end{figure}  
%%
%In Fig.~\ref{DoF_Chosen_conf_1}, the sum rate achieved divided by the logarithm of the SNR is plotted for the setting previously described to get an estimate of the number of DoFs achieved. The simulations are coherent with the analytical results and we observe particularly the important increase in number of DoFs brought by the precoding schemes using HQ.

\begin{figure}%[htp!] 
\centering 
\includegraphics[width=1\columnwidth]{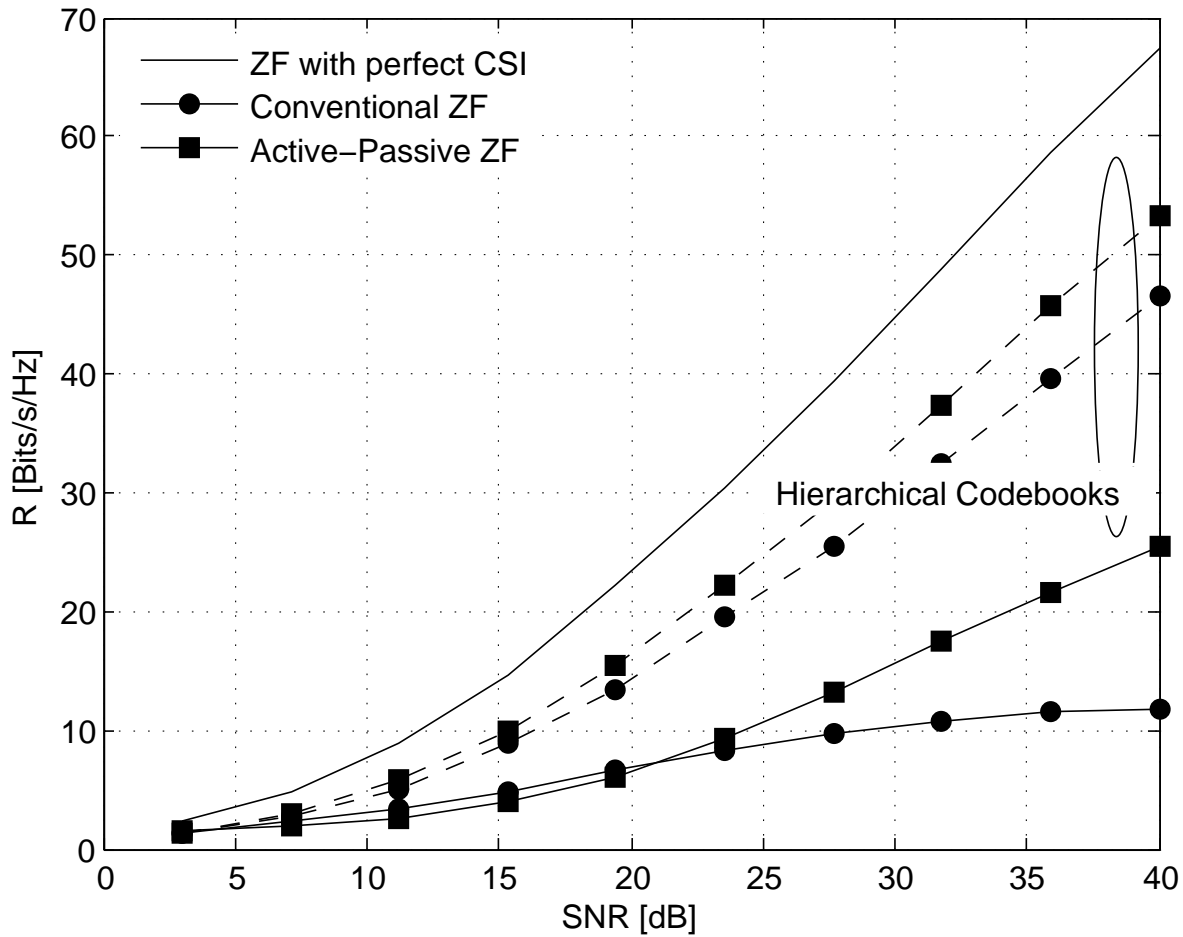}
\caption{Sum rate achieved for the arbitrarily chosen CSI scaling configuration $\alpha$ given in Appendix~\ref{se:App_CSI_matrix}.}
\label{Rate_Chosen_conf_1}
\end{figure} 

In Fig.~\ref{Rate_Chosen_conf_1}, we plot the average sum rate achieved for the previous setting in terms of the SNR. We can observe that the schemes using HQ achieve a much larger number of DoFs (i.e., slope in terms of the SNR) which is in agreement with the theoretical results. Furthermore, the increase in number of DoFs translates to better performance at intermediate SNRs.

\FloatBarrier

%\clearpage

\section{Conclusion} 
In this work, we have introduced a new model, called distributed CSI-MIMO channel, consisting in a multicell downlink channel where each transmitter has its own local estimate of the whole multi-user channel. We have shown that conventional ZF precoding applied without taking into account the CSI discrepancies achieves far from the maximal number of DoFs and is limited by the worst accuracy of the CSI over the whole multi-user channel. This is particularly striking as the bad estimate of the channel to one particular user at a unique TX reduces the number of DoFs of all the users. This represents a different behavior from the conventional MIMO BC. In the particular case with only two users, we have provided a precoding scheme achieving the number of DoFs corresponding to the most accurate CSI across the TXs. With arbitrary number of users, the number of DoFs achieved by conventional ZF has been derived and precoding schemes to improve over this number of DoFs value have been provided. Particularly, it has been shown how using codebooks with a hierarchical structure to quantize the CSI could lead to a significant number of DoFs improvement. Moreover, considering the opposite problem of optimizing the sharing of the CSI feedback under a total feedback constraint, we have derived a number of DoFs maximizing CSI configuration when ZF is used. Finally, simulations have confirmed that the novel precoding schemes outperform known linear precoding schemes at intermediate SNRs. 

This paper represents the first step on our work on the DCSI-MIMO channel and many problems remain open. Firstly, the DCSI-MIMO channel has been studied asymptotically for analytical tractability and the extension to finite SNR represents a challenging problem. The design of other robust precoders forms also an interesting problem with a strong potential. Finally, there are many other scenarios where distributed TXs want to cooperate but cannot practically share the exact same CSI (Relay channels, interference channels,...). In such settings, similar analysis could be developed to make the transmission more robust to the CSI discrepancies which are likely to exist in practical settings.

\FloatBarrier

%%%%%%%%%%%%%%%%%%%%%%%%%%%%%%%%%%%%%%%%%%%%%%%%%%%%%%%%%%%%%%%%%%%%%%%%%%%%%%%%%%%%%%%%%%
%%%%%%%%%%%%%%%%%%%%%%%%%%%%%%%%%%%%%%%%%%%%%%%%%%%%%%%%%%%%%%%%%%%%%%%%%%%%%%%%%%%%%%%%%%
%%%%%%%%%%%%%%%%%%%%%%%%%%%%%%%%%%%%%%%%%%%%%%%%%%%%%%%%%%%%%%%%%%%%%%%%%%%%%%%%%%%%%%%%%%
%%%%%%%%%%%%%%%%%%%%%%%%%%%%%%%%%%%%%%%%%%%%%%%%%%%%%%%%%%%%%%%%%%%%%%%%%%%%%%%%%%%%%%%%%%
\newpage
\section{Appendix}
%%%%%%%%%%%%%%%%%%%%%%%%%%%%%%%%%%%%%%%%%%%%%%%%%%%%%%%%%%%%%%%%%%%%%%%%%%%%%%%%%%%%%%%%%%
%%%%%%%%%%%%%%%%%%%%%%%%%%%%%%%%%%%%%%%%%%%%%%%%%%%%%%%%%%%%%%%%%%%%%%%%%%%%%%%%%%%%%%%%%%
%%%%%%%%%%%%%%%%%%%%%%%%%%%%%%%%%%%%%%%%%%%%%%%%%%%%%%%%%%%%%%%%%%%%%%%%%%%%%%%%%%%%%%%%%%
%%%%%%%%%%%%%%%%%%%%%%%%%%%%%%%%%%%%%%%%%%%%%%%%%%%%%%%%%%%%%%%%%%%%%%%%%%%%%%%%%%%%%%%%%%
\subsection{Some Results on Vector Quantization}\label{se:Appendix_RVQ}
We consider the quantization of the unit-norm complex vector $\tilde{\bm{h}}\in \mathbb{C}^{K}$ over a codebook $\mathcal{C}$ where both the channel to quantize and the elements of the codebook are multiplied by a unit-norm complex number (i.e., are rotated in the complex space) so as to let the first element of the vector be real valued. The quantized vector $\hat{\bm{h}}$ is then obtained as
\begin{equation}
\hat{\bm{h}}=\argmin_{\bm{c}\in\mathcal{C}}\|{\bm{c}}-\tilde{\bm{h}}\|.
\label{eq:App_RVQ_1}
\end{equation}
The rotation is done so as to optimize the performance of the quantization as it clearly leads to better performance. Since the norm is conserved when considering the canonical isomorphism from $\mathbb{C}^{K}$ to $\mathbb{R}^{2K}$, we can consider for the quantization the vectors as elements of $\mathbb{R}^{2K}$ made of the stacked real and imaginary parts of the original vector. 

With the first coefficient real valued,  it is only necessary to consider~$\mathbb{R}^{2K-1}$. Thus, a vector $\bm{u}=[u_1,u_2,\ldots,u_K]^{\trans}\in\mathbb{C}^{K}$ with its first coefficient real valued is represented in $\mathbb{R}^{2K-1}$ as $\bm{u}_{\mathbb{R}^{2K-1}}$ and is defined as
\begin{equation}
\bm{u}_{\mathbb{R}^{2K-1}}\triangleq\begin{bmatrix}
\mathrm{Re}(u_1)&
\mathrm{Re}(u_2)&
\hdots&
\mathrm{Re}(u_K)&
\mathrm{Im}(u_2)&
\mathrm{Im}(u_3)&
\hdots&
\mathrm{Im}(u_K)
\end{bmatrix}^{\trans}.
\label{eq:App_RVQ_2}
\end{equation}
We can then define the angle between $\bm{u}_{\mathbb{R}^{2K-1}}$ and $\bm{v}_{\mathbb{R}^{2K-1}}$ in $\mathbb{R}^{2K-1}$ as 
\begin{equation}
\angle(\bm{u}_{\mathbb{R}^{2K-1}},\bm{v}_{\mathbb{R}^{2K-1}})\triangleq\arccos\left(\frac{\bm{u}_{\mathbb{R}^{2K-1}}^{\trans}\bm{v}_{\mathbb{R}^{2K-1}}}{\norm{\bm{u}_{\mathbb{R}^{2K-1}}}\norm{\bm{v}_{\mathbb{R}^{2K-1}}}}\right).
\label{eq:App_RVQ_3}
\end{equation}
Using the conservation of the norm by the canonical isomorphism, the quantization in~\eqref{eq:App_RVQ_1} is rewritten as
\begin{align} 
\hat{\bm{h}}_{\mathbb{R}^{2K-1}}&=\argmin_{\bm{c}_{\mathbb{R}^{2K-1}}\in\mathcal{C}_{\mathbb{R}^{2K-1}}}\|\bm{c}_{\mathbb{R}^{2K-1}}-\tilde{\bm{h}}_{\mathbb{R}^{2K-1}}\|^2\label{eq:App_RVQ_4_1}\\
&=\argmin_{\bm{c}_{\mathbb{R}^{2K-1}}\in\mathcal{W}_{\mathbb{R}^{2K-1}}}(2-2\bm{c}_{\mathbb{R}^{2K-1}}^ {\trans}\tilde{\bm{h}}_{\mathbb{R}^{2K-1}}).
\label{eq:App_RVQ_4_2}
\end{align}
We can see from~\eqref{eq:App_RVQ_4_2} that the quantization scheme aims at maximizing~$\bm{c}_{\mathbb{R}^{2K-1}}^ {\trans}\tilde{\bm{h}}_{\mathbb{R}^{2K-1}}$. This figure of merit can be linked to the commonly used chordal distance~$\mathrm{d}(\bullet)$ which is defined for two vectors as~\cite{Dai2008}
\begin{align} 
\mathrm{d}(\bm{c}_{\mathbb{R}^{2K-1}},\tilde{\bm{h}}_{\mathbb{R}^{2K-1}})&=\sqrt{\sin^2(\angle(\bm{c}_{\mathbb{R}^{2K-1}},\tilde{\bm{h}}_{\mathbb{R}^{2K-1}}))}\label{eq:App_RVQ_5_1}\\
&=\sqrt{1-|\bm{c}^{\trans}_{\mathbb{R}^{2K-1}}\tilde{\bm{h}}_{\mathbb{R}^{2K-1}}|^2}.
\label{eq:App_RVQ_5_2}
\end{align}
Thus, minimizing the chordal distance is equivalent to maximizing~$|\bm{c}^{\trans}_{\mathbb{R}^{2K-1}}\tilde{\bm{h}}_{\mathbb{R}^{2K-1}}|^2$. This is then equivalent to the quantization scheme~\eqref{eq:App_RVQ_3} if the half-space where~$\tilde{\bm{h}}_{\mathbb{R}^{2K-1}}$ belongs is known. This requires solely one additional bit. Since we are interested in the scaling of the number of bits, this will not make any difference. Consequently, we will study in the following the quantization scheme based on the minimization of the chordal distance
\begin{equation} 
\hat{\bm{h}}_{\mathbb{R}^{2K-1}}=\argmin_{\bm{c}_{\mathbb{R}^{2K-1}}\in\mathcal{C}_{\mathbb{R}^{2K-1}}}\sqrt{\sin^2(\angle(\bm{c}_{\mathbb{R}^{2K-1}},\tilde{\bm{h}}_{\mathbb{R}^{2K-1}}))}.
\label{eq:App_RVQ_6}
\end{equation} 
On that account, we now study the quantization scheme given by \eqref{eq:App_RVQ_6} over the Grassmannian manifold of dimensions $(1,2K-1)$ in the field~$\mathbb{R}$ (i.e., on the unitary ball in~$\mathbb{R}^{2K-1}$). This quantization scheme is studied (in a much more general form) in~\cite{Dai2008} and we start by recalling some results. We then derive some new properties which will be needed in the derivations. \footnote{We will do the abuse of notation consisting in removing the index $\bullet_{\mathbb{R}^{2K-1}}$ in the derivations but it will be clear that any mention of an angle will refer to the angle defined in $\mathbb{R}^{2K-1}$.}

\begin{proposition}[\cite{Dai2008}, Corollary~$2$]
The cumulative distribution function (CDF) of $\mathrm{d}^2(\tilde{\bm{h}},\bm{c})\triangleq\sin^2(\angle(\tilde{\bm{h}},\bm{c}))$ where $\bm{c}\in \mathbb{R}^{2K-1}$ is an element of a random codebook is bounded as	
\begin{equation} 
c_{2K-1}x^{K-1}\leq\mathrm{F}(x)\triangleq\mathrm{Pr}\{\sin^2(\angle(\tilde{\bm{h}},\bm{c}))\leq x\}\leq c_{2K-1}x^{K-1}(1-x)^{\frac{-1}{2}}.
\label{eq:App_RVQ_7}
\end{equation}
where $c_{2K-1}\triangleq \Gamma(K-1/2)/(\Gamma(K)\Gamma(1/2))$.
\label{App_CDF}
\end{proposition}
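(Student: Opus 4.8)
The plan is to reduce the statement to the law of a single angle and then to a one-dimensional integral that is recognizable as a (regularized) incomplete Beta function. First I would use that a codeword $\bm{c}$ of a random codebook is drawn from the invariant measure on the real Grassmannian $G(1,2K-1)$, independently of the fixed direction $\tilde{\bm{h}}$. By rotational invariance I may place $\tilde{\bm{h}}$ along a coordinate axis, so that all the relevant randomness is carried by the acute angle $\theta\triangleq\angle(\tilde{\bm{h}},\bm{c})\in[0,\pi/2]$. The standard surface-area (spherical-cap) computation on the unit sphere of $\mathbb{R}^{2K-1}$, followed by the antipodal folding onto $[0,\pi/2]$ appropriate to lines, then gives that the probability density of $\theta$ is proportional to $\sin^{2K-3}(\theta)$, the exponent being the ambient dimension $2K-1$ minus two.

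Next I would write $\mathrm{F}(x)$ as the normalized integral of this density and change variables to $u=\sin^2(\theta)$, which is exactly the quantity $\mathrm{d}^2(\tilde{\bm{h}},\bm{c})$ whose distribution we want. Since $\mathrm{d}u=2\sin\theta\cos\theta\,\mathrm{d}\theta$, one has $\sin^{2K-3}(\theta)\,\mathrm{d}\theta\propto u^{K-2}(1-u)^{-1/2}\,\mathrm{d}u$, so that
\begin{equation}
\mathrm{F}(x)=\frac{\int_0^x u^{K-2}(1-u)^{-1/2}\,\mathrm{d}u}{\int_0^1 u^{K-2}(1-u)^{-1/2}\,\mathrm{d}u}.
\end{equation}
The denominator is the Beta integral $B(K-1,1/2)=\Gamma(K-1)\Gamma(1/2)/\Gamma(K-1/2)$, so its reciprocal, after absorbing the factor $1/(K-1)$ coming from $\int_0^x u^{K-2}\,\mathrm{d}u$ and using $(K-1)\Gamma(K-1)=\Gamma(K)$, produces precisely the constant $c_{2K-1}=\Gamma(K-1/2)/(\Gamma(K)\Gamma(1/2))$ announced in the statement.

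The two bounds then follow purely from the monotonicity of $u\mapsto(1-u)^{-1/2}$ on $[0,x]$. For the lower bound I would use $(1-u)^{-1/2}\geq 1$ to get $\int_0^x u^{K-2}(1-u)^{-1/2}\,\mathrm{d}u\geq x^{K-1}/(K-1)$, yielding $\mathrm{F}(x)\geq c_{2K-1}x^{K-1}$; for the upper bound I would use $(1-u)^{-1/2}\leq(1-x)^{-1/2}$ on the same interval, which pulls out the factor $(1-x)^{-1/2}$ and gives $\mathrm{F}(x)\leq c_{2K-1}x^{K-1}(1-x)^{-1/2}$. I expect the only delicate step to be the first one, namely correctly justifying the exponent $2K-3$ in the angular density and the folding of the polar angle onto $[0,\pi/2]$ for the real Grassmannian $G(1,2K-1)$; the change of variables and the two one-line estimates are then routine. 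This is exactly Corollary~2 of \cite{Dai2008}, which one may also invoke directly rather than rederiving it.
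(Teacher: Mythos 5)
Your proposal is correct, and it is worth noting that the paper itself does not prove this statement at all: it is imported verbatim as Corollary~2 of \cite{Dai2008}, so your closing remark (``one may also invoke it directly'') is exactly what the authors do. What you supply beyond the paper is a self-contained rederivation, and it checks out in every detail: for a codeword drawn from the isotropic measure, rotational invariance reduces everything to the polar angle on the unit sphere of $\mathbb{R}^{2K-1}$, whose density is indeed $\propto\sin^{2K-3}\theta$ (ambient dimension minus two); the substitution $u=\sin^2\theta$ gives $\sin^{2K-3}\theta\,\diffd\theta\propto u^{K-2}(1-u)^{-1/2}\diffd u$, so the exact CDF is the regularized incomplete Beta function $\mathrm{F}(x)=B(K-1,\tfrac12)^{-1}\int_0^x u^{K-2}(1-u)^{-1/2}\diffd u$, and the identity $(K-1)\Gamma(K-1)=\Gamma(K)$ turns $\bigl((K-1)B(K-1,\tfrac12)\bigr)^{-1}$ into precisely $c_{2K-1}=\Gamma(K-\tfrac12)/(\Gamma(K)\Gamma(\tfrac12))$; the two bounds then follow from $1\leq(1-u)^{-1/2}\leq(1-x)^{-1/2}$ on $[0,x]$. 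The one point you flag as delicate, the folding onto $[0,\pi/2]$, is in fact immaterial here: since $\sin^2(\angle(\tilde{\bm{h}},\bm{c}))$ is invariant under $\bm{c}\mapsto-\bm{c}$, the CDF is the same whether one works with points on the sphere (angle in $[0,\pi]$, as in the paper's definition of $\angle(\cdot,\cdot)$) or with lines in $G(1,2K-1)$; this is consistent with the paper's own remark in Appendix~A that resolving the half-space ambiguity costs only one extra bit and does not affect the scaling. So your argument is a valid, more elementary replacement for the citation, at the modest cost of rederiving a special case ($q=1$, real field) of Dai's general Grassmannian result.
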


\begin{proposition}[\cite{Dai2008}, Theorem~$2$]
When the size $L=2^B$ of the random codebook is sufficiently large ($c_{2K-1}^{-1/(K-1)}2^{-B/(K-1)}\leq 1$ is necessary), then it holds that
\begin{equation} 
\frac{2K-1}{2K+1}c_{2K-1}^{-1/(K-1)}2^{-B/(K-1)} \lesssim \E_{\mathcal{C},\tilde{\bm{h}}}[\min_{\bm{c}\in\mathcal{C}}\sin^2(\angle(\tilde{\bm{h}},\bm{c}))]\lesssim \frac{\Gamma(\frac{1}{K-1})}{K-1}c_{2K-1}^{-1/(K-1)}2^{-B/(K-1)}.
\label{eq:App_RVQ_8}
\end{equation} 
\label{App_distorsion}
\end{proposition}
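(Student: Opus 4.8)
The plan is to treat $X_{\bm{c}}\triangleq\sin^2(\angle(\tilde{\bm{h}},\bm{c}))$, over the $L=2^B$ codewords $\bm{c}\in\mathcal{C}$, as i.i.d. random variables: by the rotational invariance of the isotropic codebook one may fix $\tilde{\bm{h}}$ and regard the $X_{\bm{c}}$ as independent draws with common CDF $\mathrm{F}$, two-sidedly estimated in Proposition~\ref{App_CDF}. The object of interest is $X_{\min}\triangleq\min_{\bm{c}\in\mathcal{C}}X_{\bm{c}}\in[0,1]$, so I would start from the tail representation $\E[X_{\min}]=\int_0^1\Pr\{X_{\min}>x\}\intd x=\int_0^1(1-\mathrm{F}(x))^L\intd x$, where independence gives $\Pr\{X_{\min}>x\}=(1-\mathrm{F}(x))^L$. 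This reduces both inequalities to inserting the estimates of $\mathrm{F}$ into a single integral and evaluating it for large $L$.

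For the upper bound on the distortion I would use the lower CDF estimate $\mathrm{F}(x)\geq c_{2K-1}x^{K-1}$ (one checks $c_{2K-1}\leq 1/2$, so the base stays in $[1/2,1]$ and no truncation is needed), giving $\E[X_{\min}]\leq\int_0^1(1-c_{2K-1}x^{K-1})^L\intd x$. The substitution $t=c_{2K-1}x^{K-1}$ turns this into $\frac{c_{2K-1}^{-1/(K-1)}}{K-1}\int_0^{c_{2K-1}}(1-t)^L t^{\frac{1}{K-1}-1}\intd t$, which is at most the complete Beta integral $\frac{c_{2K-1}^{-1/(K-1)}}{K-1}\mathrm{B}(\frac{1}{K-1},L+1)$. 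The large-$L$ asymptotics $\mathrm{B}(a,L+1)=\frac{\Gamma(a)\Gamma(L+1)}{\Gamma(L+1+a)}\sim\Gamma(a)L^{-a}$ with $a=\frac{1}{K-1}$, together with $L^{-1/(K-1)}=2^{-B/(K-1)}$, then yields exactly the claimed upper bound with constant $\frac{\Gamma(1/(K-1))}{K-1}$.

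For the lower bound I would insert the upper CDF estimate $\mathrm{F}(x)\leq c_{2K-1}x^{K-1}(1-x)^{-1/2}$, restricting the integral to the interval $[0,x_1]$ on which this estimate does not exceed $1$, so that $1-\mathrm{F}(x)\geq 1-c_{2K-1}x^{K-1}(1-x)^{-1/2}\geq 0$ and the inequality survives being raised to the $L$-th power. Because the codebook is large, $x_1\to 0$, so essentially all the mass of the integral concentrates near the origin, where the singular factor $(1-x)^{-1/2}$ is close to one; bounding $(1-x)^{-1/2}$ on the effective support by its value at a suitable endpoint reduces the expression to the same Beta-type integral as before, now carrying a correction factor that tends to one. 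Here I would invoke the Gautschi-type estimate $\frac{\Gamma(L+1)}{\Gamma(L+1+a)}\geq(L+1)^{-a}$ to keep the Gamma ratio under control for finite $L$.

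The step I expect to be the main obstacle is this lower bound, and specifically extracting the precise rational constant $\frac{2K-1}{2K+1}$ rather than merely the correct scaling $2^{-B/(K-1)}$. The difficulty is entirely caused by the $(1-x)^{-1/2}$ singularity: a naive leading-order treatment of it reproduces the tight constant $\frac{\Gamma(1/(K-1))}{K-1}$ in the limit, so the looser but clean constant $\frac{2K-1}{2K+1}$ must emerge from a careful finite-$L$ bound on the truncated integral under the stated hypothesis $c_{2K-1}^{-1/(K-1)}2^{-B/(K-1)}\leq 1$ (equivalently $c_{2K-1}L\geq 1$, which is exactly what guarantees $x_1$ is well defined and small). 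Managing the interplay between the truncation point $x_1$, the Gamma-ratio estimate, and the $(1-x)^{-1/2}$ factor is the delicate part; by comparison the reduction to the tail integral and the entire upper bound are routine.
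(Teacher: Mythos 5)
The first thing to note is that the paper does not prove this proposition at all: it is imported verbatim from \cite{Dai2008} (Theorem~2), and the closest in-paper material is the proof of Proposition~\ref{App_log_distorsion}, which replays Dai's machinery (the optimal empirical CDF of Lemma~\ref{App_lemma_CDF} and the truncation points $x_0$, $x_{\mathrm{ub}}$). So you are effectively being compared against the cited source. Your reconstruction is the standard one and your upper bound is correct and complete: conditioning on $\tilde{\bm{h}}$, the $X_{\bm{c}}$ are indeed i.i.d.\ with CDF $\mathrm{F}$, the tail representation $\E[X_{\min}]=\int_0^1(1-\mathrm{F}(x))^L\,\mathrm{d}x$ is valid, the lower CDF estimate of Proposition~\ref{App_CDF} applies on all of $[0,1]$ (and $c_{2K-1}\leq 1/2$ keeps the integrand nonnegative), and the substitution plus $\mathrm{B}\bigl(\tfrac{1}{K-1},L+1\bigr)\sim\Gamma\bigl(\tfrac{1}{K-1}\bigr)L^{-1/(K-1)}$ gives exactly the stated constant $\Gamma\bigl(\tfrac{1}{K-1}\bigr)/(K-1)$.

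On the lower bound, two corrections. First, a technical slip: $x_1$, the root of $c_{2K-1}x^{K-1}(1-x)^{-1/2}=1$, is a constant independent of $L$ and does \emph{not} tend to $0$ (the hypothesis $c_{2K-1}L\geq 1$ controls $x_0=(c_{2K-1}L)^{-1/(K-1)}$, not $x_1$); what shrinks is the effective scale $(c_{2K-1}L)^{-1/(K-1)}$ of the integrand. Moreover, bounding $(1-x)^{-1/2}$ by its value at the \emph{fixed} endpoint $x_1$ is too lossy: for $K=2$ one has $c_3=1/2$, $x_1=2\sqrt{2}-2\approx 0.83$, and the resulting constant $\Gamma(1)\cdot(1-x_1)^{1/2}=\sqrt{2}-1\approx 0.41<3/5$. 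You must instead truncate at a point $a_L\to 0$ with $a_L L^{1/(K-1)}\to\infty$, so that the incomplete Beta integral saturates to the complete one while the correction $(1-a_L)^{-1/2}\to 1$. Second, and more importantly, the step you single out as the main obstacle is a red herring: the paper's $\lesssim$ is an \emph{asymptotic} inequality, and since $\tfrac{2K-1}{2K+1}<\Gamma\bigl(1+\tfrac{1}{K-1}\bigr)=\tfrac{\Gamma(1/(K-1))}{K-1}$ for every $K\geq 2$ (e.g.\ $3/5<1$ at $K=2$, $5/7<\Gamma(3/2)\approx 0.886$ at $K=3$, and $1-\tfrac{2}{2K+1}<1-\tfrac{\gamma}{K-1}+O\bigl(\tfrac{1}{(K-1)^2}\bigr)$ with $\gamma\approx 0.577<1$ for large $K$), the sharp two-sided asymptotics $\E[X_{\min}]\sim\Gamma\bigl(1+\tfrac{1}{K-1}\bigr)(c_{2K-1}L)^{-1/(K-1)}$ that your own Beta computation delivers already implies the stated lower bound a fortiori; no finite-$L$ bookkeeping with Gautschi-type estimates is needed, and nothing in your sketch would (or needs to) produce the rational constant, which in \cite{Dai2008} arises from a separate non-asymptotic argument. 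It is also worth noting that the union-bound/optimal-codebook route used in the paper's proof of Proposition~\ref{App_log_distorsion} would only yield the constant $\tfrac{K-1}{K}<\tfrac{2K-1}{2K+1}$ here, so your choice of working directly with $(1-\mathrm{F}(x))^L$ is the right one; with the shrinking truncation point and the observation above, your argument closes completely.
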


\begin{proposition}% ($c_{2K-1}^{-1/(K-1)}2^{-B/(K-1)}\leq 1$ necessary)
When the size $L=2^B$ of the random codebook is sufficiently large, the expectation of the logarithm of the quantization error is bounded as
\begin{equation} 
\frac{B+\log_2(c_{2K-1})}{(K-1)}\lesssim  \E_{\mathcal{C},\tilde{\bm{h}}}\left[-\log_2\left(\min_{\bm{c}\in\mathcal{C}}\sin^2(\angle(\tilde{\bm{h}},\bm{c}))\right)\right]\lesssim \frac{B+\log_2(c_{2K-1})+\log_2(e)}{(K-1)}.
\label{eq:App_RVQ_9}
\end{equation} 
\label{App_log_distorsion} 
\end{proposition}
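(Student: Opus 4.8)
The plan is to sandwich the random variable $X \triangleq -\log_2\bigl(\min_{\bm{c}\in\mathcal{C}}\sin^2(\angle(\tilde{\bm{h}},\bm{c}))\bigr)$ in expectation by using the distributional bounds on the minimum quantization error already available in Proposition~\ref{App_CDF}. Writing $Z\triangleq\min_{\bm{c}\in\mathcal{C}}\sin^2(\angle(\tilde{\bm{h}},\bm{c}))$, I note that $X=-\log_2(Z)$ is a nonnegative decreasing function of $Z$, so it is natural to work with the CDF of $Z$ directly. Because the codebook entries are drawn i.i.d., for a fixed $\tilde{\bm{h}}$ the events $\{\sin^2(\angle(\tilde{\bm{h}},\bm{c}_k))\le x\}$ are independent across the $L=2^B$ codewords, hence the CDF of the minimum is $\mathrm{F}_Z(x)=1-(1-\mathrm{F}(x))^{L}$, where $\mathrm{F}$ is the single-codeword CDF bounded in \eqref{eq:App_RVQ_7}. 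The first step, therefore, is to convert the two-sided bound on $\mathrm{F}$ into a two-sided bound on $\mathrm{F}_Z$, and then to integrate.

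Concretely, I would use the identity $\E[X]=\int_0^{\infty}\mathrm{Pr}\{X>t\}\intd t$ for the nonnegative variable $X$. Since $X>t$ is equivalent to $Z<2^{-t}$, this gives $\E[X]=\int_0^{\infty}\mathrm{F}_Z(2^{-t})\intd t$ (up to the measure-zero boundary). Substituting $x=2^{-t}$, i.e. $t=-\log_2(x)$ and $\intd t = -\intd x/(x\ln 2)$, turns this into an integral over $x\in[0,1]$ of $\mathrm{F}_Z(x)/(x\ln 2)$. Now I plug in the two sides of \eqref{eq:App_RVQ_7}. For the \emph{upper} bound on $\E[X]$ I use the upper bound $\mathrm{F}(x)\le c_{2K-1}x^{K-1}(1-x)^{-1/2}$, and for the \emph{lower} bound on $\E[X]$ I use $\mathrm{F}(x)\ge c_{2K-1}x^{K-1}$; in each case $\mathrm{F}_Z(x)=1-(1-\mathrm{F}(x))^{L}$ is monotone increasing in $\mathrm{F}(x)$, so the inequalities propagate correctly. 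The dominant contribution to both integrals comes from the region $x\lesssim 2^{-B/(K-1)}$ where $\mathrm{F}_Z$ transitions from $0$ to $1$, i.e. where $L\,c_{2K-1}x^{K-1}=O(1)$.

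The cleanest way to finish is to evaluate the resulting integrals asymptotically. For the lower bound, using $\mathrm{F}_Z(x)\ge 1-\e^{-L c_{2K-1}x^{K-1}}$ (since $1-u\le\e^{-u}$) and the substitution $y = L c_{2K-1} x^{K-1}$, the integral $\int_0^1 \mathrm{F}_Z(x)\,\intd x/(x\ln 2)$ reduces, after extracting the leading factor, to $\frac{1}{(K-1)\ln 2}\int (1-\e^{-y})\,\intd y/y$ plus the affine shift $\frac{B+\log_2(c_{2K-1})}{K-1}$ coming from the change of variables $\log_2 x = \frac{1}{K-1}(\log_2 y -B -\log_2 c_{2K-1})$. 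The $O(1)$ Frullani-type integral $\int_0^{\infty}(1-\e^{-y})\,\intd y/y$ is what produces the additive constant $\log_2(e)/(K-1)$ separating the two bounds in \eqref{eq:App_RVQ_9}; the lower bound drops it. For the upper bound I would bound $(1-x)^{-1/2}$ by a constant on the effective integration region $x=O(2^{-B/(K-1)})\to 0$ and show the tail $x$ near $1$ contributes only $o(1)$, so the same leading asymptotics hold with the extra $\log_2(e)$ constant retained.

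The main obstacle will be controlling the integrals carefully enough to land exactly on the stated constants rather than merely the leading order $B/(K-1)$: in particular, justifying that the $(1-x)^{-1/2}$ correction and the finite-codebook condition $c_{2K-1}^{-1/(K-1)}2^{-B/(K-1)}\le 1$ leave the additive constants $\log_2(c_{2K-1})$ and $\log_2(e)$ intact, and that the Frullani integral indeed evaluates to $1$ in the relevant units. I would handle this by isolating the transition region, bounding the contributions away from it as lower-order terms (hence absorbed into the $\lesssim$ relation, which by the paper's notation only needs to hold asymptotically), and matching the affine part of the substitution to read off the dominant $\frac{B+\log_2(c_{2K-1})}{K-1}$ term common to both sides.
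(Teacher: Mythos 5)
Your strategy is sound and, for the upper bound, essentially the paper's: both arguments pass through the layer-cake identity $\E[X]=\int_0^\infty \mathrm{Pr}\{X\geq t\}\intd t$ with $X=-\log_2\bigl(\min_{\bm{c}}\sin^2(\angle(\tilde{\bm{h}},\bm{c}))\bigr)$, plug in the per-codeword CDF bounds of Proposition~\ref{App_CDF}, split at the transition point where $L\mathrm{F}(x)\approx 1$, and control the $(1-x)^{-1/2}$ correction on the effective region --- exactly what the paper does with its $x_0$, $x_{\mathrm{ub}}$, $x_{\mathrm{ubub}}$ machinery; the paper caps the minimum's CDF by $\min(1,L\mathrm{F}(x))$ via the optimal-codebook Lemma~\ref{App_lemma_CDF}, while you use the exact i.i.d.\ CDF $1-(1-\mathrm{F}(x))^L$, and since $1-(1-u)^L\leq\min(1,Lu)$ these coincide for the purpose of the bound. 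For the lower bound your route is genuinely different: the paper expands $1-(1-c_{2K-1}\e^{-z(K-1)})^L$ binomially and evaluates the alternating sum through the identity \eqref{eq:App_RVQ_14}, the harmonic-sum bound \eqref{eq:App_RVQ_16_1} and the logarithm series \eqref{eq:App_RVQ_16_2} (a discrete computation adapted from Jindal), whereas you use $1-u\leq\e^{-u}$ and the continuous substitution $y=Lc_{2K-1}x^{K-1}$. Your version is shorter and, done correctly, even identifies the exact asymptotic constant, which the paper's two-sided bound only sandwiches.

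There is, however, one concrete error in your bookkeeping: the integral $\int_0^\infty(1-\e^{-y})\intd y/y$ is \emph{not} $O(1)$ and does not ``evaluate to $1$'' --- it diverges logarithmically, since the integrand behaves as $1/y$ at infinity. Your substitution actually produces the \emph{truncated} integral $\int_0^{Lc_{2K-1}}(1-\e^{-y})\intd y/y=\ln(Lc_{2K-1})+\gamma+o(1)$, with $\gamma$ the Euler--Mascheroni constant; the affine term $\frac{B+\log_2(c_{2K-1})}{K-1}$ thus comes from the logarithm of the truncation point, and the leftover constant is $\gamma\log_2(\e)/(K-1)\approx 0.83/(K-1)$, not $\log_2(\e)/(K-1)$. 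This is harmless for the lower bound, since the constant is positive and may be dropped to obtain the stated $\gtrsim$, but your attribution of the gap constant $\log_2(\e)/(K-1)$ to this integral is wrong: in the upper bound it arises from the tail estimate after splitting at the transition point $t^*$, namely $\int_{t^*}^\infty L\mathrm{F}_{\mathrm{ub}}(2^{-t})\intd t\leq\frac{1+o(1)}{(K-1)\ln 2}$, which is precisely the second term in the paper's step \eqref{eq:App_RVQ_12_2}--\eqref{eq:App_RVQ_12_3} (one cannot get an \emph{upper} bound from $1-(1-u)^L\geq 1-\e^{-Lu}$, whose direction only serves the lower bound). With this correction, and capping $\mathrm{F}_{\mathrm{ub}}$ at $1$ near $x=1$ where $(1-x)^{-1/2}$ blows up, your plan goes through and recovers both stated bounds.
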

\begin{proof} 
\emph{Upper Bound:} The derivation of an upper bound follows the same idea as the proof in Appendix~$B$ of \cite{Dai2008} which derives an upper bound for the same expectation as in this proof, only without the logarithm. We start by recalling a Lemma from \cite{Dai2008} which follows easily from the definition but is helpful.
\begin{lemma}[\cite{Dai2008}, Lemma~$3$]
The empirical distribution function minimizing the distorsion over a given $L=2^B$ is 
\begin{equation}
 \mathrm{F}^{*}_{\mathcal{C}^{*}}(x)=
     \begin{cases}
        0 & \text{if $x<0$} \\
        L \mathrm{F}(x)& \text{if $0\leq x\leq x^{*}$} \\
        1 & \text{if $x>x^{*}$}
     \end{cases}     
\end{equation}
where $x^{*}$ satisfies $L\mathrm{F}(x^{*})=1$ and $\mathrm{F}(x)\triangleq\mathrm{Pr}\{\sin^2(\angle(\tilde{\bm{h}},\bm{c}))|\leq x\}$.
\label{App_lemma_CDF}
\end{lemma}
Note that Lemma~\ref{App_lemma_CDF} corresponds to the optimal codebook minimizing the average distance and is thusly a lower bound for the distorsion. We can then write
\begin{align}
\E_{\mathcal{C},\tilde{\bm{h}}}\left[-\log\left(\min_{\bm{c}\in\mathcal{C}}\sin^2(\angle(\tilde{\bm{h}},\bm{c}))\right)\right]
&=\int_0^{\infty}\mathrm{Pr}\{-\log\left(\min_{\bm{c}\in\mathcal{C}}\sin^2(\angle(\tilde{\bm{h}},\bm{c}))\right)\geq z\}\mathrm{d}z\label{eq:App_RVQ_10_1}\\
&=\int_0^{\infty}\mathrm{Pr}\{\min_{\bm{c}\in\mathcal{C}}\sin^2(\angle(\tilde{\bm{h}},\bm{c}))\leq e^ {-z}\}\mathrm{d}z\label{eq:App_RVQ_10_2}\\
&\leq \int_0^{-\log(x^*)}\!\!\!\!\!\!\!\!\! \mathrm{d}z+ \int_{-\log(x^*)}^{-\infty} L\mathrm{Pr}\{\sin^2(\angle(\tilde{\bm{h}},{\bm{c}}))\leq e^ {-z}\}\mathrm{d}z \label{eq:App_RVQ_10_3}
\end{align}
where \eqref{eq:App_RVQ_10_1} is obtained by exploiting the fact that the term in the expectation is a positive random variable and \eqref{eq:App_RVQ_10_3} follows from the previous lemma since the optimal codebook has a CDF taking larger value that the CDF for a random codebook for every value of the argument $x$.
 
Following the same approach as the proof in Appendix~$B$ of \cite{Dai2008}, we define $\mathrm{F}_0(x)\triangleq c_{2K-1} x^{K-1}$ and $x_0$ so that $L\mathrm{F}_0(x_0)=1$. Let also define $\mathrm{F}_{\mathrm{ub}}(x)\triangleq c_{2K-1} x^{K-1}(1-x)^{-1/2}$ and $x_{\mathrm{ub}}$ so that $L\mathrm{F}_{\mathrm{ub}}(x_{\mathrm{ub}})=1$. Finally, we define $\mathrm{F}_{\mathrm{ubub}}(x)\triangleq c_{2K-1} x^{K-1}(1-x_0)^{-1/2}$ and $x_{\mathrm{ubub}}$ so that $L\mathrm{F}_{\mathrm{ubub}}(x_{\mathrm{ubub}})=1$.

It holds by construction that $x_{\mathrm{ub}}\leq x^{*}\leq x_0$ since we know from Proposition~\ref{App_CDF} that $\mathrm{F}_0(x)\leq \mathrm{F}(x)\leq \mathrm{F}_{\mathrm{ub}}(x)$. Clearly it follows that $(1-x)^{-1/2}\leq(1-x_0)^{-1/2}$ for $x\in [0,x_0]$ so that $\mathrm{F}_{\mathrm{ub}}(x)\leq \mathrm{F}_{\mathrm{ubub}}(x)$ for $x\in [0,x_0]$, which finally implies $x_{\mathrm{ubub}}\leq x_{\mathrm{ub}}$. We can then use these relations to derive an upper bound for~\eqref{eq:App_RVQ_10_3}.
\begin{align}
\E_{\mathcal{C},\tilde{\bm{h}}}\left[-\log\left(\min_{\bm{c}\in\mathcal{C}}\sin^2(\angle(\tilde{\bm{h}},\bm{c}))\right)\right]
&\leq\int_0^{-\log(x^*)} \mathrm{d}z+ \int_{-\log(x^*)}^{-\infty} L\mathrm{F}(e^{-z})\mathrm{d}z\label{eq:App_RVQ_11_1}\\  
& \leq \int_0^{-\log(x_{\mathrm{ubub}})}\mathrm{d}z+\int_{-\log(x_{0})}^{\infty}L\mathrm{F}(e^{-z})\mathrm{d}z\label{eq:App_RVQ_11_2} \\
& \leq \int_0^{-\log(x_{\mathrm{ubub}})}\mathrm{d}z+\int_{-\log(x_{0})}^{\infty}L\mathrm{F}_{\mathrm{ubub}}(e^{-z})\mathrm{d}z.\label{eq:App_RVQ_11_3} 
\end{align}
Equation \eqref{eq:App_RVQ_11_2}  follows from $x_{\mathrm{ubub}}\leq x^*\leq x_0$ and \eqref{eq:App_RVQ_11_3} follows from the fact that~$\mathrm{F}_{\mathrm{ub}}(x)\leq \mathrm{F}_{\mathrm{ubub}}(x)$ for $x\in [0,x_0]$. We now replace $\mathrm{F}_{\mathrm{ubub}}(\bullet)$, $x_{\mathrm{ubub}}$, and~$x_0$ by their expressions to evaluate the integral.
\begin{align}
&\E_{\mathcal{C},\tilde{\bm{h}}}\left[-\log\left(\min_{\bm{c}\in\mathcal{C}}\sin^2(\angle(\tilde{\bm{h}},\bm{c}))\right)\right]\nonumber\\
&\leq -\frac{1}{K-1}\log\left(\frac{(1-x_0)^{1/2}}{Lc_{2K-1}}\right)+\frac{Lc_{2K-1}}{(1-x_0)^{1/2}}\int_{-\log(x_{0})}^{\infty}e^{-z(K-1)}
\mathrm{d}z\label{eq:App_RVQ_12_1}\\
&= -\frac{1}{K-1}\log\left(\frac{(1-(Lc_{2K-1})^{\frac{-1}{K-1}})^{1/2}}{Lc_{2K-1}}\right)+\frac{1}{(1-(Lc_{2K-1})^{\frac{-1}{K-1}})^{1/2}(K-1)}\label{eq:App_RVQ_12_2}\\
&= \frac{1}{K-1}\left(\log\left(Lc_{2K-1}\right)+1\right)+o(1)
\label{eq:App_RVQ_12_3}
\end{align} 
as $L$ increases. Dividing by $\log(2)$ yields the final upper bound.

\emph{Lower Bound:} We start from the lower bound for the CDF given in Proposition~\ref{App_CDF}. It has a form very similar to the CDF for the quantization of a complex vector in the unit-ball in $\mathbb{C}^{K}$ which is usually used for multiple-antenna BC. Hence, we adapt the approach of the proof of Lemma~$3$ by Jindal in \cite{Jindal2006} to the current setting. 

From the lower bound in Proposition~\ref{App_CDF}, we write
\begin{equation}
\mathrm{Pr}\{\min_{\bm{c}\in\mathcal{C}}\left(\sin^2(\angle(\tilde{\bm{h}},\bm{c}))\right)\leq z\}\geq1-(1-c_{2K-1} x^{(K-1)})^{L}.
\end{equation}
A lower bound for the expectation of the logarithm can then be calculated as follows.
\begin{align}
\E_{\mathcal{C},\tilde{\bm{h}}}\left[-\log\left(\min_{\bm{c}\in\mathcal{C}}\sin^2(\angle(\tilde{\bm{h}},\bm{c}))\right)\right]
&=\int_0^{\infty}\mathrm{Pr}\{\min_{\bm{c}\in\mathcal{C}}\sin^2(\angle(\tilde{\bm{h}},\bm{c}))\leq e^ {-z}\}\mathrm{d}z\label{eq:App_RVQ_13_1} \\
&\geq\int_0^{\infty}1-(1-c_{2K-1} e^{-z(K-1)})^{L}\mathrm{d}z\label{eq:App_RVQ_13_2} \\ 
&=\int_0^{\infty}1-\sum_{k=0}^{L}\binom{L}{k}(-1)^{k} c_{2K-1}^k e^{-z(K-1)k}\mathrm{d}z\label{eq:App_RVQ_13_3} \\
&=\frac{1}{K-1}\sum_{k=1}^{L}\binom{L}{k}(-1)^{k+1} \frac{c_{2K-1}^k}{k}\label{eq:App_RVQ_13_4} \\
&=\frac{1}{K-1}f(L)\label{eq:App_RVQ_13_5} 
\end{align}
where we have defined $f(p)\triangleq\sum_{k=1}^{p}\binom{p}{k}(-1)^{k+1} \frac{c_{2K-1}^k}{k}$ for $p\in \mathbb{N}$. To compute the value of $f(L)$, we will use the following relation given in \cite[Sec. $0.155$]{Gradshteyn2007}.
\begin{equation} 
\sum_{k=0}^n\binom{n}{k}\frac{\alpha^{k+1}}{k+1}=\frac{(\alpha+1)^{n+1}-1}{n+1}.
\label{eq:App_RVQ_14}
\end{equation}
We now rewrite $f(L)$ in order to be able to apply \eqref{eq:App_RVQ_14} 
\begin{align}
f(L)
&\triangleq\sum_{k=1}^{L}\binom{L}{k}(-1)^{k+1} \frac{c_{2K-1}^L}{L}\label{eq:App_RVQ_15_1}	 \\
&= (-1)^{L+1} \frac{c_{2K-1}^L}{L}+\sum_{k=1}^{L-1}\left[\binom{L-1}{k-1}+\binom{L-1}{k}\right](-1)^{k+1} \frac{c_{2K-1}^k}{k}\label{eq:App_RVQ_15_2}	 \\
&=\sum_{k=1}^{L}\binom{L-1}{k-1}(-1)^{L+1}\frac{c_{2K-1}^{k}}{k}+\sum_{k=1}^{L-1}\binom{L-1}{k}(-1)^{k+1} \frac{c_{2K-1}^k}{k}\label{eq:App_RVQ_15_3}	 \\
&=\sum_{k'=0}^{L-1}\binom{L-1}{k'}(-1)^{k'+2}\frac{c_{2K-1}^{k'+1}}{k'+1}+f(L-1)\label{eq:App_RVQ_15_4}	 \\
&=-\frac{(-c_{2K-1}+1)^{L}-1}{L}+f(L-1)\label{eq:App_RVQ_15_5}	 \\
&=\sum_{p=1}^{L}\frac{1-(-c_{2K-1}+1)^{p}}{p}\label{eq:App_RVQ_15_6}\\
&=\sum_{p=1}^{L}\frac{1}{p}-\sum_{p=1}^{L}\frac{1-(-c_{2K-1}+1)^{p}}{p}.\label{eq:App_RVQ_15_7}	 
\end{align}
Furthermore we have the two following relations: 
\begin{align}
&\log(L)\leq \sum_{p=1}^{L}\frac{1}{p}\leq \log(L)+1\label{eq:App_RVQ_16_1}\\
&\log(1-x)=-\sum_{L=1}^{\infty}\frac{x^L}{L}\text{ , for $x\in[-1,1]$}.
\label{eq:App_RVQ_16_2} 
\end{align}
Using these properties and dividing by $\log(2)$, we can obtain the final lower bound as 
\begin{align}
\E_{\mathcal{C},\tilde{\bm{h}}}\!\left[-\log\!\left(\min_{\bm{c}\in\mathcal{C}}\sin^2(\angle(\tilde{\bm{h}},\bm{c}))\right)\right]
&\!\geq \!\frac{1}{(K\!-\!1)\log(2)}\sum_{p=1}^{L}\frac{1}{p}-\frac{1}{(K\!-\!1)\log(2)}\sum_{p=1}^{L}\frac{(1\!-\!c_{2K-1})^{p}}{p}\label{eq:App_RVQ_17_1}\\
&\geq \frac{\log_2(L)}{(K-1)}-\frac{1}{(K-1)\log(2)}\sum_{p=1}^{\infty}\frac{(1-c_{2K-1})^{p}}{p}\label{eq:App_RVQ_17_2}\\
&= \frac{\log_2(L)+\log_2(c_{2K-1})}{(K-1)}\label{eq:App_RVQ_17_3}
\end{align} 
where we have used that the constant $c_{2K-1}$ is smaller than one to apply \eqref{eq:App_RVQ_16_2} and obtain the term~$\log_2(c_{2K-1})$.
\end{proof}

%%%%%%%%%%%%%%%%%%%%%%%%%%%%%%%%%%%%%%%%%%%%%%%%%%%%%%%%%%%
%%%%%%%%%%%%%%%%%%%%%%%%%%%%%%%%%%%%%%%%%%%%%%%%%%%%%%%%%%%
%%%%%%%%%%%%%%%%%%%%%%%%%%%%%%%%%%%%%%%%%%%%%%%%%%%%%%%%%%%
%%%%%%%%%%%%%%%%%%%%%%%%%%%%%%%%%%%%%%%%%%%%%%%%%%%%%%%%%%% 
\subsection{Proof of Theorem~\ref{thm_DoF_cZF}}\label{se:proof_thm_DoF_cZF}
%%%%%%%%%%%%%%%%%%%%%%%%%%%%%%%%%%%%%%%%%%%%%%%%%%%%%%%%%%%
%%%%%%%%%%%%%%%%%%%%%%%%%%%%%%%%%%%%%%%%%%%%%%%%%%%%%%%%%%%
%%%%%%%%%%%%%%%%%%%%%%%%%%%%%%%%%%%%%%%%%%%%%%%%%%%%%%%%%%%
%%%%%%%%%%%%%%%%%%%%%%%%%%%%%%%%%%%%%%%%%%%%%%%%%%%%%%%%%%%

The proof generalizes to the distributed CSI configuration the proof of Theorem~$4$ in Appendix IV of \cite{Jindal2006}, which derives the number of DoFs for the multiple-antenna BC with finite rate feedback. The generalization is non-trivial due to the fact that in the DCSI-MIMO channel it is not only the inner product between the beamformer and the channel~$\tilde{\bm{h}}_k^{\He}\bm{t}_i^{(j)}$ which matters, but also the coherency between the coefficients used at the different TXs. Following this difference, we do not use the conventional Grassmannian quantization scheme but we use instead the quantization scheme described in Subsection~\ref{se:system_model:CSI}. In a word, it consists in exploiting the fact that the norm is conserved by the canonical isomorphism between $\mathbb{C}^{K}$ and $\mathbb{R}^{2K}$, to use the Grassmannian quantization in the real subspace~$\mathbb{R}^{2K-1}$. The reduction to $2K-1$ real dimensions comes from the multiplication by a unit-norm complex number to let the first coefficient be real valued. We then define the angles between vectors in that real linear space. We refer to Appendix~\ref{se:Appendix_RVQ} for more detail.

The estimation error made at TX~$j$ about the channel vector~$\tilde{\bm{h}}_i$ is denoted by~$\bm{\delta}_i^{(j)}$ such that~$\bm{\delta}_i^{(j)}\triangleq \tilde{\bm{h}}_i-\tilde{\bm{h}}_i^{(j)}$ and the estimation error vectors made at TX~$j$ are stacked in the estimation error matrix~$\bm{\Delta}^{(j)}$ defined as
\begin{equation}
\bm{\Delta}^{(j)}\triangleq 
\begin{bmatrix}
(\bm{\delta}_1^{(j)})^{\He}\\
(\bm{\delta}_2^{(j)})^{\He}\\
\vdots\\
(\bm{\delta}_K^{(j)})^{\He}
\end{bmatrix}.
\label{eq:lemma_1}
\end{equation}
We also denote by $\bm{u}_i^{(j)}\triangleq \bm{t}_i^{(j)}/\norm{\bm{t}_i^{(j)}}$ the (Conventional ZF) unit-norm beamformer computed at TX~$j$ and by $\bm{u}_i^{*}\triangleq \bm{t}_i^{*}/\norm{\bm{t}_i^{*}}$ the same beamformer based on perfect CSI. We omit in this proof the superscript~$\bullet^{c\mathrm{ZF}}$ for clarity. 

Furthermore, we consider in the following that the accuracy of the channel estimates increases with the SNR, i.e., the CSI scaling coefficients~$\alpha_i^{(j)}$ are all positive. If there is one pair of indices~$(i,j)$ for which~$\alpha_i^{(j)}=0$, then the Euclidean distance between~$\bm{u}^{(j)}_k$ and~$\bm{u}^{*}_k$ does not decrease with~$P$ for all~$k$ such that the number of DoFs at all RXs vanishes. When this is not the case, the norm of the channel estimation errors can be approximated as
\begin{align}
\|\bm{\delta}_i^{(j)}\|^2&=\norm{\tilde{\bm{h}}_i^{(j)}-\tilde{\bm{h}}_i}^2\\
&=2-2(\tilde{\bm{h}}_i^{(j)})^{\He}\tilde{\bm{h}}_i\label{eq:lemma_2_1}\\
&=2-2|(\tilde{\bm{h}}_i^{(j)})^{\He}\tilde{\bm{h}}_i|\label{eq:lemma_2_2}\\
&=2-2\sqrt{1-\sin^2(\angle(\tilde{\bm{h}}_i^{(j)},\tilde{\bm{h}}_i))}\label{eq:lemma_2_3}\\
&= \sin^2(\angle(\tilde{\bm{h}}_i^{(j)},\tilde{\bm{h}}_i))+o(\sin^2(\angle(\tilde{\bm{h}}_i^{(j)},\tilde{\bm{h}}_i)))\label{eq:lemma_2_4}
\end{align} 
where \eqref{eq:lemma_2_2} is verified when the channel estimate belongs to the same half-space as the true channel vector. This holds true in this work for the reason explained in Appendix~\ref{se:Appendix_RVQ}. Equality~\eqref{eq:lemma_2_3} follows from the definition of the angle between two vectors and \eqref{eq:lemma_2_4} is obtained via a Taylor expansion on the first order in the estimation error. 

From~\eqref{eq:lemma_1}, we conclude that the square norm of the estimation error~$\|\bm{\delta}_i^{(j)}\|^2$ is asymptotically equal to the chordal distance between the channel estimate and the true channel~$\sin^2(\angle(\bm{h}_i^{(j)},\bm{h}_i))$ when the SNR increases. The chordal distance corresponds to the distance minimized by the Grassmannian quantization so that this will allow us to apply the theoretical results provided in Appendix~\ref{se:Appendix_RVQ}. As a preliminary step, we will now evaluate the impact of the estimation error into the computation of the beamformers, i.e., evaluate the norm of the vector~$\bm{u}_i^{(j)}-\bm{u}_i^*$ for all~$j$. 

%%%%%%%%%%%%%%%%%%%%%%%%
%%%%%%%%%%%%%%%%%%%%%%%%
%%%%%%%%%%%%%%%%%%%%%%%%
%%%%%%%%%%%%%%%%%%%%%%%%
\begin{lemma} 
Let's assume that~$\forall i, \alpha_i^{(j)}>0$, then it holds asymptotically as~$P$ increases
\begin{equation}
\E\left[\log_2\left\|\bm{u}_{i}^{(j)}-\bm{u}_{i}^{*}\right\|^2\right]=\E\left[\log_2\left(\max_{i=1,\ldots,K,}\left(\sin^2(\angle(\tilde{\bm{h}}_i^{(j)},\tilde{\bm{h}}_i))\right)\right)\right]+O(1).
\end{equation}
\label{lemma_MSE}
\end{lemma}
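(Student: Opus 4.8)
The plan is to carry out a first-order perturbation analysis of the conventional ZF beamformer $\bm{u}_i^{(j)}$ around its perfect-CSI counterpart $\bm{u}_i^*$, to write the deviation as a linear functional of the estimation errors $\{\bm{\delta}_\ell^{(j)}\}_{\ell=1}^K$, and then to sandwich $\|\bm{u}_i^{(j)}-\bm{u}_i^*\|^2$ between constant multiples of $\max_{\ell}\|\bm{\delta}_\ell^{(j)}\|^2$, the two multiplicative constants depending on the channel realization but having logarithms of bounded expectation. Throughout I fix the TX index $j$, writing $\bm{u}_i,\bm{u}_i^*$ for $\bm{u}_i^{(j)},\bm{u}_i^*$ and $\bm{\delta}_\ell$ for $\bm{\delta}_\ell^{(j)}$. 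Recall that $\bm{u}_i^*$ spans the one-dimensional space orthogonal to $\{\tilde{\bm{h}}_\ell\}_{\ell\neq i}$, its phase pinned down by $\tilde{\bm{h}}_i^{\He}\bm{u}_i^*>0$, while $\bm{u}_i$ is defined identically from the estimates $\{\tilde{\bm{h}}_\ell^{(j)}\}$ with phase pinned by $(\tilde{\bm{h}}_i^{(j)})^{\He}\bm{u}_i>0$; this is precisely the phase convention induced by the projection normalization in \eqref{eq:def_cZF}.

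Writing $\bm{e}\triangleq\bm{u}_i-\bm{u}_i^*$, I would first extract the leading-order linear relations. Inserting $\tilde{\bm{h}}_\ell^{(j)}=\tilde{\bm{h}}_\ell-\bm{\delta}_\ell$ into the orthogonality conditions $(\tilde{\bm{h}}_\ell^{(j)})^{\He}\bm{u}_i=0$ for $\ell\neq i$ and using $\tilde{\bm{h}}_\ell^{\He}\bm{u}_i^*=0$ gives
\begin{equation}
\tilde{\bm{h}}_\ell^{\He}\bm{e}=\bm{\delta}_\ell^{\He}\bm{u}_i^*+O(\|\bm{\delta}_\ell\|\,\|\bm{e}\|),\qquad \ell\neq i.
\end{equation}
The norm constraint $\|\bm{u}_i\|=\|\bm{u}_i^*\|=1$ forces $\mathrm{Re}(\bm{u}_i^{*\He}\bm{e})=O(\|\bm{e}\|^2)$, while the direct-channel phase convention fixes the first-order value of $\mathrm{Im}(\bm{u}_i^{*\He}\bm{e})$, which a short computation shows to be, to leading order, a nonzero multiple of $\bm{u}_i^{*\He}\bm{\delta}_i$ divided by $\tilde{\bm{h}}_i^{\He}\bm{u}_i^*$, corrected by the rotation of the nulling subspace. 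Since $\bm{u}_i^*\perp\{\tilde{\bm{h}}_\ell\}_{\ell\neq i}$, this decomposes $\bm{e}$, up to $O(\|\bm{e}\|^2)$, into a \emph{subspace part} $\bm{p}\in\Span\{\tilde{\bm{h}}_\ell\}_{\ell\neq i}$ driven by $\{\bm{\delta}_\ell\}_{\ell\neq i}$ and an orthogonal \emph{phase part} along $\bm{u}_i^*$ driven by $\bm{\delta}_i$, so that by orthogonality
\begin{equation}
\|\bm{e}\|^2=\|\bm{p}\|^2+|\mathrm{Im}(\bm{u}_i^{*\He}\bm{e})|^2+O(\|\bm{e}\|^4).
\end{equation}

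For the \textbf{upper bound} I would use the triangle inequality: $\|\bm{p}\|$ is controlled by $\sigma_{\min}(\tilde{\mathbf{H}})^{-1}\,(\sum_{\ell\neq i}|\bm{\delta}_\ell^{\He}\bm{u}_i^*|^2)^{1/2}$ and the phase part by $|\bm{\delta}_i^{\He}\bm{u}_i^*|/|\tilde{\bm{h}}_i^{\He}\bm{u}_i^*|$, both $\leq C(\mathbf{H})\max_{\ell}\|\bm{\delta}_\ell\|$. For the \textbf{lower bound} I would keep a single coordinate: for each $\ell\neq i$, $\|\bm{e}\|\geq|\tilde{\bm{h}}_\ell^{\He}\bm{e}|=|\bm{\delta}_\ell^{\He}\bm{u}_i^*|(1+o(1))$, which handles all off-diagonal errors, while the diagonal error $\bm{\delta}_i$ enters only through $\mathrm{Im}(\bm{u}_i^{*\He}\bm{e})$. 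Its contribution there can be cancelled only by the subspace-rotation contribution, which is itself $\gtrsim\max_{\ell\neq i}\|\bm{\delta}_\ell\|$; a short case distinction therefore yields $\|\bm{e}\|^2\geq c(\mathbf{H})\max_{\ell}|\bm{\delta}_\ell^{\He}\bm{u}_i^*|^2$. Together these give, pointwise up to the channel-dependent factors, $\log_2\|\bm{e}\|^2=\log_2\big(\max_\ell|\bm{\delta}_\ell^{\He}\bm{u}_i^*|^2\big)+O(1)$.

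It remains to take expectations and to pass from $|\bm{\delta}_\ell^{\He}\bm{u}_i^*|^2$ to $\|\bm{\delta}_\ell\|^2$ and then to $\sin^2(\angle(\tilde{\bm{h}}_\ell^{(j)},\tilde{\bm{h}}_\ell))$. The step I expect to be the main obstacle is showing that the channel-dependent factors contribute only $O(1)$ \emph{after} the logarithm and the expectation; this is exactly where the Rayleigh structure is essential. Concretely I would argue that $\E[\log_2\sigma_{\min}(\tilde{\mathbf{H}})]$ and $\E[\log_2|\tilde{\bm{h}}_i^{\He}\bm{u}_i^*|]$ are finite (the channel is almost surely non-singular and its near-degeneracy has polynomially small probability under the continuous isotropic law, so the logarithm of the conditioning is integrable), and that $\E[\log_2(|\bm{\delta}_\ell^{\He}\bm{u}_i^*|^2/\|\bm{\delta}_\ell\|^2)]=O(1)$, which holds because, conditionally on its magnitude, the error direction $\bm{\delta}_\ell/\|\bm{\delta}_\ell\|$ is isotropic in the complement of $\tilde{\bm{h}}_\ell$ and essentially independent of $\bm{u}_i^*$, so the normalized projection behaves like a squared cosine whose density stays $\log$-integrable near the origin. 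Combining these integrability facts with the pointwise sandwich gives $\E[\log_2\|\bm{e}\|^2]=\E[\log_2(\max_\ell\|\bm{\delta}_\ell\|^2)]+O(1)$; finally, since \eqref{eq:lemma_2_4} gives $\|\bm{\delta}_\ell\|^2=\sin^2(\angle(\tilde{\bm{h}}_\ell^{(j)},\tilde{\bm{h}}_\ell))(1+o(1))$ uniformly as $P\to\infty$, the maximum and the logarithm change by only an additive $o(1)$, establishing the claim.
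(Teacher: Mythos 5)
Your proposal is correct in outline but takes a genuinely different route from the paper's. The paper never perturbs the normalized beamformer directly: it works with the columns of the channel inverse, expanding $\mathbf{H}^{-1}-(\tilde{\mathbf{H}}^{(j)})^{-1}=-\mathbf{H}^{-1}\bm{\Delta}^{(j)}\mathbf{H}^{-1}+o(\norm{\bm{\Delta}^{(j)}}_{\Fro})$, obtaining the upper bound from the Frobenius-norm chain $\|(\mathbf{H}^{-1}-(\tilde{\mathbf{H}}^{(j)})^{-1})\bm{e}_i\|^2\leq\|\mathbf{H}^{-1}\|_{\Fro}^4\|\bm{\Delta}^{(j)}\|_{\Fro}^2+o(\cdot)$ together with the known integrability of $\log\lambda_{\min}(\mathbf{H})$ for Rayleigh matrices \cite{Edelman1989}, and obtaining the lower bound by factoring $\bm{\Delta}^{(j)}=\bar{\bm{\Delta}}^{(j)}\diag([\norm{\bm{\delta}_1^{(j)}},\ldots,\norm{\bm{\delta}_K^{(j)}}])$, pulling out $\E[\log\norm{\bm{\delta}_1^{(j)}}^2]$ for the asymptotically dominant error, and asserting that the logarithm of the residual normalized term in \eqref{eq:lemma_6} is integrable, hence $O(1)$. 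Your decomposition of $\bm{e}=\bm{u}_i^{(j)}-\bm{u}_i^*$ into a subspace part driven by $\{\bm{\delta}_\ell\}_{\ell\neq i}$ and a phase part along $\bm{u}_i^*$ driven by $\bm{\delta}_i$ is finer-grained and buys a real insight the paper leaves implicit: the direct-channel estimate enters conventional ZF only through the complex phase of the beamformer, which is precisely why the $L_2$-based quantization of Appendix~\ref{se:Appendix_RVQ} matters in the DCSI setting, whereas the paper's inverse-column convention absorbs this silently. What the paper's route buys is compactness: two matrix-norm bounds and one spectral-integrability citation replace your coordinatewise analysis. One caution on your lower bound: the ``short case distinction'' for cancellation in $\mathrm{Im}(\bm{u}_i^{*\He}\bm{e})$ does not close pointwise --- when the subspace-rotation term cancels the $\bm{\delta}_i$ term you can only conclude that $\max_{\ell\neq i}\norm{\bm{\delta}_\ell}$ is large, while your coordinate bounds control $\norm{\bm{e}}$ only through the projections $|\bm{\delta}_\ell^{\He}\bm{u}_i^*|$, which may be small at that same realization; the gap is bridged only in expectation, via the log-integrability of the isotropic squared cosines that you invoke in your final paragraph. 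This is an acceptable level of rigor here, since the paper's own lower bound rests on an analogous integrability assertion for its residual term, but you should state explicitly that the case distinction holds after taking $\E[\log_2(\cdot)]$ rather than realization by realization.
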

%%%%%%%%%%%%%%%%%%%%%%%%
%%%%%%%%%%%%%%%%%%%%%%%%
%%%%%%%%%%%%%%%%%%%%%%%%
%%%%%%%%%%%%%%%%%%%%%%%%
\begin{proof}
We consider w.l.o.g. the precoding at TX~$j$. Since $\forall i, \alpha_i^{(j)}>0$, the estimation error is infinitely small as~$P$ increases and we can do a first order approximation of the channel inverse and write
\begin{equation}
\mathbf{H}^{-1}-(\tilde{\mathbf{H}}^{(j)})^{-1}=-\mathbf{H}^{-1}\bm{\Delta}^{(j)}\mathbf{H}^{-1}+o(\norm{\bm{\Delta}^{(j)}}_{\Fro}).
\label{eq:lemma_3}
\end{equation}

\textbf{Derivation of the Upper Bound: } After multiplying by~$\bm{e}_j$ to obtain the $j$-th beamformer, the Right Hand-Side (RHS) of \eqref{eq:lemma_3} can then be upper bounded as follows 
\begin{align}
\|(\mathbf{H}^{-1}-(\tilde{\mathbf{H}}^{(j)})^{-1})\bm{e}_i\|^2
&\leq \|\mathbf{H}^{-1}\bm{\Delta}^{(j)}\mathbf{H}^{-1}\|_{\Fro}^2+o(\norm{\bm{\Delta}^{(j)}}^2_{\Fro})\label{eq:lemma_4_1}\\
&\leq \|\mathbf{H}^{-1}\|_{\Fro}^4\|\bm{\Delta}^{(j)}\|_{\Fro}^2+o(\norm{\bm{\Delta}^{(j)}}^2_{\Fro})\label{eq:lemma_4_2}\\
&\leq K^2\lambda_{\min}^2(\mathbf{H})(\sum_{k=1}^K \|\bm{\delta}^{(j)}_k\|^2)+o(\norm{\bm{\Delta}^{(j)}}^2_{\Fro})\label{eq:lemma_4_3}
\end{align} 
with~$\lambda_{\min}^2(\mathbf{H})$ denoting the smallest eigenvalue of the channel matrix~$\mathbf{H}$. We then take the expectation of the logarithm of this term according to both the channel estimation error and the channel distribution. The term~$\log(\lambda_{\min}^2(\mathbf{H}))$ is shown to be integrable and its expectation is given in \cite{Edelman1989}. The result follows by upper-bounding each of the estimation errors~$\|\bm{\delta}^{(j)}_k\|^2$ by the error which is asymptotically the largest, i.e., the one corresponding to the smallest~$\alpha_i^{(j)}$.

\textbf{Derivation of the Lower Bound: }we start by factorizing the estimation error matrix as follows
\begin{equation}
\bm{\Delta}^{(j)}=\bar{\bm{\Delta}}^{(j)}\diag([\norm{\bm{\delta}_1^{(j)}},\norm{\bm{\delta}_2^{(j)}},\ldots,\norm{\bm{\delta}_K^{(j)}}])
\label{eq:lemma_5}
\end{equation}
with the columns of~$\bar{\bm{\Delta}}^{(j)}$ consequently normalized to be unit-norm. We then assume w.l.o.g. that the asymptotic largest estimation error corresponds to the channel~$\tilde{\bm{h}}_1$ (i.e., the smallest CSI scaling coefficient is~$\alpha_1^{(j)}$). Furthermore, we consider for the sake of exposition that no other channel has the same CSI scaling coefficient. The proof holds similarly if this condition does not hold. We can then write
\begin{equation}
\begin{aligned}
&\E[\log(\|(\mathbf{H}^{-1}-(\tilde{\mathbf{H}}^{(j)})^{-1})\bm{e}_i\|^2)]=\E[\log(\norm{\bm{\delta}_1^{(j)}}^2)]+2\E[\log(\|\mathbf{H}^{-1}\|_{\Fro}^2)]\\
&+\E[\log(\|(\bar{\mathbf{H}}^{-1}\bar{\bm{\Delta}}^{(j)}\diag([1,\norm{\bm{\delta}_2^{(j)}}/\norm{\bm{\delta}_1^{(j)}},\ldots,\norm{\bm{\delta}_K^{(j)}}/\norm{\bm{\delta}_1^{(j)}}])\bar{\mathbf{H}}^{-1}\bm{e}_i)\|^2)]+o(\E[\log(\norm{\bm{\Delta}^{(j)}}^2_{\Fro})])
\end{aligned}
\label{eq:lemma_6}
\end{equation}
where we have defined~$\bar{\mathbf{H}}^{-1}\triangleq \mathbf{H}^{-1}/\norm{\mathbf{H}^{-1}}_{\Fro}$. The absolute value~$|\log(\|\mathbf{H}^{-1}\|_{\Fro}^2)|$ can be upper bounded as in \eqref{eq:lemma_4_3} by~$|\log(K\lambda_{\min}(\mathbf{H}))|$ whose expectation is shown to exist in \cite{Edelman1989}, thus its expectation also exists. Similarly, the absolute value of the last term of the RHS in~\eqref{eq:lemma_6} can be upper-bounded by an integrable function such that it is also integrable and its expectation is then a $O(1)$, i.e., it remains bounded as the SNR~$P$ increases. This concludes the proof.
\end{proof}
%%%%%%%%%%%%%%%%%%%%%%%%
%%%%%%%%%%%%%%%%%%%%%%%%

\newtheorem*{proof_thm}{Proof of Theorem~\ref{thm_DoF_cZF}}
\begin{proof_thm}
We will now use Lemma~\ref{lemma_MSE} to prove the theorem. We consider for simplicity that the CSI scaling coefficients are all different. The proof easily extends to the configurations with some coefficients equal and this is done solely to simplify the exposition. We assume w.l.o.g. that the TX with the smallest CSI scaling coefficient is TX~$1$.

%\footnote{Note also that the vector $\bm{u}_i$ is not exactly unitary due to the lack of coordination between the TXs. Indeed TX~$1$ normalizes its coefficient by $\norm{\bm{u}_2^{(1)}}$, which is a priori not equal to $\norm{\bm{u}_2^{(2)}}$. However, we consider only a number of feedback bits scaling with $\log_2(P)$ since otherwise the number of DoFs in a single TX configuration is zero and thus also in a distributed CSI configuration. Hence, the accuracy of the normalization improves with $\log_2(P)$ and the power constraint is fulfilled with an accuracy increasing in the SNR which is sufficiently good for practical systems.} 

%%%%%%%%%%%%%%%%%%%%%%%%
%%%%%%%%%%%%%%%%%%%%%%%%
%%%%%%%%%%%%%%%%%%%%%%%%
%%%%%%%%%%%%%%%%%%%%%%%%
\textbf{DoF Lower Bound :}
%%%%%%%%%%%%%%%%%%%%%%%%
%%%%%%%%%%%%%%%%%%%%%%%%
%%%%%%%%%%%%%%%%%%%%%%%%
%%%%%%%%%%%%%%%%%%%%%%%%
We denote by~$\bm{u}_i\in \mathbb{C}^{K\times 1}$ the beamforming vector\footnote{The vector~$\bm{u}_i$ corresponds to the normalized version of~$\bm{t}_i$. Yet, it is exactly unit-norm only when all the TXs share the \emph{same} channel estimate. It is otherwise impossible for the TXs to jointly normalize the beamformer based on different channel estimates. This does not represent a problem in practice because the power constraint is exactly fulfilled in average over the channel estimation errors.} such that
\begin{equation}
\forall j\in \{1,\ldots,K\},\quad \{\bm{u}_i\}_j=\{\bm{u}^{(j)}_i\}_j=\frac{\{\bm{t}^{(j)}_i\}_j}{P/K}.
\end{equation}

We start from the number of DoFs expression in~\eqref{eq:SM_6} that we rewrite as 
\begin{align}
\DoF_i^{\mathrm{cZF}}\!&=\!1-\lim_{P\rightarrow \infty}\E_{\mathbf{H},\{\mathcal{W}_{i,j}\}}\!\left[\frac{\log_2\left(1+\tfrac{P}{K}\sum_{k\neq i}\norm{\bm{h}_i}^2|\tilde{\bm{h}}_i^{\He}\bm{u}_k|^2\right)}{\log_2(P)}\right]\!\!\label{eq:proof_1_1} \\
&=\! -\lim_{P\rightarrow \infty}\E_{\mathbf{H},\{\mathcal{W}_{i,j}\}}\!\left[\frac{\log_2\left(\sum_{k\neq i}|\tilde{\bm{h}}_i^{\He}\bm{u}_k|^2\right)}{\log_2(P)}\right].\label{eq:proof_1_2} 
\end{align}

To obtain a lower bound for the number of DoFs, we need to derive an upper bound for the leaked interference in~\eqref{eq:proof_1_2}. We define first the selection matrices~$\mathbf{E}_i=\diag(\bm{e}_i)$ and write
\begin{align}
\left|\tilde{\bm{h}}_i^{\He}\bm{u}_k\right|&=\left|\tilde{\bm{h}}_i^{\He}(\bm{u}_k^{*}+\sum_{j=1}^K\mathbf{E}_j(\bm{u}_{k}^{(j)}-\bm{u}_{k}^{*})\right|\label{eq:proof_2_1} \\
&\leq \sum_{j=1}^K\left\|\bm{u}_{k}^{(j)}-\bm{u}_{k}^{*}\right\|\label{eq:proof_2_2} 
\end{align}
which we insert in \eqref{eq:proof_1_2} to obtain 
\begin{align}
\DoF_i^{\mathrm{cZF}}\! &\geq-\lim_{P\rightarrow \infty}\E_{\mathbf{H},\{\mathcal{W}_{i,j}\}}\!\left[\frac{\log_2\left(\sum_{k\neq i}\sum_{j=1}^K\left\|\bm{u}_{k}^{(j)}-\bm{u}_{k}^{*}\right\|^2\right)}{\log_2(P)}\right] \label{eq:proof_3_1} \\
 &\geq-\lim_{P\rightarrow \infty}\E_{\mathbf{H},\{\mathcal{W}_{i,j}\}}\!\left[\frac{\log_2\left(\sum_{k\neq i}K \max_j(\|\bm{u}_{k}^{(j)}-\bm{u}_{k}^{*}\|^2)\right)}{\log_2(P)}\right].\label{eq:proof_3_2} 
\end{align}
From \eqref{eq:lemma_3}, the TX whose computed beamformer exhibits the largest mean square error~$\|\bm{u}_{k}^{(j)}-\bm{u}_{k}^{*}\|^2$ at arbitrarily large SNR~$P$ is the TX to whom the lowest CSI scaling coefficient belongs, which is by assumption TX~$1$. We can then write 
\begin{align}
\DoF_i^{\mathrm{cZF}}\! &\geq\lim_{P\rightarrow \infty}\frac{\E_{\mathbf{H},\{\mathcal{W}_{i,j}\}}\!\left[-\log_2\left(\sum_{k\neq i} \|\bm{u}_{k}^{(1)}-\bm{u}_{k}^{*}\|^2\right)\right]}{\log_2(P)}\label{eq:proof_4_1}  \\
&\geq\lim_{P\rightarrow \infty}\frac{\E_{\mathbf{H},\{\mathcal{W}_{i,j}\}}\left[-\log_2\left(\max_{i=1,\ldots,K,}\left(\sin^2(\angle(\tilde{\bm{h}}_i^{(1)},\tilde{\bm{h}}_i))\right)\right)\right]}{\log_2(P)} \label{eq:proof_4_2} \\
&\geq\lim_{P\rightarrow \infty}\frac{\min_{i=1,\ldots,K} B_i^{(1)}+\log_2(c_{2K-1})+\log_2(e)}{(K-1)\log_2(P)} \label{eq:proof_4_3} \\
&=\min_{i=1,\ldots,K} \alpha_i^{(1)}\label{eq:proof_4_4} 
\end{align}
where \eqref{eq:proof_4_1} is obtained by permuting the expectation and the limit, \eqref{eq:proof_4_2} follows from Lemma~\ref{lemma_MSE} and we have used Proposition~\ref{App_log_distorsion} to obtain inequality \eqref{eq:proof_4_3}. The last equation \eqref{eq:proof_4_4} corresponds to the smallest CSI scaling coefficient and provides the lower bound.

%%%%%%%%%%%%%%%%%%%%%%%%
%%%%%%%%%%%%%%%%%%%%%%%%
%%%%%%%%%%%%%%%%%%%%%%%%
%%%%%%%%%%%%%%%%%%%%%%%%
\textbf{DoF Upper Bound: }
%%%%%%%%%%%%%%%%%%%%%%%%
%%%%%%%%%%%%%%%%%%%%%%%%
%%%%%%%%%%%%%%%%%%%%%%%%
%%%%%%%%%%%%%%%%%%%%%%%%
We now derive an upper bound for the number of DoFs, which means a lower bound for the interference. We proceed similarly to \eqref{eq:proof_2_1} but this time to obtain a lower bound for the interference remaining after precoding:
\begin{align}
\left|\tilde{\bm{h}}_i^{\He}\bm{u}_k\right|&=\left|\tilde{\bm{h}}_i^{\He}\bm{a}_k\right|\|\sum_{j=1}^K\mathbf{E}_j(\bm{u}_{k}^{(j)}-\bm{u}_{k}^{*})\|\label{eq:proof_5_1} \\
&\geq \left|\tilde{\bm{h}}_i^{\He}\bm{a}_k\right|\|\mathbf{E}_1(\bm{u}_{k}^{(1)}-\bm{u}_{k}^{*})\|\label{eq:proof_5_2} \\
&=\left|\tilde{\bm{h}}_i^{\He}\bm{a}_k\right||\bm{e}_1^{\He}\bm{b}_k^{(1)}|\|\bm{u}_{k}^{(1)}-\bm{u}_{k}^{*}\|\label{eq:proof_5_3} 
\end{align}
where we have defined~$\bm{a}_k\triangleq \left(\sum_{j=1}^K\mathbf{E}_j(\bm{u}_{k}^{(j)}-\bm{u}_{k}^{*})\right)/\norm{\sum_{j=1}^K\mathbf{E}_j(\bm{u}_{k}^{(j)}-\bm{u}_{k}^{*})}$ and $\bm{b}_k^{(1)}\triangleq (\bm{u}_{k}^{(1)}-\bm{u}_{k}^{*})/\|\bm{u}_{k}^{(1)}-\bm{u}_{k}^{*}\|$. The two vectors forming each of the two inner products in \eqref{eq:proof_5_3} are isotropically distributed so that the expectation of their logarithm is finite. Inserting \eqref{eq:proof_5_3} inside the number of DoFs formula in \eqref{eq:proof_1_2}, we can write the lower bound  
\begin{align}
\DoF_i^{\mathrm{cZF}}\!&=\! \frac{\lim_{P\rightarrow \infty}\E_{\mathbf{H},\{\mathcal{W}_{i,j}\}}\!\left[-\log_2\left(\sum_{k\neq i}|\tilde{\bm{h}}_i^{\He}\bm{u}_k|^2\right)\right]}{\log_2(P)} \label{eq:App_two_12_1}\\
&\geq \! \frac{\lim_{P\rightarrow \infty}\E_{\mathbf{H},\{\mathcal{W}_{i,j}\}}\!\left[-\log_2\left(\|\bm{u}_{1}^{(1)}-\bm{u}_{1}^{*}\|^2\right)\right]}{\log_2(P)} \label{eq:App_two_12_2}\\
& \geq \frac{\lim_{P\rightarrow \infty}\E\left[-\log_2\left(\max_{i=1,\ldots,K,}\left(\sin^2(\angle(\tilde{\bm{h}}_i^{(1)},\tilde{\bm{h}}_i))\right)\right)\right]}{\log_2(P)} \label{eq:App_two_12_3} 
\end{align}
with inequality~\eqref{eq:App_two_12_3} obtained from Lemma~\ref{lemma_MSE}. The proof concludes in the same way as the proof of the upper bound after using Proposition~\ref{App_log_distorsion}.
\end{proof_thm}

\subsection{Numerical Values for the Total Feedback Scaling $\gamma$}\label{se:App_DoF_sharing}
\begin{center}
\begin{tabular}{|c|c|c|}
\hline
Number of transmitting TXs:&Saturation of the number of DoFs:&Activation of the $(n+1)$-th TX: :\\
$n$&$n^2(n-1)$&$n^2(n+1)$\\ 
\hline 
1& 0&2 \\
2& 4&12 \\
3& 18&36 \\
4& 48&80 \\
5& 100&150 \\\hline 
\end{tabular}
\end{center}

\subsection{CSI Scaling Matrix Used in the Simulations}\label{se:App_CSI_matrix}

For Fig.~\ref{Rate_Chosen_conf_1}, the CSI scaling matrix arbitrarily chosen is
\begin{equation}
\bm{\alpha}=\begin{bmatrix} 
    0  &  1 &  1 &   1 &  1  &  1&  1\\
    1  &  1 &  1 &   1 &  1  &  1&  1\\
    1  &  1 &  1 &   1 &  1  &  1&  1\\
    1  &  1 &  1 &   1 &  1  &  1&  1\\
    1  &  1 &  1 &   1 &  1  &  0.3&  1\\
    1  &  1 &  1 &   1 &  1  &  1&  1\\
    1  &  1 &  1 &   1 &  1  &  1&  1
    \end{bmatrix}.
\end{equation}
\\
The number of DoFs achieved with the different precoding schemes read as follows:
\\
\begin{center}
\begin{tabular}{|c|c|}
\hline
Precoding Scheme&Number of DoFs\\\hline
Conventional ZF&0\\
Active-Passive ZF&2.1\\
Conventional ZF with HQ&5.3\\
Active-Passive ZF with HQ&6.3\\
\hline 
\end{tabular}
\end{center}
%\vspace{0.5cm}
%For Fig.~\ref{Rate_Random_conf_1}, the CSI scaling matrix randomly generated is:
%\begin{equation}
%\alpha=\begin{bmatrix}  
%
    %2.2000  &  2.6000 &   2.3000  &  0.3000 &   0.8000 &   3.0000 &   2.9000\\
    %2.6000  &  2.4000 &   1.8000  &  0.9000 &   2.6000 &   0.2000 &   1.5000\\
    %1.4000  &  0.1000 &   1.1000  &  2.2000 &   1.5000 &   2.9000 &   1.5000\\
    %0.9000  &  0.8000 &   2.0000  &  2.8000 &   1.6000 &   2.8000 &   1.3000\\
    %1.2000  &  0.6000 &   0.1000  &  1.2000 &   1.7000 &   2.4000 &   0.7000\\
    %2.7000  &  0.8000 &   2.6000  &       0 &   2.0000 &   0.3000 &   0.5000\\
    %0.3000  &  0.1000 &   1.7000  &  1.3000 &   2.9000 &   0.1000 &  2.3000\\
        %\end{bmatrix}.
%\end{equation}
%\\
%The DoFs achieved with the different precoding schemes read as follows:
%\\
%\begin{center}
%\begin{tabular}{|c|c|}
%\hline
%Precoding Scheme&Multiplexing Gain\\\hline
%Conventional ZF&0\\
%Active-Passive ZF&0.7\\
%Conventional ZF with HQ&1.6\\
%Active-Passive ZF with HQ&2.6\\
%\hline 
%\end{tabular}
%\end{center}
%\subsection{Proof of Lemma~\ref{lemma3}}\label{se:proof_lemma_3}
\bibliographystyle{IEEEtran}
\bibliography{Literatur}
\end{document}